\pdfminorversion=7
\begin{filecontents*}[overwrite]{\jobname.xmpdata}
\Title{On the Power and Limitations of Examples for Description Logic Concepts}
\Author{Balder ten Cate, Raoul Koudijs, Ana Ozaki}
\Language{en-GB}
\Subject{On the Power and Limitations of Examples for Description Logic Concepts}
\Keywords{examples\sep description logic\sep knowledge representation}
\end{filecontents*}
\documentclass{article}
\pdfpagewidth=8.5in
\pdfpageheight=11in
\usepackage{ijcai24}
\usepackage{times}
\usepackage{soul}
\usepackage{url}
\usepackage[utf8]{inputenc}
\usepackage{colorprofiles}
\usepackage[a-1b]{pdfx}
\hypersetup{pdfstartview=}


\usepackage[small]{caption}
\usepackage{graphicx}
\usepackage{amsmath}
\usepackage{amsthm}
\usepackage{booktabs}
\usepackage{algorithm}
\usepackage{algorithmic}
\usepackage[switch]{lineno}
\usepackage{thmtools}
\usepackage{thm-restate}
\usepackage{tikz-cd}
\usepackage{fancybox}
\usepackage{microtype}
\usepackage{tikz}
\usepackage{tikz-cd}
\usetikzlibrary{shapes.geometric}
\tikzset{
itria/.style={
  draw,shape border uses incircle,
  isosceles triangle,shape border rotate=90, yshift = -1.6cm, xshift = 0.7cm },
rtria/.style={
  draw,shape border uses incircle, minimum width = 1cm,
  minimum height = 1cm,
  isosceles triangle,isosceles triangle apex angle=130,
  shape border rotate=-65,yshift=-0.098cm,xshift=0.9346cm},
ritria/.style={
  draw,shape border uses incircle,
  isosceles triangle,isosceles triangle apex angle=110,
  shape border rotate=-55,yshift=0.1cm},
letria/.style={
  draw,dashed,shape border uses incircle,
  isosceles triangle,isosceles triangle apex angle=110,
  shape border rotate=235,yshift=0.1cm}
  
}

 \newcommand{\revision}[1]{#1}


\urlstyle{same}

\pdfinfo{
/TemplateVersion (IJCAI.2024.0)
}


\usepackage{mathtools}
\usepackage{amsmath,amsfonts, amsthm}
\usepackage{amssymb}
\usepackage{todonotes}
\usepackage{xspace}

\usepackage{tikz, tikz-qtree}

\usepackage{stmaryrd}
\newtheorem{theorem}{Theorem}[section]
\newtheorem{corollary}[theorem]{Corollary}
\newtheorem{lemma}[theorem]{Lemma}
\newtheorem{proposition}[theorem]{Proposition}
\newtheorem{definition}[theorem]{Definition}
\newtheorem{claim}
{Claim}

\newtheorem{exampl}{Example}[section]
\newenvironment{example}{\begin{exampl}\em}{\end{exampl}}

\newtheorem{rem}[theorem]{Remark}
\newenvironment{remark}{\begin{rem}\em}{\end{rem}}

\newcommand{\dlr}[1]{DL-Lite$^{#1}$}
\newcommand{\langfull}{\lang{\forall,\exists,\geq,-,\sqcap,\sqcup,\top,\bot,\neg}}
\newcommand{\lang}[1]{\mathcal{L}(#1)}
\newcommand{\ont}{\mathcal{O}}

\newcommand{\calI}{\Imc}

\usepackage{mysc}

\title{On the Power and Limitations of Examples for Description Logic Concepts}

\author{
Balder ten Cate$^1$
\and
Raoul Koudijs$^2$\and
Ana Ozaki$^{2,3}$
\affiliations
$^1$ Institute for Logic, Language and Computation (ILLC), University of Amsterdam\\
$^2$University of Bergen\\
$^3$University of Oslo
\emails
b.d.tencate@uva.nl, 
raoul.koudijs@uib.no,
anaoz@uio.no} 

\begin{document}

\maketitle

\begin{abstract}
Labeled examples (i.e., positive
and negative examples) are an 
attractive medium for communicating complex concepts. They are
useful for deriving concept
expressions (such as in concept learning, interactive concept specification, and
concept refinement) as well as for illustrating
concept expressions to a user or domain expert.
We investigate the power of labeled examples for describing
description-logic concepts. 
Specifically, we systematically
study the existence
and efficient computability of 
\emph{finite characterisations}, i.e., 
finite sets of labeled examples that
uniquely characterize a single concept,
for a wide variety of description logics
between $\EL$ and $\mathcal{ALCQI}$,both without an ontology and in the presence of a DL-Lite ontology. Finite characterisations are 
relevant for debugging purposes, and their existence is a necessary condition for exact learnability with membership queries.
\end{abstract}

\section{Introduction}
\thispagestyle{empty}

 Labeled examples (i.e., positive
and negative examples) are an 
attractive medium for communicating complex concepts. They are
useful as data for deriving concept
expressions (such as in concept learning, interactive concept specification, and example-driven
concept refinement) as well as for illustrating
concept expressions to a user or domain expert~\cite{DBLP:journals/jmlr/Lehmann09,DBLP:conf/ijcai/FunkJLPW19,FJL-IJCAI21,DBLP:journals/ki/Ozaki20,DBLP:conf/ilp/FanizzidE08,DBLP:journals/apin/IannonePF07,DBLP:journals/ml/LehmannH10}. 
Here, we study the utility of labelled examples for \revision{characterising
description logic concepts, where examples are finite interpretations that 
can be either positively or negatively labelled.}

\begin{example}
    In the description logic $\EL$,
    we may define the concept of an \emph{e-bike} by means of the concept expression 
    \[ C \text{ ~ := ~ } \text{Bicycle}\sqcap\exists \text{Contains}.\text{Battery}\] 
    Suppose we wish
    to illustrate $C$ by a
    collection of positive and negative
    examples. What would be a good choice of examples?
    Take the interpretation $\Imc$ consisting of the following facts.

    \begin{center}
    \cornersize{.2}\Ovalbox{\cornersize{1}\begin{tabular}{r@{~}c@{~}l}
    Bicycle  \ovalbox{soltera2} & $\xrightarrow{\text{Contains}}$ & \ovalbox{li360Wh} Battery \\[3mm]
    Bicycle  \ovalbox{px10} & $\xrightarrow{\text{Contains}}$ & \ovalbox{b12} Basket \\[3mm]
    Car \ovalbox{teslaY} & $\xrightarrow{\text{Contains}}$ &  \ovalbox{li81kWh}  Battery  
    \end{tabular}}
    \end{center}
    In the context of this interpretation $\Imc$, it is clear that
    \begin{itemize}
        \item \emph{soltera2} is a positive example for $C$, and
        \item \emph{px10} and \emph{teslaY} are negative examples for $C$
    \end{itemize}
    In fact, as it turns out, $C$ is the \emph{only} $\EL$-concept
    (up to equivalence) that fits these three labeled examples. In other words, these three labeled examples ``uniquely characterize'' $C$
    within the class of all $\EL$-concepts. This shows that the above
    three labeled examples are a good choice of examples. Adding any
    additional examples would be redundant. Note, however, that this 
    depends on the choice of description logic. For instance,  the richer concept language $\ALC$ allows for other concept expressions such as
    $\text{Bicycle}\sqcap \neg\exists \text{Contains}.\text{Basket}$ that also fit.
\end{example}

Motivated by the above example, we 
investigate the existence and efficient computability of 
\emph{finite characterisations}, i.e., 
finite sets of labeled examples that
uniquely characterize a single concept.
Finite characterisations are 
relevant not only for \emph{illustrating} a complex concept to a user (e.g., to verify the correctness of a concept expression obtained using machine learning techniques). Their existence is a necessary condition for \emph{exact learnability with membership queries}~\cite{DBLP:journals/ml/Angluin87}. Furthermore, from a more fundamental point of view, by studying the existence of finite characterisations, we gain insight into the power and limitations of labeled examples as a medium for describing concepts.

Concretely, we systematically study the existence of, and the complexity of computing, finite characterisations for description logic concepts in a wide range of expressive description logics. We look at concept languages $\lang{\Obf}$ that are generated by a set of connectives $\Obf$, where $\{\exists,\sqcap\}\subseteq\Obf\subseteq\{\forall,\exists,\geq,-,\sqcap,\sqcup,\top,\bot,\neg\}$. 
In other words, we look at  fragments of the description logic $\mathcal{ALCQI}$ that contain at least the 
$\exists$ and $\sqcap$ constructors from $\EL$.


As our first result, 
within this framework, we obtain an almost complete classification of the concept languages that 
admit finite characterisations.



\begin{figure}[t]
\includegraphics[scale=0.9]{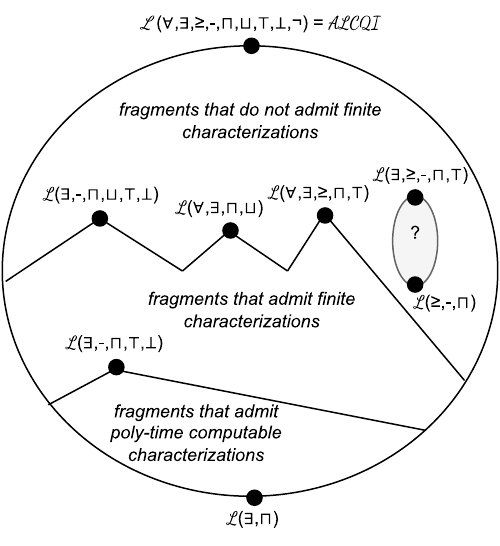}
\vspace{-3mm}
\caption{Summary of Thm.~\ref{thm:mainone} and Thm.~\ref{thm:maintwo}.}
\label{fig:main}
\end{figure}

\begin{restatable}{theorem}{mainone}
\label{thm:mainone}
\mbox{Let $\{\exists,\sqcap\}\subseteq \Obf\subseteq\{\forall,\exists,\geq,-,\sqcap,\sqcup,\top,\bot,\neg\}$.}
\begin{enumerate}
    \item 
If $\Obf$ is \revision{a subset of}  $\{\exists,-,\sqcap,\sqcup,\top,\bot\}$, $\{\forall,\exists,\sqcup,\sqcap\}$ or $\{\forall,\exists,\geq,\sqcap,\top\}$ then $\lang{\Obf}$ admits finite characterisations. 
\item Otherwise $\lang{\Obf}$ does not admit finite characterisations, except possibly if
 $\{\geq,-,\sqcap\}\subseteq\Obf\subseteq\{\exists,\geq,-,\sqcap,\top\}$. 
\end{enumerate}
    
\end{restatable}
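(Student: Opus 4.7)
The plan is to split the classification into two directions: the positive direction (finite characterisations exist for each of the three listed fragments) and the negative direction (they fail for every $\Obf$ outside the positive cases and outside the exception range $\{\geq,-,\sqcap\}\subseteq\Obf\subseteq\{\exists,\geq,-,\sqcap,\top\}$). The two directions are essentially independent, modulo a monotonicity remark: if $\lang{\Obf}$ admits finite characterisations, so do all its sublanguages, so it suffices to treat the three \emph{maximal} positive fragments and the \emph{minimal} problematic fragments on the negative side.

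For the positive direction, I would handle the three maximal fragments in turn, in each case giving an explicit construction that, on input a target concept $C$, produces a finite characterising set. The common recipe uses three ingredients: (i) identify the right semantic invariant for the fragment, namely homomorphisms for $\{\exists,-,\sqcap,\sqcup,\top,\bot\}$, a bisimulation-like relation for $\{\forall,\exists,\sqcup,\sqcap\}$, and a bounded-threshold counting simulation for $\{\forall,\exists,\geq,\sqcap,\top\}$, and prove the corresponding Hennessy--Milner-style preservation theorem; (ii) take as positive example(s) a canonical, tree-unravelled model of $C$ truncated at its modal depth; (iii) add one negative example per candidate concept (of size and depth bounded as a function of $|C|$) that is strictly weaker than $C$ modulo the invariant. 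Finiteness of the candidate set follows from bounded signature and depth, and each negative example is obtained by a small fixed-point construction that falsifies $C$ while satisfying the rival.

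For the negative direction, I would take any $\Obf$ outside the three positive cases and outside the exception range and exhibit a single concept $C\in\lang{\Obf}$ admitting no finite characterisation. The cases split by which forbidden connectives $\Obf$ contains: if $\neg\in\Obf$, full Boolean negation together with $\exists,\sqcap$ allows one to encode arbitrarily complex distinguishing conditions and produce an infinite family of inequivalent concepts that pairwise agree on any fixed finite example set; otherwise $\Obf$ contains combinations such as $\{\forall,-\}$ without $\sqcup$, or $\{\geq,\sqcup\}$, and one exhibits explicit infinite families built from alternating direct/inverse role chains or from number restrictions that any finite example set fails to separate, via a compactness-style diagonal argument.

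The main obstacle I expect is the positive case $\{\forall,\exists,\geq,\sqcap,\top\}$. Number restrictions $\geq n\, r.D$ under $\forall$-quantification do not admit a preservation theorem as clean as classical bisimulation or homomorphism invariance, so refuting a would-be equivalent rival requires examples with \emph{precisely} the right number of $r$-successors of each role profile. Assembling such counterexamples within a fragment that lacks $\sqcup$, $\neg$ and $-$ is delicate, and I expect the technical heart of the argument to lie in designing a finite-depth, threshold-counting simulation that exactly captures $\lang{\{\forall,\exists,\geq,\sqcap,\top\}}$-equivalence on finite interpretations, and then showing that the induced set of equivalence classes at any bounded depth is finite and constructively enumerable.
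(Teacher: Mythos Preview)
Your high-level skeleton (monotonicity, reduce to maximal positive fragments and minimal negative fragments) matches the paper, but two concrete gaps prevent the plan from going through as stated.

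\textbf{Positive direction for the $\forall$-fragments.} Your template ``take a canonical tree-unravelled model of $C$ as the positive example, then add one negative example per strictly weaker rival'' is the right picture for the monotone fragment $\{\exists,-,\sqcap,\sqcup,\top,\bot\}$, but it breaks once $\forall$ is present. There is no single canonical positive example for a concept like $\forall R.A$: any pointed interpretation satisfying it will also satisfy many incomparable $\lang{\forall,\exists,\geq,\sqcap,\top}$-concepts of arbitrarily large depth, so a single positive example cannot bound the depth of fitting rivals. Relatedly, ``strictly weaker than $C$'' is not the right pool of rivals here, since with $\forall$ the fitting non-equivalent concepts need not be $\sqsubseteq$-comparable to $C$ at all. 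The paper's actual mechanism is different and is the technical crux: it designs a small \emph{set} of positive examples $E^+_{n,k}$ (two specific interpretations, one with a dead end and one with a loop-plus-dead-successor at the far end, each preceded by a width-$k$ layered path of length $n$) with the property that any $\lang{\forall,\exists,\geq,\sqcap,\top}$-concept fitting them is forced to be equivalent to one of role depth $\le n$ and number restriction $\le k$. Only after this depth/threshold bounding step does one fall back on ``finitely many equivalence classes at bounded parameters, so add distinguishing examples by brute force''. Your proposed Hennessy--Milner-style counting-simulation preservation theorem would establish the latter finiteness, but it does not by itself produce the depth-bounding positive examples, which is where the real work lies.

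\textbf{Negative direction.} Your case split is incomplete. To close the classification the paper uses five minimal failing fragments: the known ones $\lang{\forall,\exists,\sqcap,\bot}$ and $\lang{\forall,\exists,\sqcup,\top}$, and the new ones $\lang{\geq,\bot}$, $\lang{\geq,\sqcup}$, and $\lang{\forall,\exists,-,\sqcap}$. You name $\{\geq,\sqcup\}$ and $\{\forall,-\}$ but omit $\{\geq,\bot\}$ (needed to exclude $\bot$ from any $\geq$-containing fragment) and the two $\forall$-with-$\bot$/$\top$ cases (needed to cut the $\forall$ branch down to exactly $\{\forall,\exists,\sqcap,\sqcup\}$ or $\{\forall,\exists,\geq,\sqcap,\top\}$). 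Without these, e.g.\ $\Obf=\{\forall,\exists,\sqcap,\bot\}$ is not covered by any of your listed obstructions. The actual counterexamples are short and explicit (for $\lang{\geq,\bot}$: $\bot$ itself has no finite characterisation because $\ge k\,R.A$ fits any finite $E^-$ for large $k$; for $\lang{\forall,\exists,-,\sqcap}$: the concept $A$ has no finite characterisation, via concepts $[\forall R.]^n[\exists R^-.]^n[\exists R.]^{n+1}[\exists R^-.]^{n+1}A$ that force a height-$n$ witness), not a generic compactness argument.
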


The above theorem identifies three maximal fragments of $\mathcal{ALCQI}$ that admit finite characterisations. It leaves open the status of essentially only two concept
languages, namely $\lang{\exists,\geq,-,\sqcap}$ and $\lang{\exists,\geq,-,\sqcap,\top}$ (note that $\exists$ is definable in terms of $\geq$). The proof builds on prior results from \cite{BalderCQ} and \cite{TOCLarxiv}. Our main novel technical contributions are a construction of finite characterisations for $\lang{\forall,\exists,\geq,\sqcap,\top}$ and complementary negative results for $\lang{\geq,\bot},\lang{\geq,\sqcup}$ and $\lang{\forall,\exists,-,\sqcap}$.
  The construction of finite characterisations for $\lang{\forall,\exists,\geq,\sqcap,\top}$ is non-elementary. We give an elementary (doubly exponential) construction for $\lang{\exists,\geq,\sqcap,\top}$, via a novel polynomial-time algorithm for constructing frontiers.

Next, we study which concept languages $\lang{\Obf}$ \emph{admit poly\-nomial-time computable characterisations}, i.e., have a poly\-nomial-time algorithm to compute finite characterisations. 

\begin{restatable}{theorem}{maintwo}
\label{thm:maintwo}
\mbox{Let $\{\exists,\sqcap\}\subseteq \Obf\subseteq\{\forall,\exists,\geq,-,\sqcap,\sqcup,\top,\bot,\neg\}$.} 
\begin{enumerate}
    \item If $\Obf$ is \revision{a subset of}  $\{\exists,-,\sqcap,\top,\bot\}$, then $\lang{\Obf}$ admits polynomial-time computable characterisations.
    \item  Otherwise, $\lang{\Obf}$ does not admit polynomial-time computable characterisations, assuming P$\neq$NP.
\end{enumerate}
    
\end{restatable}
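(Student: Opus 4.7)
The plan for part 1 is to extend the known polynomial-time construction of finite characterisations for $\EL$-concepts (see \cite{BalderCQ,TOCLarxiv}) to the richer fragment $\lang{\exists,-,\sqcap,\top,\bot}$. Given an input concept $C$, the characterisation consists of a single positive example given by the canonical tree-like model of $C$ pointed at the designated element, together with a collection of negative examples obtained from a \emph{frontier} of $C$: a finite set of concepts strictly more general than $C$ such that every other generalisation of $C$ is subsumed by some frontier member. First I would verify that canonical models for this fragment are polynomial-size and computable in polynomial time, handling inverse roles by including predecessor edges so that the semantic notion characterising satisfaction becomes two-way simulation rather than one-way simulation. Next, I would adapt the frontier construction from \cite{BalderCQ}: frontier concepts are obtained by local modifications (removing a successor subtree, merging two successors, or weakening an atomic label) of the canonical model, and the main lemma asserts the exhaustiveness of this collection among strict generalisations. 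Handling $\top$ and $\bot$ reduces to a small number of special cases, since each has a trivial characterisation.

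For part 2, the strategy is a reduction from concept subsumption. The key observation is that if $\lang{\Obf}$ admits polynomial-time computable characterisations and $\sqcap\in\Obf$, then concept equivalence, and hence (via $C\sqsubseteq D \Leftrightarrow C\equiv C\sqcap D$) subsumption, in $\lang{\Obf}$ is decidable in polynomial time: to test $C\equiv D$, compute a characterisation $(E^+,E^-)$ of $C$, and by uniqueness $D\equiv C$ holds if and only if $D$ fits every labeled example in $(E^+,E^-)$, which is polynomial-time model checking. It then suffices to invoke, for each $\Obf$ with $\Obf\not\subseteq\{\exists,-,\sqcap,\top,\bot\}$ for which finite characterisations exist (per Theorem~\ref{thm:mainone}), a coNP-hardness result for subsumption in $\lang{\Obf}$: the presence of disjunction already makes subsumption in $\lang{\sqcap,\sqcup}$ coNP-hard by a direct encoding of propositional unsatisfiability, and analogous hardness results apply for the remaining combinations involving $\forall$, $\neg$, or $\geq$. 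The cases where no finite characterisation exists at all are handled immediately by Theorem~\ref{thm:mainone}.

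The main obstacle I foresee is in part 1: the interaction of inverse roles with the frontier construction. Without inverse roles, generalisations are enumerated by local top-down edits of the canonical tree model; with inverse roles this correspondence must be lifted to two-way simulations, and one must argue that every proper generalisation of $C$ arises from such a local edit while respecting both forward and backward edges. I expect the argument of \cite{BalderCQ} to adapt, but the bookkeeping for the exhaustiveness part of the frontier lemma will be the most delicate step, and it needs to be checked that the polynomial bound on the frontier's size is preserved.
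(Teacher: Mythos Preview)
Your proposal has a genuine gap in part~2, specifically for the case $\geq\in\Obf$.

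Your strategy for part~2 is to show that polynomial-time characterisations imply polynomial-time subsumption testing, and then to invoke hardness of subsumption. This works for $\forall$ (subsumption in $\lang{\forall,\exists,\sqcap}$ is NP-hard, cf.~Donini et al.) and for $\sqcup$ (your reduction from UNSAT via dual atoms $p_i,q_i$ is correct). But your claim that ``analogous hardness results apply for \ldots\ $\geq$'' is false: subsumption in $\lang{\exists,\geq,\sqcap,\top}$ is decidable in \emph{polynomial time} (this is Theorem~\ref{thm:polytimesubsumption} in the paper). So for any $\Obf$ with $\{\exists,\geq,\sqcap\}\subseteq\Obf\subseteq\{\exists,\geq,\sqcap,\top\}$---and also for the open cases $\{\geq,-,\sqcap\}\subseteq\Obf\subseteq\{\exists,\geq,-,\sqcap,\top\}$, which Theorem~\ref{thm:maintwo} does cover---your reduction yields nothing. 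The paper handles these cases by a direct, unconditional exponential lower bound on the \emph{size} of any characterisation of certain $\lang{\geq,\sqcap}$-concepts (Theorem~\ref{thm:ELQlowerbound}), constructed so that exponentially many non-subsumed concepts must each be refuted on a distinct positive example. The paper even remarks explicitly that this shows the converse of the characterisations-to-subsumption reduction fails.

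On part~1: you are doing unnecessary work. The result for $\lang{\exists,-,\sqcap,\top,\bot}$ is already established in \cite{BalderCQ} (see Theorem~\ref{thm:knownresults}(1)); that reference already treats inverse roles, so your anticipated ``main obstacle'' concerning two-way simulations and frontiers has been dealt with there and need not be redone.
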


The first item follows from known results~\cite{BalderCQ}; we prove the second.
Thm.~\ref{thm:mainone} and~\ref{thm:maintwo} are summarized in Figure~\ref{fig:main}.

Finally, we investigate finite characterisations relative to an ontology $\ont$, 
where we require the examples to be interpretations that satisfy the ontology
and we require that every fitting concept is equivalent to the input concept $C$ relative to $\ont$. 

\begin{restatable}{theorem}{mainthree}
\label{thm:mainthree}
\mbox{Let $\{\exists,\sqcap\}\subseteq \Obf\subseteq\{\forall,\exists,\geq,-,\sqcap,\sqcup,\top,\bot,\neg\}$.}
\begin{enumerate}
    \item If $\Obf$ is \revision{a subset of}  $\{\exists,\sqcap,\top,\bot\}$, then $\lang{\Obf}$ admits finite characterisations
w.r.t.~all DL-Lite  ontologies.
\item Otherwise, $\lang{\Obf}$ does not admit finite characterisations w.r.t.~all DL-Lite   ontologies, except possibly if $\{\exists,\sqcap,\sqcup\}\subseteq\Obf\subseteq\{\exists,\sqcap,\sqcup,\top,\bot\}$.
\end{enumerate}
In fact, for $\lang{\exists,\sqcap,\top,\bot}$-concepts $C$ and DL-Lite ontologies $\ont$ such that $C$ is satisfiable w.r.t.~$\ont$, a finite characterisation can be computed in polynomial time.
\end{restatable}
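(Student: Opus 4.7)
My plan is to handle the positive half (item 1 together with the polynomial-time claim) by lifting the canonical-model-plus-frontier construction underlying the no-ontology $\EL$-case of Thm.~\ref{thm:mainone}, and to derive the negative half (item 2) by combining the existing negative results of Thm.~\ref{thm:mainone} with small DL-Lite ontologies that force infinite families of pairwise non-equivalent fitting concepts.

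For the positive part, fix a DL-Lite ontology $\ont$ and a $\lang{\exists,\sqcap,\top,\bot}$-concept $C$ satisfiable w.r.t.~$\ont$. First, construct a canonical positive example $\Imc_C^\ont$ by unraveling $C$ into a tree-shaped model and closing it under $\ont$ by a bounded chase; because DL-Lite axioms only propagate basic concepts, a chase depth bounded by $|C|$ suffices to realise every $\lang{\exists,\sqcap,\top,\bot}$-subconcept of $C$ at the root. Second, compute a polynomial-size frontier $\{F_1,\ldots,F_k\}$ of $C$ w.r.t.~$\ont$ in $\lang{\exists,\sqcap,\top,\bot}$: each $F_i$ is a local weakening of $C$ (dropping a conjunct, weakening an $\exists R$-successor, or relaxing $\bot$ on a subtree), retaining only those $F_i$ not equivalent to $C$ modulo $\ont$; the filtering runs in polynomial time because DL-Lite subsumption is polynomial-time decidable. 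Third, for each $F_i$ take the canonical model $\Imc_{F_i}^\ont$ as a negative example, whose root satisfies $F_i$ but not $C$. The collection $\{(\Imc_C^\ont,+)\}\cup\{(\Imc_{F_i}^\ont,-)\}_i$ is then the desired polynomial-size characterisation.

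For the negative part, for each $\Obf$ outside the positive range and outside the exception band $\{\exists,\sqcap,\sqcup\}\subseteq\Obf\subseteq\{\exists,\sqcap,\sqcup,\top,\bot\}$, I would exhibit a DL-Lite ontology $\ont$ and a concept $C\in\lang{\Obf}$ admitting no finite characterisation. When Thm.~\ref{thm:mainone} already rules out finite characterisations for $\Obf$ without an ontology, take $\ont=\emptyset$ and reuse the same witness. The remaining cases are those where $\Obf$ lies inside a positive class of Thm.~\ref{thm:mainone} (for example when $\Obf$ contains $-$ together with $\top$ or $\bot$) but still outside both the positive range here and the exception band; for these, design a small DL-Lite ontology, such as one encoding an inverse role inclusion or a functionality-style declaration, that together with the extra constructor produces an infinite descending sequence of pairwise non-equivalent concepts, all consistent with any fixed finite candidate example set.

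The main obstacle is the frontier construction modulo $\ont$, because ontology-induced equivalences can collapse distinct syntactic weakenings and can introduce non-trivial weakenings not visible in the syntax of $C$. The correctness argument proceeds by a simulation-based analysis: show that any $D\in\lang{\exists,\sqcap,\top,\bot}$ with $C\sqsubseteq_\ont D$ either satisfies $D\equiv_\ont C$ or is implied modulo $\ont$ by some frontier element $F_i$, using the tree-like structure of the canonical model of $D$ and a homomorphism into $\Imc_C^\ont$. Polynomial-time computability then follows from polynomial-time DL-Lite subsumption together with the fact that only polynomially many syntactic weakenings need to be considered.
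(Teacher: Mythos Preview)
Your positive part is essentially the paper's approach: one positive example (the canonical model of $C$ under $\ont$) plus negative examples given by canonical models of a polynomial-time frontier of $C$ relative to $\ont$. The paper imports the frontier from Lem.~\ref{lem:frontiers-el-dllite} rather than constructing it from scratch, and builds the canonical model differently (it adds a fixed pool of elements $d_{\exists S}$ for basic concepts over $\mathsf{sig}(\ont)$ rather than chasing to depth $|C|$); your bounded-chase sketch would need more care to guarantee that the resulting interpretation actually \emph{satisfies} $\ont$, which is what makes it a legal example in this setting. You also omit the unsatisfiable case: when $C\equiv_\ont\bot$ there is no canonical positive example, and the paper handles this separately with an exponential-size set of negative examples (Lem.~\ref{lem:bottomcharacterisationontology}), which is precisely why the polynomial-time claim in the theorem is restricted to satisfiable $C$.

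Your negative part has a real gap. The reduction ``take $\ont=\emptyset$ and reuse Thm.~\ref{thm:mainone}'' covers $\neg$, but not the minimal extensions $\lang{\forall,\exists,\sqcap}$, $\lang{\exists,\geq,\sqcap}$, $\lang{\exists,-,\sqcap}$, since all three \emph{do} admit finite characterisations without an ontology. For these you propose ``a small DL-Lite ontology, such as one encoding an inverse role inclusion or a functionality-style declaration'' --- but plain DL-Lite as defined in this paper has neither role inclusions nor number restrictions; its CIs are only $B\sqsubseteq B'$ or $B\sqsubseteq\neg B'$ for basic concepts $B,B'$. The paper's actual devices are different and quite specific. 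For $\forall$ and for $\geq$ it takes $\ont=\{A\sqsubseteq\neg A\}$, which forces $A\equiv_\ont\bot$ and thereby imports the no-ontology failures of $\lang{\forall,\exists,\sqcap,\bot}$ and $\lang{\geq,\bot}$ from Thm.~\ref{thm:knownresults} and Thm.~\ref{thm:countingbot} (this is Thm.~\ref{thm:countingontologies}). For $-$ it takes $\ont=\{A\sqsubseteq\exists R,\ \exists R^-\sqsubseteq A\}$, which forces every finite model containing an $A$-element to contain an $R$-lasso, so that for sufficiently large $n$ the concept $A\sqcap[\exists R.]^n[\exists R^-.]^{n+1}A$ fits any finite candidate example set for $A$ while remaining non-equivalent to $A$ under $\ont$ (Thm.~\ref{thm:inverseontology}). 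Without these specific ontologies and the accompanying arguments, the negative half of your proof is only a plan, not a proof.
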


This shows in particular that, if we exclude  $\sqcup$ from consideration, then $\lang{\exists,\sqcap,\top,\bot}$ is the unique largest fragment admitting finite characterisations under DL-Lite ontologies.
\thispagestyle{empty}


\paragraph{Outline}
In Section~\ref{sec:prelims}, we review relevant definitions and we introduce our framework for finite characterisations. 
Section~\ref{sec:withoutontologies} studies the existence and polynomial-time computability of finite characterisations without an ontology. Section~\ref{sec:withontologies} extends the analysis to the case with DL-Lite ontologies. Section~\ref{sec:conclusion} discusses
further implications and future directions.

\paragraph{Related Work}
Finite characterisations were first studied in the computational learning theory literature under the name of teaching sets, with a corresponding notion of teaching dimension, measuring the maximal size of minimal teaching sets of some class of concepts \cite{GOLDMAN199520}. The existence of finite characterisations is a necessary condition for  exact learnability with membership queries.


Several recent works study finite characterisations for \emph{description logic concepts} (\cite{BalderCQ,DuplicateDBLP:conf/ijcai/FunkJL22} for $\mathcal{ELI}$; \cite{FortinKR2022}
for temporal instance queries). We use
results from~\cite{BalderCQ,DuplicateDBLP:conf/ijcai/FunkJL22}.
Thm.~\ref{thm:mainthree} 
 appears to contradict a result in~\cite{DuplicateDBLP:conf/ijcai/FunkJL22}, which 
states that $\mathcal{ELI}$ admits 
finite characterisations under DL-Lite ontologies. However, this is due to a difference in the way we
define examples, which we will discuss in detail in Section~\ref{sec:prelims}.

Finite characterisations have also been studied for a while in the \emph{data management} community
(cf.~\cite{Mannila1985TestDF,Staworko2015CharacterizingXT,BalderCQ,ICDT2024} for queries; \cite{Alexe2011CharacterizingSM} for schema mappings), and,
a systematic study of finite characterisations for syntactic fragments of \emph{modal logic} was carried out in~\cite{TOCLarxiv}.  We make use of several
results from~\cite{TOCLarxiv}, and
one of our results also implies an answer to an 
open problem from this paper.


\section{Basic Definitions and Framework}
\label{sec:prelims}

In the following, we first define the syntax and semantics of
concept and ontology languages considered in this work.

\paragraph{Concept Languages}
Let $\NC$ and $\NR$ 
be infinite and mutually disjoint sets of \emph{concept} and \emph{role} 
names, respectively. We work with finite subsets $\Sigma_{\mathsf{C}}\subseteq\NC,\Sigma_{\mathsf{R}}\subseteq\NR$ of these. We denote by $\langfull[\Sigma_{\mathsf{C}},\Sigma_{\mathsf{R}}]$ the following concept language:
\begin{align*}
    C,D ::= & A \mid \top \mid \bot \mid (C\sqcap D) \mid (C\sqcup D) |\\
    & \neg C\mid (\exists S.C) \mid (\forall S.C) \mid (\geq k S. C)
\end{align*}
where $k\geq 1$, 
$A\in\Sigma_{\mathsf{C}}$ and $S=R$ or $S=R^-$ for some $R\in\Sigma_{\mathsf{R}}$. Note that $\exists S. C$ is equivalent to $\geq 1 S. C$.  

For any set of operators $\Obf\subseteq \{\forall,\exists,\geq,-,\sqcap,\sqcup,\top,\bot,\neg\}$, we denote by $\lang{\Obf}[\Sigma_{\mathsf{C}},\Sigma_{\mathsf{R}}]$ the fragment of $\langfull$ that only uses the constructs in $\Obf$, concept names from $\Sigma_{\mathsf{C}}$ and role names from $\Sigma_{\mathsf{R}}$. We may omit $[\Sigma_{\mathsf{C}},\Sigma_{\mathsf{R}}]$ from $\lang{\Obf}[\Sigma_{\mathsf{C}},\Sigma_{\mathsf{R}}]$ and simply write $\lang{\Obf}$ if the role and concepts names are clear from context or irrelevant. 

Observe that $\langfull$ is a notational variant of 
$\mathcal{ALCQI}$, while, e.g.,
$\lang{\exists,\sqcap,\top}$ corresponds
to $\mathcal{EL}$ and $\lang{\forall,\exists,\sqcap,\top,\bot}$ 
to $\mathcal{ALE}$. This notation thus allows us to state results for a large variety of languages for concepts.

\paragraph{Ontology Language} 
We consider the very popular
 DL-Lite  ontology language~\cite{dllite-jair09}.
 \dlr{}  concept
inclusions (CIs) are expressions of the form $B \sqsubseteq C$,
respectively, where $B$, $C$ are
concept expressions built through the grammar rules (we write below $\exists S$ as a shorthand for $\exists S.\top$)
\[B ::= A \mid \exists R\mid \exists R^-, ~~~~~ C ::= B \mid \neg B,\]
with $R \in \NR$ and $A \in \NC$.
We call concepts of the form $B$ above \emph{basic concepts}. An \emph{DL-Lite ontology} is a finite set 
of DL-Lite CIs. Note that every DL-Lite ontology is satisfiable.

\paragraph{Semantics}
An \emph{interpretation} $\calI=(\Delta^\Imc,\cdot^\Imc)$ w.r.t. a signature $(\Sigma_{\mathsf{C}},\Sigma_{\mathsf{R}})$ is a
structure, in the traditional model-theoretic sense. It consists of a non-empty set $\Delta^{\Imc}$, called the \emph{domain}, and a function $\cdot^{\Imc}$ that assigns a subset $A^{\Imc}\subseteq\Delta^{\Imc}$ of the domain to each $A\in\Sigma_{\mathsf{C}}$ and a binary relation  $R^{\Imc}\subseteq \Delta^{\Imc}\times\Delta^{\Imc}$ over the domain to each $R\in\Sigma_{\mathsf{R}}$.
We extend $(\cdot)^\Imc$ to  concept and role expressions as follows. 
\begin{align*} 
	(\top)^\Imc := {} & \Delta^\Imc \quad (\bot)^\Imc := {}  \emptyset\quad  (C\sqcap D)^\Imc := {} C^\Imc\cap D^\Imc\\ 
   (\neg C)^\Imc := {} &  \Delta^\Imc\setminus C^\Imc
     \qquad \quad\quad  (C\sqcup D)^\Imc := {} C^\Imc\cup D^\Imc \\
	    (\exists S.C)^\Imc := {} & \{d\mid \exists e\in\Delta^\Imc \text{ s.t. }(d,e)\in S^\Imc \text{ and } e\in C^\Imc\}\\
	    (\geq k S.C)^\Imc := {} & \{d\mid \exists e_1, \ldots, e_k\in\Delta^\Imc\;\text{pairwise-distinct s.t.} \\& (d,e_i)\in S^\Imc \text{ and } e_i\in C^\Imc\;\text{for all}\;1\leq i\leq k\}\\
    (\forall S.C)^\Imc := {} & \{d\mid \forall e\in\Delta^\Imc
    \text{ if }(d,e)\in S^\Imc \text{ then } e\in C^\Imc\}\\
    (R^-)^\Imc := {} & \{(e,d)\mid (d,e)\in R^\Imc\}.
  \end{align*} 
We say that an interpretation $\Imc$ satisfies a CI $C\sqsubseteq D$ 
if $C^\Imc\subseteq D^\Imc$. We write $\Imc\models\alpha$ if $\Imc$ satisfies a CI $\alpha$. Moreover, we say that $\Imc$ satisfies an ontology $\ont$, written  $\Imc\models\ont$, if $\Imc\models\alpha$
for all CIs $\alpha$ in $\ont$. 
A concept $C$ is \emph{satisfiable} w.r.t.~an ontology $\ont$ (or vice-versa) if there is an interpretation $\Imc\models\ont$ with $C^\Imc\neq\emptyset$; a concept $C$ \emph{subsumes} $D$ relative to $\ont$
(denoted $\ont\models C\sqsubseteq D$)
if $C^\Imc\subseteq D^\Imc$ for all
interpretations $\Imc\models\ont$; and
two concepts $C,D$  are \emph{equivalent} relative to $\ont$ if $C^{\Imc} = D^{\Imc}$ for all interpretations $\Imc\models\ont$. We   use $\models C\sqsubseteq D$ as a shorthand for $\emptyset\models C\sqsubseteq D$ when indicating that a subsumption holds relative to the empty ontology. We also write $C\equiv D$ as a shorthand
for $C\equiv_\emptyset D$.

A \emph{pointed interpretation} is a pair $(\Imc,d)$ where $\Imc$ is an interpretation and $d\in\Delta^\Imc$ is a domain element. 
We say that a pointed interpretation $(\Imc,d)$ \emph{satisfies} a concept $C$ if $d\in C^{\Imc}$. 
A pointed interpretation $(\Imc,d)$ is 
 \emph{finite} if 
$\Delta^\Imc$ is a finite set.

Given a pointed interpretation $(\Imc,d)$ for $(\Sigma_{\mathsf{C}},\Sigma_{\mathsf{R}})$, the \emph{height} of $(\Imc,d)$ \emph{w.r.t. a role} $R\in\Sigma_{\mathsf{R}}$ (also denoted  $h_{R}(\Imc,d)$) is the maximum $n$ such that there is a path
\[ d=d_1\xrightarrow{R} d_2 \cdots \xrightarrow{R} d_n\]
in $\Imc$ starting at $d$, or $\infty$ if there is no maximum such number. Note that for all pointed interpretations $(\Imc,d)$, role names $R\in\Sigma_{\mathsf{R}}$, and  $n\geq 0$, we have $h_{R}(\Imc,d)=n$ iff $d\in (height^R_n)^{\Imc}$, where $height^{R}_{n}$ is the $\lang{\forall,\exists,\sqcap,\bot}$-concept $[\forall R.]^{n+1}\bot\sqcap[\exists R.]^{n}\forall R.\bot$.


\paragraph{Size of Concepts and Examples}
For any concept $C\in\lang{\forall,\exists,\geq,-,\sqcap,\sqcup,\top,\bot}$, we write $|C|$ for the number of symbols in the syntactic expression $C$. The numbers occurring in $C$ are assumed to be represented
using a binary encoding. All our lower bounds also hold when numbers are represented in unary.
Let $\mathsf{nr}(C)$ be the the largest natural number occurring in $C$. 
Further, let $\mathsf{dp}(C)$ be the \emph{role  depth} of $C$, i.e. the longest sequence of nested quantifiers and/or $\geq$ operators in $C$. Finally, for interpretations $\Imc$ we set $|\Imc|:=|\Delta^{\Imc}|$, and the size of an
example $(\Imc,d)$ is simply $|\Imc|$.

\begin{restatable}{theorem}{finitecontrollability}(Finite Controllability)
\label{thm:finitecontrollability}
Let $C,D\in\lang{\forall,\exists,\geq,-,\sqcap,\sqcup,\top,\bot,\neg}$ and let $\ont$ be a DL-Lite ontology. 
Then $C\not\equiv_\Omc D$ iff there is a finite pointed interpretation 
$(\Imc,d)$ with $\Imc\models\ont$ and  $d\in((C\sqcap\neg D)\sqcup(\neg C\sqcap D))^\Imc$.
\end{restatable}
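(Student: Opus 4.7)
The plan is to reduce the statement to the finite model property of the two-variable fragment of first-order logic with counting, $C^2$. The right-to-left direction is immediate. For the other direction, suppose $C \not\equiv_\ont D$; then by definition there is some (possibly infinite) pointed interpretation $(\Imc, d)$ with $\Imc \models \ont$ satisfying $d \in ((C\sqcap\neg D)\sqcup(\neg C\sqcap D))^\Imc$, and the goal is to replace $\Imc$ by a finite one.

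First, I would apply the standard translation $\tau_x(\cdot)$ from $\mathcal{ALCQI}$ into $C^2$: atomic concepts become unary predicates, $\exists R.E$ becomes $\exists y(R(x,y) \wedge \tau_y(E))$, $\geq k\, R.E$ becomes $\exists^{\geq k} y(R(x,y) \wedge \tau_y(E))$, $\forall R.E$ is dual, and inverse roles swap the arguments of $R$. Since the translation reuses only the two variables $x, y$, it lands in $C^2$. The DL-Lite CIs translate to $\mathrm{FO}^2$ sentences of the form $\forall x(\tau_x(B_1) \to \tau_x(B_2))$ (possibly with a negation on the right-hand side), because the basic concepts $B$ have the shape $A$, $\exists R$, or $\exists R^-$, each expressible using at most two variables and without counting.

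Next, form the sentence
\[
\varphi \;:=\; \tau(\ont) \;\wedge\; \exists x\bigl((\tau_x(C) \wedge \neg \tau_x(D)) \vee (\neg \tau_x(C) \wedge \tau_x(D))\bigr),
\]
which is satisfiable, witnessed by $(\Imc, d)$. Invoking the finite model property of $C^2$, due to Gr\"adel, Otto and Rosen, $\varphi$ has a finite model, which, read back as a pointed interpretation, is a finite witness that satisfies $\ont$ and distinguishes $C$ from $D$ at a point.

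The main obstacle, and really the only substantive ingredient, is the finite model property of $C^2$ itself, which is a nontrivial theorem; everything else is routine verification that the standard translation preserves satisfaction and that DL-Lite CIs remain within $\mathrm{FO}^2 \subseteq C^2$. No special treatment is needed for the ontology, since DL-Lite introduces no variables or counting beyond what the $\mathcal{ALCQI}$ translation already uses. An alternative direct proof via filtration and successor selection is conceivable, but it would essentially redo the $C^2$ finite model construction by hand while carrying the DL-Lite constraints, and seems strictly more work for no gain in either bound or clarity.
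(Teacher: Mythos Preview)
Your approach is correct and takes a genuinely different route from the paper. The paper gives a direct, self-contained construction: starting from an arbitrary model of $\ont$ distinguishing $C$ and $D$, it unravels to a tree, performs a type-based selection (keeping at most $\mathsf{nr}(C)$ realisers of each type at each level up to depth $\mathsf{dp}(C)$) to obtain a finite interpretation that is $\mathcal{ALCQI}$-bisimilar up to the relevant depth and counting threshold, and then repairs the violations of $\ont$ at the leaves by gluing on fresh witnesses $d_{\exists S}$ for the required basic concepts, verifying by hand that the resulting interpretation still satisfies every DL-Lite CI. Your reduction to the finite model property of $C^2$ is considerably shorter and avoids all of this bookkeeping, at the cost of importing a nontrivial external theorem; the paper's construction is more hands-on and, in principle, yields an explicit size bound in terms of $\mathsf{dp}(C)$, $\mathsf{nr}(C)$, and $|\ont|$, though the paper does not extract it. Both are perfectly valid; yours is the natural choice if $C^2$ is treated as a known black box, while the paper's argument is preferable if one wants a self-contained proof or bounds tailored to the specific shape of DL-Lite.
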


Let $\mathcal{L}$ be a concept language. Then the model checking problem for $\mathcal{L}$ is the problem of determining, on input a pointed interpretation $(\Imc,d)$ together with a $\mathcal{L}$-concept, whether $d\in C^{\Imc}$. Measuring the \emph{combined} complexity of this problem means that we parameterize the complexity of whether $d\in C^{\Imc}$ on both $|C|$ and $|\Imc|$.

\begin{theorem}\label{thm:modelchecking}
Model checking $\mathcal{ALCQI}$ is in polynomial time for combined complexity.
\end{theorem}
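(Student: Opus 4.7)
The plan is a standard bottom-up dynamic programming algorithm over the subconcept structure of the input. On input a pointed interpretation $(\Imc,d)$ and a concept $C \in \mathcal{ALCQI}$, first I would enumerate the set of subconcepts of $C$, together with all subrole expressions $R^-$ occurring in $C$; both sets have size at most $|C|$. Then I would pre-compute the relations $(R^-)^{\Imc}$ for each inverse role occurring in $C$ in time $O(|\Imc|^2)$ per role. After that, I would compute, for each subconcept $D$ of $C$ in order of increasing size, the set $D^{\Imc} \subseteq \Delta^\Imc$, storing it as a bit-vector indexed by $\Delta^\Imc$. Model checking then reduces to testing whether $d \in C^\Imc$.

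The key step is to verify that each constructor can be handled from the extensions of its immediate subexpressions in time polynomial in $|\Imc|$ and $|C|$. The Boolean cases $\top,\bot,A,\neg D,D_1 \sqcap D_2,D_1 \sqcup D_2$ each cost $O(|\Imc|)$. For $\exists S.D$ and $\forall S.D$, for every candidate $d' \in \Delta^\Imc$ one scans the at most $|\Imc|$ $S$-successors of $d'$ and checks membership in the already-computed $D^\Imc$, giving $O(|\Imc|^2)$. For $\geq k\, S.D$, for each $d'$ one counts the number of distinct $S$-successors lying in $D^\Imc$; the count need never exceed $|\Imc|$, so it can be maintained in $O(\log |\Imc|)$ bits and then compared against $k$. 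This is the one place where the binary encoding of $k$ matters, but since a single comparison between a binary-encoded $k$ (whose bit-length is at most $|C|$) and a count bounded by $|\Imc|$ takes time polynomial in $|C|+\log|\Imc|$, the $\geq k\, S.D$ case remains $O(|\Imc|^2 \cdot |C|)$.

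Summing over the at most $|C|$ subconcepts, the total running time is polynomial in $|C| + |\Imc|$, establishing the claimed combined-complexity bound. The main (and essentially only) obstacle is the $\geq k\, S.D$ case under binary encoding of $k$, which is handled by the observation that counts can be safely truncated at $|\Imc|$ before being compared with $k$. No other subtleties arise, since $\mathcal{ALCQI}$ has no fixpoints or recursion that would make the bottom-up pass fail to terminate.
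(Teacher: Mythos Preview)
Your proposal is correct and follows essentially the same bottom-up evaluation strategy as the paper's proof, which simply states that one enumerates the subconcepts in order of increasing complexity and computes each extension from those of its proper subconcepts. You give a more careful per-constructor analysis (in particular addressing the binary encoding of $k$ in the $\geq k\,S.D$ case), whereas the paper simply asserts an $O(|C|\cdot|\Imc|)$ bound without spelling out the cases.
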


\begin{proof}
Straightforward bottom-up evaluation
 solves the problem in time $O(|C|\cdot |\Imc|)$. This procedure consists of enumerating all (at most $|C|$ many) subconcepts
of the input concept, in order of increasing complexity, and, 
for each such subconcept $C'$, computing (in time linear in $|\Imc|$) its
interpretation $(C')^{\Imc}$ using 
the fact that the interpretation $(C'')^\Imc$ of 
all proper subconcepts $C''$ of $C'$
has already been computed.
\end{proof}

\paragraph{Labeled Examples and Finite characterisations}

\thispagestyle{empty}
An \emph{example} is a finite pointed  interpretation $(\Imc,d)$,  labelled either positively (with a ``$+$'') or negatively (with a ``$-$''). We say that an example $(\Imc,d)$ is a
\emph{positive example} for a concept $C$  if $d\in C^\Imc$, and otherwise $(\Imc,d)$ is a \emph{negative example} for $C$. We may  omit ``for $C$'' when this is clear from the context. We will use the notation $E=(E^+,E^-)$ to denote
a finite collection of examples labeled as positive
and negative. A concept $C$ \emph{fits} a set of positively and negatively labelled examples if it is satisfied on all the positively labelled examples and not satisfies on any of the negatively labelled examples.\footnote{Equivalently, a concept $C$ fits a set of positively and negatively labelled examples if all the positively labelled examples are positive examples for $C$, and all the negatively labelled examples are negative examples for $C$.}

A \emph{finite characterisation} for a concept $C$ w.r.t.~concept language $\mathcal{L}$ is  a finite collection of labeled examples  $E=(E^+,E^-)$ such that $C$ fits $E$ and every concept $D\in\mathcal{L}$ that fits $E$ is equivalent to $C$. A concept language $\mathcal{L}$ \emph{admits finite characterisations} if for every pair of finite sets $\Sigma_{\mathsf{C}}\subseteq\NC,\Sigma_{\mathsf{R}}\subseteq\NR$, every concept $C\in\mathcal{L}[\Sigma_{\mathsf{C}},\Sigma_{\mathsf{R}}]$ has a finite characterisation w.r.t.~$\mathcal{L}[\Sigma_{\mathsf{C}},\Sigma_{\mathsf{R}}]$. We say that $\mathcal{L}$ \emph{admits polynomial-time computable characterisations} if, furthermore, there is an algorithm that, given an input concept $C\in\mathcal{L}[\Sigma_{\mathsf{C}},\Sigma_{\mathsf{R}}]$ outputs a finite characterisation $E=(E^+,E^-)$ of $C$ w.r.t.~$\mathcal{L}[\Sigma_{\mathsf{C}},\Sigma_{\mathsf{R}}]$ in time polynomial in the size of the concept $|C|$ and the size of the signature $|\Sigma_{\mathsf{C}}|+|\Sigma_{\mathsf{R}}|$. This implies that $E^+$ and $E^-$ consist of polynomially many, polynomial-size examples. \revision{Therefore, if $\lang{\Obf}$ admits polynomial time computable characterisations it has polynomial \emph{teaching size} in the sense of \cite{Telle-Orallo2019Teachingsize}}. \revision{Admitting finite or polynomial time computable characterisations is a monotone property of concept classes.} 

We defined examples as \emph{finite}
pointed interpretations, because
they are intended as objects that 
can be communicated to a user and/or specified by a user. This is justified by Thm.~\ref{thm:finitecontrollability}, which 
shows that any two non-equivalent concepts can be distinguished by a finite pointed interpretation.

It may not be clear to the reader why the above definition of ``admitting finite characterisations'' involves a quantification over vocabularies. The following example explains this.

\begin{example}
    Consider the concept
    $C=\exists R.A$. Let $E=(E^+,E^-)$ with $E^+=\{(\Imc, d_1)\}$ and $E^-=\{(\Imc,d_3)\}$, where $\Imc$
    is the following
    interpretation.

    \begin{center}
    \cornersize{.2}\Ovalbox{
      \cornersize{1}
      \begin{tabular}{r@{~}c@{~}l@{~}c@{~}l@{~}l@{~}l@{~}l}
         \ovalbox{$d_1$} & $\xrightarrow{ ~~R~~ }$ & \ovalbox{$d_2$}$^A$ & ~~~~
       $^A$ \ovalbox{$d_3$} & $\xrightarrow{ ~~R~~ }$ & \ovalbox{$d_4$} $\xrightarrow{ ~~R~~ }$ & \ovalbox{$d_5$}$^A$ \rotatebox{90}{$\circlearrowleft$} $^R$
      \end{tabular}
    }
    \end{center}
    
    Clearly, $C$ fits $E$. In fact, 
    it can be shown that $C$ is the only
    $\lang{\exists,\sqcap,\sqcup}$-concept (up to equivalence)
    that fits $E$ and that uses only the
    concept name $A$ and role name $R$.
    In other words, $E$ is a finite characterisation of $C$
    w.r.t. $\lang{\exists,\sqcap,\sqcup}[\Sigma_{\mathsf{C}},\Sigma_{\mathsf{R}}]$
    with $\Sigma_{\mathsf{C}}=\{A\}$ and $\Sigma_{\mathsf{R}}=\{R\}$.     
    In contrast, $E$ does \emph{not} uniquely characterize $C$ with respect to $\lang{\exists,\sqcap,\sqcup}[\Sigma'_C,\Sigma_{\mathsf{R}}]$ with $\Sigma'_C=\{A,B\}$ since $\exists R.(A\sqcup B)$ fits $E$ and is not equivalent to $C$.
    Therefore, what qualifies as a finite characterisations depends on the choice of vocabulary.
\end{example}

The notions of an \emph{example}
and of a \emph{finite characterisation}
naturally extend to the case with an 
ontology $\ont$, where we now require that each
example satisfies $\ont$,
and we replace \emph{concept equivalence} by \emph{equivalence relative to $\Omc$}.
More precisely, an \emph{example for $\ont$} is a finite pointed interpretation $(\Imc,d)$ with $\Imc\models\ont$, labelled either positively or negatively. 
A \emph{finite characterisation} for a concept $C$ w.r.t.~a concept language $\mathcal{L}$ \emph{under $\ont$} is a finite collection $E=(E^+,E^-)$ of examples for $\ont$ such that for every $D\in\mathcal{L}$ that fits $E$,
$C\equiv_\ont D$. 
{
A concept language $\mathcal{L}$ \emph{admits finite characterisations under $\mathcal{L'}$  ontologies} if for every 
  $\ont$ in $\mathcal{L'}$ and every pair of finite sets $\Sigma_{\mathsf{C}}\subseteq\NC,\Sigma_{\mathsf{R}}\subseteq\NR$, every concept $C\in\mathcal{L}[\Sigma_{\mathsf{C}},\Sigma_{\mathsf{R}}]$ 
has a finite characterisation w.r.t. $\mathcal{L}[\Sigma_{\mathsf{C}},\Sigma_{\mathsf{R}}]$ under $\ont$. }
\begin{example}\label{ex:catdog}
Consider the concept language
$\lang{\sqcap,\sqcup,\neg}$ and
concept names $\Sigma_{\mathsf{C}}=\{\text{Animal}, \text{Cat}, \text{Dog}, \text{Red}\}$. Note that there are no
role constructs in the language.
Let $\ont$ be the DL-Lite ontology
that expresses that Cat and Dog are disjoint 
subconcepts of Animal. 
Now consider
the concept $C=\text{Cat}\sqcap\text{Red}$.
Let $\Imc$ be the interpretation with 
\begin{itemize}
\item $\Delta^{\Imc}=\{a,a',b,b',c,c',d,d'\}$, 
\item $\text{Animal}^\Imc=\{a,a',c,c',d,d'\}$, 
\item $\text{Cat}^\Imc=\{c,c'\}$,
$\ \text{Dog}^\Imc=\{d,d'\}$, $\ \text{Red}^\Imc=\{a,b,c,d\}$,
\end{itemize}
which satisfies $\ont$.
It can be shown that the collection of labeled examples $(E^+,E^-) $ with $E^+=\{(\Imc,c)\}$ and $E^-=\{(\Imc,a),(\Imc,a'), (\Imc,b), (\Imc,b'), (\Imc,c'),(\Imc,d), (\Imc,d')\}$ is a finite
characterisation of $C$ w.r.t. $\lang{\sqcap,\sqcup,\neg}$ under $\Omc$. 
On the other hand, the same set of 
examples is not a finite characterisation
under the empty ontology, because
$\text{Cat}\sqcap\text{Red}\sqcap\neg\text{Dog}$ also fits but is not equivalent to $C$
in the absence of the ontology.
Thus, in this case, the ontology helps to reduce the number of examples needed
to uniquely characterize $C$. Moreover, the ontology can rule out unnatural examples, since in this case the characterisation would need to include a negative example satisfying $\text{Cat}\sqcap\text{Red}\sqcap\text{Dog}$.
\end{example}
\begin{remark}\label{rem:exampledef}
    We have defined an example as a pointed finite interpretation $(\Imc,d)$ and we call such an example a \emph{positive example} for $C$ if $d\in C^\Imc$ and a \emph{negative example} otherwise. In some of the prior literature on learning description logic concepts, examples were instead defined as pointed ABoxes $(\Amc,a)$~\cite{DuplicateDBLP:conf/ijcai/FunkJL22,DBLP:conf/aaai/KonevOW16,DBLP:conf/aaai/OzakiPM20} \footnote{\revision{By a pointed ABox $(\Amc,a)$ we mean a finite set $\Amc$ of   assertions of the form $A(b)$ and $R(b,c)$ with $A$ a concept name,  $R$ a role name, and $b,c$ individual names (over the given signature), together with an individual name $a$ occurring in some assertion in $\Amc$.}}\revision{, while e.g. \cite{FortinKR2022} also work with finite interpretations (in their case, linear orders) as examples.} Such a pointed ABox is then considered a \emph{positive example} for $C$ under an ontology $\ont$ if
    $a$ is a certain answer for $C$ in $\Amc$ under $\ont$ (i.e., $a^\Imc\in C^\Imc$ for all interpretations with $\Imc\models\ont$ and $\Imc\models\Amc$) 
    , and a \emph{negative example} otherwise.
    Let us call the latter type of examples ``ABox-examples''. 
\thispagestyle{empty}
    Consider the concepts
    $C = \forall R . P$ and $C'=\forall R.Q$. Although they are clearly not equivalent, it is not possible to distinguish them by means of ABox-examples (under, say, an empty ontology), for the simple reason that
    there does not exist a
    positive ABox-example for either of these concepts. 
    This motivates our decision to define examples as pointed interpretations. This is tied to the fact that we study  concept languages that include universal quantification and/or negation. \revision{Hence, for concept languages contained in $\lang{\exists,\geq,\sqcap,\sqcup,\top,\bot}$, such as $\EL$, there is essentially no difference between finite characterisations using ABox-examples and finite characterisations using interpretation-examples in the absence of an ontology.}

    However, there are some further repercussions for having  interpretations instead of ABoxes as examples in the case with ontologies, even for such weaker concept languages. \revision{Intuitively, in the presence of a DL-Lite$^{\Hmc}$-ontology $\ont$, any ABox consistent with $\ont$ can be viewed as a succinct representation of a possibly infinite interpretation (i.e. its \emph{chase}).} Conversely, a finite interpretation can always be conceived of as an ABox. The difference
shows up as follows:
it was shown in~\cite{DuplicateDBLP:conf/ijcai/FunkJL22}
that $\lang{\exists,\sqcap,-}$ admits finite
characterisations under DL-Lite$^\Hmc$ ontologies 
in the setting where examples are pointed ABoxes. We see in Section~\ref{sec:withontologies} 
below 
that the same does not hold in our setting. 
\end{remark}

\section{Finite Characterisations without Ontology}
\label{sec:withoutontologies}

Our aim, in this section, is to prove
Thm.~\ref{thm:mainone} and~\ref{thm:maintwo}. We 
start by listing relevant known results.

\begin{theorem}\label{thm:knownresults}
The following results are known
(cf.~\cite{TOCLarxiv}).
\begin{enumerate}
\item
    $\lang{\exists,-,\sqcap,\top,\bot}$ admits
    polynomial-time computable characterisations (from \cite{BalderCQ}).
\item
    $\lang{\exists,-,\sqcap,\sqcup,\top,\bot}$ admits
    finite characterisations. However, finite characterisation are necessarily of exponential size (from \cite{Alexe2011CharacterizingSM}).
\item
    $\lang{\forall,\exists,\sqcap,\sqcup}$ admits
    finite characterisations.  However, finite characterisation are necessarily of non-elementary size (from \cite{TOCLarxiv}).
\item 
    Neither $\lang{\forall,\exists,\sqcap,\bot}$ nor $\lang{\forall,\exists,\sqcup,\top}$ admit finite characterisations (from \cite{TOCLarxiv}).
\end{enumerate}
\end{theorem}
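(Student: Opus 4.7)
The plan is to verify Theorem~\ref{thm:knownresults} clause by clause, since each item is an attribution to existing work; the real content is matching the fragments in our $\lang{\Obf}$-notation to the frameworks in the cited sources and checking that the correspondence preserves the notion of ``fitting.''

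For item~1, I would observe that a $\lang{\exists,-,\sqcap,\top,\bot}$-concept, evaluated at a distinguished element, is essentially a unary conjunctive query (with inverses) over the signature, modulo treating $\bot$ as the empty/unsatisfiable query and $\top$ as the trivially true query. The polynomial-time construction of finite characterisations for unary CQs in \cite{BalderCQ} then transfers directly: positive examples are obtained from the canonical tree-shaped model of the concept, and negative examples are obtained from a polynomial-time computable frontier of that model in the homomorphism preorder. For item~2, adding $\sqcup$ corresponds to moving to unions of conjunctive queries, where \cite{Alexe2011CharacterizingSM} gives a construction of (necessarily exponential) finite characterisations; the lower bound is witnessed by an explicit family of concepts (roughly, encoding DNF formulas) for which any characterisation must enumerate exponentially many distinct minimal positive examples.

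For items~3 and~4, I would invoke the classification of syntactic fragments of basic modal logic in~\cite{TOCLarxiv} via the standard translation of $\mathcal{ALC}$ into modal logic. Under this correspondence, $\lang{\forall,\exists,\sqcap,\sqcup}$ matches the positive $\{\Box,\Diamond,\wedge,\vee\}$-fragment of modal logic, for which \cite{TOCLarxiv} constructs finite characterisations of non-elementary size (the tower arises from the alternation of $\forall$ and $\exists$ in successive quantifier depths, which forces characterisations to iterate a ``frontier'' construction once per layer). The two negative results in item~4 correspond to the modal fragments $\{\Box,\Diamond,\wedge,\bot\}$ and $\{\Box,\Diamond,\vee,\top\}$, for which \cite{TOCLarxiv} exhibits explicit single concepts $C$ together with an infinite ascending family of fitting-but-inequivalent concepts, ruling out any finite characterisation.

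The main obstacle in this reduction to prior work is not the core combinatorial arguments (which live in the cited papers) but rather checking that the translations preserve the semantics as we have set them up: in particular, that the inverse-role construct $-$ in item~1 aligns with the standard CQ setting, that $\sqcup$ in item~2 corresponds exactly to UCQ-union (and not to disjunction inside quantifiers), and that the modal-logic signature conventions in \cite{TOCLarxiv}, including the treatment of $\top$ and $\bot$ as primitive versus derived, match our $\lang{\Obf}$-fragments. Once these routine matchings are in place, the theorem follows directly by citation, with no new argument required.
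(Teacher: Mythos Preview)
Your proposal is correct in spirit and essentially aligned with how the paper handles this theorem: the paper does not prove Theorem~\ref{thm:knownresults} at all, but simply records it as a list of known results with pointers to \cite{BalderCQ}, \cite{Alexe2011CharacterizingSM}, and \cite{TOCLarxiv}. So there is no ``paper's own proof'' to compare against beyond the bare citations; your sketch of the correspondences (concepts-as-CQs for item~1, the DNF/UCQ view for item~2, and the standard modal translation for items~3 and~4) is exactly the verification one would carry out, and it goes further than the paper itself does.

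One small point worth tightening in your write-up: for item~2 you flag as an obstacle that $\sqcup$ must correspond to UCQ-union ``and not to disjunction inside quantifiers,'' but in $\lang{\exists,-,\sqcap,\sqcup,\top,\bot}$ disjunction \emph{can} occur under $\exists$. The reason the UCQ correspondence still goes through is that $\exists R.(C\sqcup D)\equiv \exists R.C\sqcup\exists R.D$ and $\sqcap$ distributes over $\sqcup$, so every concept in this fragment is equivalent to a disjunction of $\lang{\exists,-,\sqcap,\top,\bot}$-concepts, i.e., a UCQ. This is routine, but since you explicitly list it as a potential obstacle, you should say why it is not one.
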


To establish Thm.~\ref{thm:mainone},
it remains to show that $\lang{\forall,\exists,\geq,\sqcap,\top}$ admits finite characterisations and that $\lang{\geq,\bot},\lang{\geq,\sqcup}$ and $\lang{\forall,\exists,-,\sqcap}$ do not admit finite characterisations.

\subsection{Finite Characterisations for \texorpdfstring{$\lang{\forall,\exists,\geq,\sqcap,\top}$}{L(forall,exists,geq,and,top)} }

The next results involve a method for constructing examples that forces fitting concepts to have a bounded role depth. 

\begin{restatable}{proposition}{knrdepthn}
\label{prop:knrdepthn}
\revision{Fix finite sets $\Sigma_C\subseteq \NC$ and 
$\Sigma_R\subseteq \NR$.
For all $n,k\geq 0$, there is a set of examples $E_{n,k}^+$, such that
    for all $\lang{\forall,\exists,\geq,\sqcap,\top}[\Sigma_C,\Sigma_R]$ concept expressions $C$,    
    the following are equivalent:}
    \begin{enumerate}
    \item $C$ is equivalent to a concept expression of role depth at most $n$ and maximum number restriction $k$;
    \item $C$ fits the positive examples $E^+_{n,k}$.  \end{enumerate}
    
\end{restatable}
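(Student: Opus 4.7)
The plan is to build $E^+_{n,k}$ out of one ``saturated'' example together with a family of ``witness'' examples indexed by pairs $(R,A)\in\Sigma_R\times\Sigma_C$. Let $\mathcal{U}$ be the complete $k$-regular tree of role depth $n$: its domain consists of all words $w$ of length at most $n$ over the alphabet $\Sigma_R\times\{1,\ldots,k\}$; there is an $R$-edge from $w$ to $w\cdot(R,i)$ whenever $|w|<n$; and every concept name in $\Sigma_C$ is true at every element. For each $(R,A)\in\Sigma_R\times\Sigma_C$, let $\mathcal{U}_{R,A}$ extend $\mathcal{U}$ by attaching, to \emph{every} leaf $v$ of $\mathcal{U}$, a fresh element $t_v$ reached by a new $R$-edge $v\to t_v$, putting $\neg A$ at $t_v$ (with every other concept name true there) and giving $t_v$ an $R'$-self-loop for every $R'\in\Sigma_R$. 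Take $E^+_{n,k}=\{(\mathcal{U},d_\circ)\}\cup\{(\mathcal{U}_{R,A},d_\circ):R\in\Sigma_R,\ A\in\Sigma_C\}$, each pointed at the root $d_\circ$.

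For direction $(1)\Rightarrow(2)$, I would prove by a nested induction on the depth $d$ of an element in $\mathcal{U}$ and on concept structure that every element at depth $d\leq n$ of $\mathcal{U}$ satisfies every $\lang{\forall,\exists,\geq,\sqcap,\top}[\Sigma_C,\Sigma_R]$-concept of role depth at most $n-d$ and maximum number restriction at most $k$. Since the added elements $t_v$ live at depth $n+1$, they have no effect on any such concept evaluated at the root of $\mathcal{U}_{R,A}$, so every concept equivalent to one satisfying the depth and restriction bounds fits $E^+_{n,k}$.

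For $(2)\Rightarrow(1)$ I argue the contrapositive. Suppose $C$ is not equivalent to any depth-$\leq n$, restriction-$\leq k$ concept; then some subconcept $S$ of $C$ cannot be replaced by $\top$ without altering the semantics of $C$, and $S$ itself witnesses the violation of the bound. I case split. (a) If $S$ has the form $\geq k'R.D$ with $k'>k$, then $S$ fails at every element of $\mathcal{U}$, so $C$ fails at $(\mathcal{U},d_\circ)$. (b) If $S$ is an $\exists R.D$ or $\geq k'R.D$ with $k'\geq 1$ and $D\not\equiv\top$, at role depth at least $n$ in $C$ reached from the root of $C$ along a path of only $\exists$- and $\geq$-quantifiers, then $S$ forces the traversal to reach depth $>n$ in the model, which $\mathcal{U}$ does not afford, so $C$ fails at $(\mathcal{U},d_\circ)$. (c) Otherwise, either $S$ itself is a $\forall R.D$ with $D\not\equiv\top$ at role depth at least $n$ in $C$, or $S$ lies below a $\forall$-quantifier in the prefix. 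I pick a concept name $A$ syntactically appearing in $D$ (any $A$ suffices if $D$ already forces some $\geq 2$-restriction) and evaluate in $(\mathcal{U}_{R,A},d_\circ)$: thanks to the $R'$-self-loops, $t_v$ behaves like a single-element interpretation in which $D$ collapses to its propositional content, which fails either because $\neg A$ holds at $t_v$ or because $t_v$ cannot supply $\geq 2$ distinct successors.

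The main obstacle is case (c), especially the \emph{propagation} step: the failure of the obstruction at the relevant depth-$n$ element $v$ in $\mathcal{U}_{R,A}$ must propagate through any surrounding prefix of $\forall$-, $\exists$-, and $\geq$-quantifiers in $C$ up to the root. The crucial design choice enabling this is attaching $t_v$ to \emph{every} leaf of $\mathcal{U}$: every depth-$n$ element is ``infected'', so no higher existential or number quantifier in $C$ can escape to a clean branch on which the inner $D$ would remain vacuously true. Concepts whose obstruction lies at depth strictly larger than $n$ are also handled because the self-loops at $t_v$ trap any further quantifier descent, making the evaluation of $D$ at $t_v$ match its evaluation on the one-point interpretation in which $\neg A$ and every other concept name hold with every role a self-loop.
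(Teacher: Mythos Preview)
Your direction $(1)\Rightarrow(2)$ is fine, but the construction fails in direction $(2)\Rightarrow(1)$. Take $\Sigma_C=\{A\}$, $\Sigma_R=\{R\}$, $n=k=1$, and consider $C=\forall R.\forall R.\exists R.\top$. In $(\mathcal{U},d_\circ)$ the leaf $v$ has no successors, so $\forall R.\exists R.\top$ holds vacuously at $v$ and hence $C$ holds at $d_\circ$. In $(\mathcal{U}_{R,A},d_\circ)$ the leaf $v$ has the single successor $t_v$, and since $t_v$ carries an $R$-self-loop, $\exists R.\top$ holds at $t_v$; hence $C$ again holds at $d_\circ$. So $C$ fits your $E^+_{1,1}$. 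But $C$ is not equivalent to any depth-$\leq 1$ concept: the pointed interpretations $a\to b\to c$ (with $c$ a deadlock, all points in $A$) and $a'\to b'$ (with $b'$ looping, all points in $A$) agree on every depth-$\leq 1$ concept over this signature yet disagree on $C$. Your case~(c) breaks exactly here: the inner subconcept $D=\exists R.\top$ contains no concept name and forces no $\geq 2$-restriction, so you have nothing to pick and nothing is falsified at $t_v$. More generally, a single self-looping witness cannot refute concepts whose ``deep'' part is built only from $\exists$, $\forall$, $\sqcap$, $\top$ and $\geq 1$.

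The paper's construction avoids this by a different decomposition. It first exhibits a two-element set $E^+_\top$ characterising $\top$ itself: a bare deadlock point (falsifying every $\exists R.D$) together with a point that has both a self-loop and a deadlock successor (so that any $\forall R.D$ true there forces $D$ to hold at both a looping point and a deadlock, whence $D\equiv\top$ by induction on depth). Crucially, \emph{no} concept name holds anywhere in $E^+_\top$. Then $E^+_{n,k}$ is obtained by prefixing each example of $E^+_\top$ with a width-$k$, length-$n$ ``fat path'' on which all concept names hold. The point is that characterising $\top$ in $\lang{\forall,\exists,\geq,\sqcap,\top}$ already requires the interaction of a deadlock and a loop; no family of single-point self-loop witnesses indexed by missing concept names can replace it.
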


\revision{
We illustrate the result for the 
special case where  $n=3$ and $k=2$. 
It suffices to choose as positive
examples the examples $(\Imc,d_1)$ and $(\Imc',d_1)$ based on the 
following interpretations:}
\[ \Imc:~~~~
\begin{tikzcd}[row sep = -2mm]
     & d^1_2 \arrow[r,""] \arrow[rdd,""] & d^1_3 \arrow[rd,""]& \\
d_1 \arrow[ru,""]\arrow[rd,""] &       &       & e & ~~ \\
     & d^2_2 \arrow[r,""] \arrow[ruu,""]& d^2_3 \arrow[ru,""]& 
\end{tikzcd}
\]
\[ \Imc':~~~~
\begin{tikzcd}[row sep = -2mm]
     & d^1_2 \arrow[r,""] \arrow[rdd,""] & d^1_3 \arrow[rd,""]& \\
d_1 \arrow[ru,""]\arrow[rd,""] &       &       & e \arrow[r,""] \arrow[loop,looseness=3] & e'\\
     & d^2_2 \arrow[r,""] \arrow[ruu,""]& d^2_3 \arrow[ru,""]& 
\end{tikzcd}
\]
\revision{where an arrow $d\to d'$ means that the
pair $(d,d')$ belong to the interpretation of
every role in $\Sigma_R$; and every element except
$e'$ belongs to the interpretation of every atomic concept in
$\Sigma_C$.
It can be shown that every concept fitting $(\Imc,e)$ and $(\Imc',e)$ as positive is equivalent to a depth 0 concept. It follows by backward induction
that
an $\lang{\forall,\exists,\geq,\sqcap,\top}[\Sigma_C,\Sigma_R]$-concept fits
$(\Imc,d_1)$ and $(\Imc',d_1)$ as positive
examples iff it has role depth
at most 3 and maximum number restriction 2.
}

As a consequence, we obtain the following.

\begin{restatable}{theorem}{ALEQcharacterisations}
\label{thm:ALEQcharacterisations}
$\lang{\forall,\exists,\geq,\sqcap,\top}$ admits finite characterisations.
\end{restatable}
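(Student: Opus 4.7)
The plan is to use Proposition~\ref{prop:knrdepthn} to reduce the problem to one over a finite set of candidate concepts, then distinguish $C$ from each non-equivalent candidate by a single finite example.

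More concretely, given a concept $C\in\lang{\forall,\exists,\geq,\sqcap,\top}[\Sigma_{\mathsf{C}},\Sigma_{\mathsf{R}}]$, let $n=\mathsf{dp}(C)$ and $k=\mathsf{nr}(C)$. I would take as the positive part of the characterisation the set $E^+_{n,k}$ from Proposition~\ref{prop:knrdepthn}, together with one canonical positive example for $C$ itself (for instance, a finite pointed interpretation satisfying $C$, which exists by Thm.~\ref{thm:finitecontrollability} applied to $C$ vs.\ $\bot$). The role of $E^+_{n,k}$ is to guarantee that any fitting concept $D$ is already equivalent to one with $\mathsf{dp}(D)\le n$ and $\mathsf{nr}(D)\le k$, so the remaining candidates to exclude all live inside a finite set of equivalence classes.

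Next I would establish the key finiteness fact: over the finite signature $(\Sigma_{\mathsf{C}},\Sigma_{\mathsf{R}})$, there are only finitely many $\lang{\forall,\exists,\geq,\sqcap,\top}$-concepts (up to equivalence) of role depth at most $n$ and maximum number restriction at most $k$. This follows by a straightforward induction on $n$: at depth $0$ the concepts are $\sqcap$'s of atomic concepts and $\top$, of which there are at most $2^{|\Sigma_{\mathsf{C}}|}$ many up to equivalence; at depth $n+1$ each concept is (equivalent to) a conjunction of depth-$0$ atoms and subconcepts of the form $\forall S.D$, $\exists S.D$, $\geq m\,S.D$ with $S\in \Sigma_{\mathsf{R}}$, $1\le m\le k$ and $D$ drawn from the (finite, by IH) set of depth-$n$ equivalence classes, which yields finitely many equivalence classes at depth $n+1$. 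Pick one representative $D_1,\dots,D_N$ per equivalence class with $\mathsf{dp}(D_i)\le n$ and $\mathsf{nr}(D_i)\le k$.

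For each $D_i$ that is not equivalent to $C$, Thm.~\ref{thm:finitecontrollability} provides a finite pointed interpretation $(\Imc_i,d_i)$ witnessing $C\not\equiv D_i$, i.e., with $d_i\in (C\sqcap\neg D_i)^{\Imc_i}$ or $d_i\in(\neg C\sqcap D_i)^{\Imc_i}$; add $(\Imc_i,d_i)$ to $E^+$ in the first case and to $E^-$ in the second. The resulting collection $E=(E^+,E^-)$ is finite, fits $C$ by construction, and any fitting $\lang{\forall,\exists,\geq,\sqcap,\top}$-concept $D$ is equivalent to some $D_i$ by Proposition~\ref{prop:knrdepthn}, and cannot be a $D_i\not\equiv C$ because then the corresponding distinguishing example would be misclassified; hence $D\equiv C$.

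The main obstacle is really packaged into Proposition~\ref{prop:knrdepthn}: without the depth/number-restriction bound on fitting concepts, one cannot reduce to a finite set of candidates (indeed, over an infinite range of depths $\lang{\forall,\exists,\geq,\sqcap,\top}$ has infinitely many non-equivalent concepts even over a fixed signature, so some bounding device is essential). Given that proposition, the remaining work is the finiteness-of-equivalence-classes induction sketched above and an invocation of finite controllability; the size of the resulting characterisation is necessarily non-elementary, matching the lower bound inherited from $\lang{\forall,\exists,\sqcap,\sqcup}$ in Thm.~\ref{thm:knownresults}.
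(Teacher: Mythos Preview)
Your proof is correct and follows essentially the same approach as the paper: use $E^+_{n,k}$ from Proposition~\ref{prop:knrdepthn} to bound depth and number restriction of any fitting concept, observe that only finitely many equivalence classes remain, and separate $C$ from each by a finite example via Theorem~\ref{thm:finitecontrollability}. Two small remarks: the extra ``canonical positive example for $C$'' is unnecessary (since $C$ already fits $E^+_{n,k}$ by Proposition~\ref{prop:knrdepthn}), and your final aside about a matching non-elementary lower bound is incorrect---the non-elementary lower bound in Theorem~\ref{thm:knownresults} is for $\lang{\forall,\exists,\sqcap,\sqcup}$, which is not a sublanguage of $\lang{\forall,\exists,\geq,\sqcap,\top}$, and the paper explicitly notes that no matching lower bound is known for the latter.
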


\revision{We show this by appending to $E^+_{n,k}$ as many positive and negative examples needed to distinguish a concept from all the finitely many other concepts with depth at most $n$ and number restriction at most $k$. Hence, we obtain} a non-elementary upper bound on the size of finite characterisations for $\lang{\forall,\exists,\geq,\sqcap,\top}$, since there are already $\text{tower}(n)$ (i.e. an iterated power $2^{\ldots^2}$ of height $n$) many pairwise non-equivalent $\lang{\exists,\sqcap,\top}$ concepts of role depth $n$. We do not have a matching lower bound for the above language, and it remains open whether there are more efficient approaches. In the next subsection, we show that the fragment $\lang{\exists,\geq,\sqcap,\top}$ \emph{does} admit an elementary (doubly exponential) construction of finite characterisations. 

\subsection{An Elementary Construction of Finite Characterisations for \texorpdfstring{$\lang{\exists,\geq,\sqcap,\top}$}{L(exists,geq,and,top)}}

Since $\exists$ is expressible as 
$\geq 1$, we restrict our attention to 
$\lang{\geq,\sqcap,\top}$ in what follows.

\begin{restatable}{theorem}{polytimesubsumption}
\label{thm:polytimesubsumption}
There is a polynomial time algorithm for testing if $\models C\sqsubseteq C'$, where $C,C'\in \lang{\geq,\sqcap,\top}$.
\end{restatable}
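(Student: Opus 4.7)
The plan is to design a recursive, memoized subsumption algorithm that descends on the structure of the right-hand concept. First, flatten $C$ and $C'$ into the normal form $X = A_1\sqcap \cdots \sqcap A_s \sqcap \geq k_1 R_1.X_1\sqcap \cdots \sqcap \geq k_n R_n.X_n$ (with $\top$ represented as the empty conjunction); this is a linear-time preprocessing, applied to each sub-concept considered. Define $\text{subs}(C,D)$ by cases on $D$: if $D = \top$, return true; if $D$ is a concept name $A$, return true iff $A$ appears as a top-level conjunct of $C$; if $D = D_1\sqcap D_2$, return $\text{subs}(C,D_1)\wedge \text{subs}(C,D_2)$; and if $D = \geq k'R.D'$, let $\geq k_1 R.C_1,\ldots,\geq k_m R.C_m$ be the top-level $R$-conjuncts of $C$, compute recursively $M = \{i : \text{subs}(C_i,D') = \text{true}\}$, and return whether $\max\{k_i : i\in M\}\geq k'$ (with the convention $\max\emptyset = 0$).

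The core correctness lemma, proved by induction on $|D|$, is the equivalence behind the last clause: $\models C\sqsubseteq \geq k' R. D'$ iff $\max\{k_i : C_i\sqsubseteq D'\}\geq k'$. Soundness is immediate: any $i^*\in M$ with $k_{i^*}\geq k'$ forces every model of $C$ to have at least $k'$ distinct $R$-successors in $C_{i^*}$, hence in $D'$. For completeness, suppose $k^* := \max\{k_i : i\in M\} < k'$ with $M = \{i : C_i\sqsubseteq D'\}$. Build a finite tree counter-model rooted at $d$ as follows. Attach $k^*$ fresh $R$-successors to $d$, each a disjoint copy of a finite tree model of $\bigsqcap_{i\in M} C_i$ (satisfiable, since $\lang{\geq,\sqcap,\top}$ contains no $\bot$); these jointly witness the conjuncts $\geq k_i R.C_i$ for all $i\in M$, and all of them satisfy $D'$. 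For each $i\notin M$, attach $k_i$ further fresh $R$-successors, each a copy of a finite model of $C_i$ in which $D'$ fails — which exists because $C_i\not\sqsubseteq D'$, combined with Thm.~\ref{thm:finitecontrollability} and the induction hypothesis. Then $d$ satisfies every top-level conjunct of $C$, yet only $k^* < k'$ of its $R$-successors satisfy $D'$.

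The main obstacle is the completeness direction of the $\geq$-case: one must carefully construct a counter-model that realizes all the $R$-conjuncts of $C$ at $d$ while keeping the number of $D'$-satisfying $R$-successors below $k'$. The key insight is that successors can be \emph{shared} across all $i\in M$ by taking a common model of $\bigsqcap_{i\in M}C_i$, whereas successors witnessing $C_i\not\sqsubseteq D'$ for $i\notin M$ must be taken \emph{disjointly} to avoid forcing them into $D'$. For complexity, every recursive call strictly decreases $|D|$, so across the memoized execution we visit at most $O(|C|\cdot |C'|)$ pairs of sub-concepts, each requiring work polynomial in $|C| + |C'|$ (including the comparisons of binary-encoded numbers), giving overall polynomial running time.
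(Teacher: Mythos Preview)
Your approach is essentially the paper's: both reduce the problem to the same decomposition lemma (your ``core correctness lemma'' is exactly Lemma~\ref{lem:ELQsubsumption}) and then apply dynamic programming over pairs of sub-concepts. One small omission in your counter-model: you must also make $d$ satisfy the top-level concept names of $C$ and give it successors for the non-$R$ role conjuncts of $C$ (the paper handles this via an auxiliary model $\Jmc$ of the $R$-free part of $C$), but this is a routine addition that does not affect the $R$-successor count.
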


\revision{The above theorem can be proved by showing that entailments between conjunctions of $\lang{\geq,\sqcap,\top}$-concepts can be reduced to entailments between the individual conjuncts.}


Next, we will use Theorem~\ref{thm:polytimesubsumption} to 
show that concepts in $\lang{\geq,\sqcap,\top}$ can be brought into a certain normal form in polynomial time.
Recall that every $\lang{\geq,\sqcap,\top}$-concept expression $C$ that is not a concept name is a conjunction $C_1\sqcap \cdots \sqcap C_n$ ($n\geq 0$) where each
$C_i$ is either an concept name or is of
the form
$\geq k R . C'$ where $C'$ is again a $\lang{\geq,\sqcap,\top}$-concept expression. Here we conveniently view
$\top$ as a shorthand for an empty conjunction.
We say that such a concept expression $C$ 
is \emph{irredundant} if it does not contain a 
conjunction that includes distinct conjuncts
$C_i, C_j$ with $\models C_i\sqsubseteq C_j$. 

\thispagestyle{empty}

\begin{restatable}{proposition}{propnormalformpolytime}
\label{prop:normalformpolytime}
Every $\lang{\geq,\sqcap,\top}$-concept  $C$ is equivalent to an irredundant $\lang{\geq,\sqcap,\top}$-concept, which can be 
computed in polynomial time.
\end{restatable}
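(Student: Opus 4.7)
The plan is to give a bottom-up algorithm that processes subconcepts in order of increasing role depth and, at each conjunction, iteratively removes any conjunct that is entailed by another conjunct, using Theorem~\ref{thm:polytimesubsumption} as the subsumption oracle. Concretely, any $\lang{\geq,\sqcap,\top}$-concept $C$ can be written in the flattened form $D_1\sqcap\cdots\sqcap D_n$ where each $D_i$ is either a concept name or a number restriction $\geq k_i R_i. D'_i$. The procedure $\mathrm{Norm}$ is: first recursively replace each $D'_i$ by $\mathrm{Norm}(D'_i)$ (so that all strictly smaller-depth subconcepts are already irredundant); then, while there exist distinct indices $i\neq j$ with $\models D_i\sqsubseteq D_j$, discard $D_j$ from the conjunction. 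Return the resulting (possibly empty, and thus equal to $\top$) conjunction.

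For correctness, note that dropping a conjunct $D_j$ that is entailed by another conjunct $D_i$ yields a semantically equivalent conjunction, and replacing a subconcept by an equivalent one inside $\geq k R.(\cdot)$ preserves equivalence; so $\mathrm{Norm}(C)\equiv C$ by induction on role depth. For irredundancy, after the removal loop at the outermost conjunction terminates, no two distinct top-level conjuncts are in a subsumption relation; moreover, each $D'_i$ (inside the surviving $\geq k_i R_i.D'_i$) is irredundant by the induction hypothesis, so the whole output satisfies the definition of irredundant concept.

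For the complexity bound, the crucial observation is that the procedure never introduces new subconcepts: every subconcept of the output is (equivalent to) a subconcept of the input, and the only operation performed is the removal of conjuncts. Hence the intermediate concept always has size at most $|C|$, and in particular the binary-encoded numerals $k_i$ are never changed. The number of recursive calls is bounded by the number of subconcepts of $C$, which is at most $|C|$. At each conjunction of width $m\le |C|$, the removal loop performs at most $m$ rounds, each round doing at most $m^2$ subsumption tests, each of polynomial cost by Theorem~\ref{thm:polytimesubsumption} applied to concepts of size at most $|C|$. This gives an overall polynomial running time. The main point to be careful about is precisely this size-monotonicity argument: one must verify that recursive normalization of a child $D'_i$ cannot cause a previously-removed conjunct at an outer level to become non-redundant, but this is automatic because we process strictly inside-out and only revisit an outer conjunction after all of its direct conjuncts have been fully normalized.
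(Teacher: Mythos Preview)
Your proposal is correct and takes essentially the same approach as the paper: repeatedly use Theorem~\ref{thm:polytimesubsumption} to detect, within each conjunction occurring in $C$, a pair of distinct conjuncts $C_i,C_j$ with $\models C_i\sqsubseteq C_j$, and remove the subsumed one; since only deletions are performed, size never grows, the number of removals is bounded by $|C|$, and polynomial time follows. Your write-up is more explicit about the inside-out recursion and the complexity bookkeeping, but the underlying algorithm and its analysis are the same as in the paper's (very terse) proof.
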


The following two lemmas, intuitively, tell us that, every pointed interpretation satisfying a concept
$C\in \lang{\geq,\sqcap,\top}$ contains a 
small pointed sub-interpretation that already satisfies $C$ \revision{and falsifies all $\lang{\geq,\sqcap,\top}$-concepts falsified in the original interpretation}. We write $\Imc\subseteq\Imc'$ to denote that $\Imc$ is a sub-interpretation of $\Imc'$ (that is, $\Delta^\Imc\subseteq\Delta^{\Imc'}$
and $X^\Imc\subseteq X^{\Imc'}$ for all 
$X\in \Sigma_C\cup\Sigma_R$).
\begin{restatable}{lemma}{smallsubinterpretation}
\label{lem:small-subinterpretation}
For every concept $C\in \lang{\geq,\sqcap,\top}$, interpretation $\Imc$, and $d\in C^\Imc$, there is an $\Imc'\subseteq \Imc$ with $d\in C^{\Imc'}$ and $|\Imc'|\leq |C|^{|C|}$.
\end{restatable}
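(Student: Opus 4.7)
The plan is to proceed by structural induction on $C$, constructing a small witnessing sub-interpretation at each step. The base cases $C=\top$ and $C=A$ (a concept name) are immediate: take $\Imc'$ to be the one-element sub-interpretation on $\{d\}$ with $d$ inheriting its atomic concept memberships from $\Imc$, so $|\Imc'|=1\leq |C|^{|C|}$.

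For the inductive step, the principal case is $C=\;\geq k\, R.C''$. Since $d\in C^\Imc$, we can fix pairwise-distinct $R$-successors $e_1,\ldots,e_k$ of $d$ in $\Imc$ with each $e_i\in (C'')^\Imc$. By the induction hypothesis applied to $(\Imc,e_i)$ and $C''$, we obtain $\Imc_i\subseteq\Imc$ with $e_i\in (C'')^{\Imc_i}$ and $|\Imc_i|\leq |C''|^{|C''|}$. We then set $\Imc'$ to be the union $\bigcup_i \Imc_i$ together with $d$, the edges $(d,e_i)\in R^{\Imc'}$, and $d$'s atomic concept memberships inherited from $\Imc$. Since every $\lang{\geq,\sqcap,\top}$-concept is preserved under extensions of interpretations, each $e_i$ remains in $(C'')^{\Imc'}$, so $d\in C^{\Imc'}$. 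The conjunction case $C=C_1\sqcap\cdots\sqcap C_n$ is handled by applying the induction hypothesis to each conjunct at $d$ (note that any atomic or $\top$ conjunct contributes nothing new) and taking the union of the resulting sub-interpretations, which remains a witness for $C$ at $d$ by preservation under extensions.

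The main obstacle is the size bookkeeping to reach the bound $|C|^{|C|}$, complicated by the fact that the number $k$ in a restriction $\geq k\, R.C''$ is encoded in binary and can be exponentially large in $|C|$. Writing $f(C)$ for the size of the sub-interpretation produced by the construction, we have the recurrences $f(\geq k\, R.C'')\leq 1 + k\cdot f(C'')$ and $f(C_1\sqcap\cdots\sqcap C_n)\leq \sum_i f(C_i)$. The target bound $|C|^{|C|}=2^{|C|\log|C|}$ has enough slack to absorb the factor $k\leq 2^{|C|-|C''|-O(1)}$ from binary encoding, since $|C''|<|C|$ strictly; the conjunction case follows from the elementary inequality $\sum_i |C_i|^{|C_i|}\leq |C|^{|C|}$ whenever $\sum_i |C_i|\leq |C|-1$. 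These arithmetic verifications are the only delicate step, but they are routine once the induction is set up.
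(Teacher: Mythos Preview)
Your proof is correct and follows essentially the same inductive approach as the paper's: the paper inducts on role depth after writing $C$ as a conjunction of atoms and $\geq$-subconcepts, applies the hypothesis to each of the $k_i$ witnesses of each $\geq k_i R_i.C_i$ conjunct, and takes the union of the resulting sub-interpretations (as the induced sub-interpretation on that domain), appealing to $\subseteq$-monotonicity just as you do. Your size bookkeeping is in fact more careful than the paper's on one point: the paper's chain of inequalities uses $k_i\leq\mathsf{nr}(C)\leq|C|$, which strictly speaking only holds under unary encoding of numbers, whereas you correctly account for binary-encoded $k\leq 2^{|C|-|C''|-O(1)}$ and verify that the target bound $|C|^{|C|}=2^{|C|\log|C|}$ still has enough headroom to absorb it.
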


Lem.~\ref{lem:small-subinterpretation} pairs well with
the next lemma, showing that 
$\lang{\geq,\sqcap,\top}$-concepts are $\subseteq$-monotone.

\begin{lemma}\label{lem:subset-monotone}
    For all concepts $C\in \lang{\geq,\sqcap,\top}$ and interpretations $\Imc\subseteq \Imc'$, it holds that $C^\Imc\subseteq C^{\Imc'}$.
\end{lemma}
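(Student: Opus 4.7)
The plan is a straightforward structural induction on the concept $C \in \lang{\geq,\sqcap,\top}$. Recall that $\Imc \subseteq \Imc'$ means $\Delta^\Imc \subseteq \Delta^{\Imc'}$, $A^\Imc \subseteq A^{\Imc'}$ for every concept name $A \in \Sigma_\mathsf{C}$, and $R^\Imc \subseteq R^{\Imc'}$ for every role name $R \in \Sigma_\mathsf{R}$. I would fix $\Imc \subseteq \Imc'$ once and for all and then prove, by induction on $C$, that $C^\Imc \subseteq C^{\Imc'}$.

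For the base cases, the claim $\top^\Imc = \Delta^\Imc \subseteq \Delta^{\Imc'} = \top^{\Imc'}$ is immediate from $\Delta^\Imc \subseteq \Delta^{\Imc'}$, and the case of a concept name $A$ is directly built into the definition of $\Imc \subseteq \Imc'$. For the inductive case $C = C_1 \sqcap C_2$, if $d \in (C_1 \sqcap C_2)^\Imc$ then $d \in C_1^\Imc \cap C_2^\Imc$, so by the induction hypothesis $d \in C_1^{\Imc'} \cap C_2^{\Imc'} = (C_1 \sqcap C_2)^{\Imc'}$.

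The only case that requires a moment's thought is $C = {\geq k\, R.C_0}$. Suppose $d \in ({\geq k\, R.C_0})^\Imc$. By definition there exist pairwise distinct witnesses $e_1,\dots,e_k \in \Delta^\Imc$ with $(d,e_i) \in R^\Imc$ and $e_i \in C_0^\Imc$ for each $i$. Using $\Delta^\Imc \subseteq \Delta^{\Imc'}$, the same $e_i$ live in $\Delta^{\Imc'}$ and remain pairwise distinct; using $R^\Imc \subseteq R^{\Imc'}$, we get $(d,e_i) \in R^{\Imc'}$; and using the induction hypothesis on $C_0$, we get $e_i \in C_0^{\Imc'}$. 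Hence $d \in ({\geq k\, R.C_0})^{\Imc'}$, as required.

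There is essentially no obstacle: the fragment contains no universal quantifier, no negation, and no $\bot$, so none of the non-monotone constructors appear, and the induction goes through uniformly. The lemma is really just the standard observation that positive-existential concepts are preserved under (sub)interpretation extensions. If one wanted to state the result in sharper form, one could also note that the argument gives monotonicity in each of $\Delta$, the $A^{(\cdot)}$ and the $R^{(\cdot)}$ separately.
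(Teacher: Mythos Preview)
Your proof is correct and matches the paper's own approach: the paper simply remarks that the lemma ``can be proved by a straightforward induction argument, or, alternatively, follows from the \L os--Tarski theorem,'' and your structural induction is exactly that straightforward argument.
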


This can be proved by a straightforward induction argument, or, alternatively, follows from the \L os-Tarski theorem.

\revision{It follows that, in order
to construct a finite characterisation of an $\lang{\geq,\sqcap,\top}$-concept, it suffices to pick as our positive examples 
all positive examples of size at most 
$|C|^{|C|}$. It only remains to explain how we construct the negative
examples. This is done through a novel \emph{frontier} construction construction for $\lang{\exists,\geq,\sqcap,\top}$.} Intuitively, a frontier for $C$ is finite set of concepts that are strictly weaker than $C$ and that separates $C$ from all concepts strictly weaker than $C$ 
(cf.~Figure~\ref{fig:frontier} for a graphical depiction).

\begin{definition}{(Frontier)}\label{def:frontier}
Given a concept language $\mathcal{L}$ and a concept $C\in\mathcal{L}$, a \emph{frontier} for $C$ w.r.t. $\mathcal{L}$ is a finite set of concepts $\{C_1,\ldots,C_n\}\subseteq\mathcal{L}$ such that:
\begin{enumerate}
    \item[(i)] $\models C\sqsubseteq C_i$ and $\not\models C_i\sqsubseteq C$ for all $1\leq i\leq n$, and
    \item[(ii)] for all $D\in\mathcal{L}$ with $\models C\sqsubseteq D$ and $\not\models D\sqsubseteq C$ we have $\models C_i\sqsubseteq D$ for some $1\leq i\leq n$.
\end{enumerate}
Frontiers \emph{relative to an ontology $\ont$} are 
defined similarly, except that 
each 
$\models$ or $\not\models$ is replaced by ``$\ont\models$'', resp.~``$\ont\not\models$''.
\end{definition}

 \begin{figure}
 \begin{center}
 \begin{tikzpicture}[level distance=25pt,sibling distance=6pt,grow'=up]
 \Tree [.\node(top){} ; \edge[draw=none];
 [.\text{$C$}
 \edge[-] ; [.\node(f1){$C_1$}; ]
 \edge[draw=none] ; [.\text{\dots} \edge[draw=none] ; [.\node(bot){$\top$}; ] ]
 \edge[-] ; [.\node(fn){$C_n$}; ] ]
 ]
 \draw (top) edge[out=180,in=180,looseness=1.6] (bot.west);
 \draw (top) edge[out=0,in=0,looseness=1.6] (bot.east);
\draw (f1) edge[out=-270,in=-140] (bot.west);
 \draw (fn) edge[out=-270,in=-40] (bot.east);
 \draw[dashed] (-1.2,1.5) rectangle (1.2,2.1);
 \end{tikzpicture}
 \end{center}
 \vspace{-3mm}
 \caption{A frontier in the concept subsumption hierarchy}
 \label{fig:frontier}
 \end{figure}
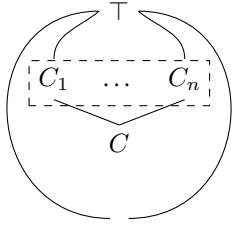

\begin{restatable}{theorem}{frontier}
\label{thm:frontier}
For each $C\in\lang{\geq,\sqcap,\top}$, one can construct a frontier $\mathcal{F}(C)$ for $C$ w.r.t. $\lang{\geq,\sqcap,\top}$ in polynomial time.
\end{restatable}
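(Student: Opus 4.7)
The plan is to construct the frontier $\mathcal{F}(C)$ by recursion on an irredundant normal form of $C$, whose existence and polynomial-time computability are given by Proposition~\ref{prop:normalformpolytime}. I would distinguish cases: set $\mathcal{F}(\top) = \emptyset$; $\mathcal{F}(A) = \{\top\}$ for a concept name $A$; $\mathcal{F}(\geq k R. D) = \{\geq (k{-}1) R. D\} \cup \{\geq k R. D' : D' \in \mathcal{F}(D)\}$, with $\geq 0\,R.D$ read as $\top$; and for a non-trivial conjunction $C = C_1 \sqcap \cdots \sqcap C_n$, $\mathcal{F}(C) = \bigcup_{i=1}^n \{(\bigsqcap_{j \neq i} C_j) \sqcap D : D \in \mathcal{F}(C_i) \cup \{\top\}\}$, where the choice $D = \top$ realises dropping the $i$-th conjunct. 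A straightforward induction then yields $|\mathcal{F}(C)| = O(|C|)$ with each element of size $O(|C|)$, and polynomial-time computability, invoking Theorem~\ref{thm:polytimesubsumption} only to prune equivalent entries.

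Condition (i) of Definition~\ref{def:frontier} I would verify by a separate induction. Irredundancy, combined with the ``reduction to conjuncts'' characterisation of subsumption underlying Theorem~\ref{thm:polytimesubsumption}, ensures that replacing one top-level conjunct by a strict weakening produces a strict weakening of the whole. For a number restriction $\geq k R. D$, the strictness of the $k \mapsto k{-}1$ and $D \mapsto D'$ weakenings is witnessed by an interpretation with exactly $k$ pairwise-distinct $R$-successors all satisfying $D$.

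The main obstacle is condition (ii): every $D \in \lang{\geq, \sqcap, \top}$ strictly weaker than $C$ must be subsumed by some member of $\mathcal{F}(C)$. Here the plan is induction on $|C|$, using that every subsumption $\bigsqcap_i C_i \sqsubseteq \geq k R. D_0$ is witnessed by a selection of number restrictions $\geq k_{i_l} R. D_{i_l}$ among the $C_i$ whose counts sum to at least $k$ and whose tails jointly refine $D_0$. Given a strict weakening $D$ of $C$, put into irredundant conjunctive form $\bigsqcap_j D_j$, this characterisation should pinpoint a single top-level position of $C$ at which a local weakening dominates every $D_j$ simultaneously. The most delicate sub-case arises when $C = \geq k R. D_0$ and some top-level conjunct of $D$ is a number restriction $\geq k R. D'_j$ with the same arity $k$ and $D_0 \sqsubsetneq D'_j$: one must show that the conjunction $\bigsqcap_j D'_j$ of the matched tails is itself a strict weakening of $D_0$, so that the induction hypothesis yields $D'' \in \mathcal{F}(D_0)$ with $D'' \sqsubseteq \bigsqcap_j D'_j$, making $\geq k R. D'' \in \mathcal{F}(C)$ the desired dominating entry. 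Checking that this single element of $\mathcal{F}(C)$ indeed dominates \emph{every} top-level conjunct of $D$ (including those with smaller arity) is where the argument needs the full strength of the subsumption characterisation.
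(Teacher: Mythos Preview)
Your construction for the number-restriction case is incorrect, and the error can be traced to the subsumption characterisation you rely on. Take $C = {\geq 2}\,R.(A \sqcap B)$. Your recipe yields $\mathcal{F}(A \sqcap B) = \{A, B\}$ and hence $\mathcal{F}(C) = \{{\geq 1}\,R.(A \sqcap B),\ {\geq 2}\,R.A,\ {\geq 2}\,R.B\}$. But
\[
E \;=\; {\geq 1}\,R.(A \sqcap B)\ \sqcap\ {\geq 2}\,R.A\ \sqcap\ {\geq 2}\,R.B
\]
is a strict weakening of $C$ (witness: a root with three $R$-successors, one in $A\cap B$, one in $A\setminus B$, one in $B\setminus A$) that is not entailed by any single member of your $\mathcal{F}(C)$, so condition~(ii) of Definition~\ref{def:frontier} fails. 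The underlying mistake is the claim that $\bigsqcap_i C_i \sqsubseteq {\geq k}\,R.D_0$ is witnessed by several conjuncts ``whose counts sum to at least $k$'': Lemma~\ref{lem:ELQsubsumption} says instead that a \emph{single} conjunct ${\geq k_i}\,R.C_i$ with $k_i \geq k$ and $\models C_i \sqsubseteq D_0$ must do the job. In the delicate sub-case you flag, the conjunction $\bigsqcap_j D'_j$ of matched tails is therefore not guaranteed to be a \emph{strict} weakening of $D_0$; in the example it equals $A \sqcap B = D_0$, so no $D'' \in \mathcal{F}(D_0)$ with $D'' \sqsubseteq \bigsqcap_j D'_j$ exists, and your induction does not go through.

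The paper avoids this by producing, for each number-restriction conjunct ${\geq k_i}\,R_i.C_i$, a \emph{single} frontier element in which that conjunct is replaced by the conjunction
\[
{\geq (k_i{-}1)}\,R_i.C_i \ \sqcap\ \bigsqcap_{D \in \mathcal{F}(C_i)} {\geq k_i}\,R_i.D,
\]
rather than splitting the decrement-$k$ and weaken-$C_i$ moves into separate frontier members. Conjoining all local weakenings is exactly what is needed for condition~(ii): any strict weakening $D$ of $C$ that ``attacks'' the $i$-th conjunct is then dominated by this one element. The cost is that individual frontier elements can exceed $|C|$ in size; the paper bounds the total frontier size by $O(|C|^4)$, not $O(|C|^2)$ as your per-element-$O(|C|)$ estimate would give.
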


\begin{restatable}{theorem}{ELQcharacterisations}
\label{thm:ELQcharacterisations}
$\lang{\geq,\sqcap,\top}$ admits finite characterisations of doubly exponential size.
\end{restatable}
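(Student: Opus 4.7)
The plan is to combine the frontier construction from Theorem~\ref{thm:frontier} with Lemmas~\ref{lem:small-subinterpretation} and~\ref{lem:subset-monotone}, following a familiar pattern: positive examples force every fitting $D$ to satisfy $\models C\sqsubseteq D$, while negative examples built from the frontier force $\models D\sqsubseteq C$.

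I would take $E^+$ to be the set of \emph{all} pointed interpretations $(\Imc,d)$ over $\Sigma_C\cup\Sigma_R$ with $d\in C^\Imc$ and $|\Imc|\leq |C|^{|C|}$. The key point is that this collection is universal in the following sense: for any $(\mathcal{J},e)$ with $e\in C^{\mathcal{J}}$, Lemma~\ref{lem:small-subinterpretation} produces a sub-interpretation $\mathcal{J}'\subseteq \mathcal{J}$ of size at most $|C|^{|C|}$ with $e\in C^{\mathcal{J}'}$, so that $(\mathcal{J}',e)\in E^+$. Hence if $D\in\lang{\geq,\sqcap,\top}$ fits $E^+$, then $e\in D^{\mathcal{J}'}$ and by monotonicity (Lemma~\ref{lem:subset-monotone}) also $e\in D^{\mathcal{J}}$, giving $\models C\sqsubseteq D$. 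Note in passing that because $\lang{\geq,\sqcap,\top}$ has no unsatisfiable concepts, $E^+$ is nonempty and $C$ trivially fits $E^+$ by construction.

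For $E^-$, I would include, for each $C'\in\mathcal{F}(C)$, one example $(\Imc_{C'},d_{C'})$ with $d_{C'}\in(C')^{\Imc_{C'}}\setminus C^{\Imc_{C'}}$. Such an interpretation exists: $\models C\sqsubseteq C'$ and $\not\models C'\sqsubseteq C$ by the frontier conditions, so $C$ and $C'$ are non-equivalent and Theorem~\ref{thm:finitecontrollability} yields a finite distinguishing interpretation, which must satisfy $C'$ but not $C$. By applying Lemma~\ref{lem:small-subinterpretation} to $C'$ we may assume $|\Imc_{C'}|\leq|C'|^{|C'|}$, and Lemma~\ref{lem:subset-monotone} guarantees that passing to the sub-interpretation preserves the failure of $C$. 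Completeness of the characterisation now follows the familiar frontier argument: if $D\in\lang{\geq,\sqcap,\top}$ fits $(E^+,E^-)$, then $\models C\sqsubseteq D$ by the previous paragraph, and if additionally $\not\models D\sqsubseteq C$, the defining property of $\mathcal{F}(C)$ supplies a $C'\in\mathcal{F}(C)$ with $\models C'\sqsubseteq D$, whence $(\Imc_{C'},d_{C'})$ satisfies $D$, contradicting $D$'s fitness on $E^-$. Therefore $D\equiv C$.

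For the size bound, $\mathcal{F}(C)$ has polynomial size by Theorem~\ref{thm:frontier}, so $|E^-|$ is polynomial and each negative example has size at most $|C'|^{|C'|}$, which is singly exponential in $|C|$. The set $E^+$ enumerates interpretations on at most $|C|^{|C|}$ elements over the fixed signature, of which there are at most $2^{\mathrm{poly}(|C|^{|C|})}$, i.e., doubly exponentially many, each of singly exponential size. This yields the claimed doubly exponential bound. The main technical work has already been done in Theorem~\ref{thm:frontier} (the polynomial-time frontier construction); what remains above is a clean combination of small sub-interpretations, $\subseteq$-monotonicity, and frontiers, with no genuinely new difficulty.
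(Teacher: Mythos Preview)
Your proposal is correct and follows essentially the same approach as the paper: take as $E^+$ all positive examples of size at most $|C|^{|C|}$, and as $E^-$ one small witness of $C'\sqcap\neg C$ for each $C'\in\mathcal{F}(C)$, then combine Lemmas~\ref{lem:small-subinterpretation} and~\ref{lem:subset-monotone} with the frontier property exactly as you describe. The only (harmless) difference is that you invoke Theorem~\ref{thm:finitecontrollability} to obtain a finite distinguishing interpretation before shrinking it, whereas the paper simply starts from an arbitrary witnessing interpretation and lets Lemma~\ref{lem:small-subinterpretation} produce finiteness directly; either way the resulting bound and argument are the same.
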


Thm.~\ref{thm:ELQlowerbound} below implies an exponential lower bound for the same problem, leaving us with a gap of one exponential.

\begin{remark}
\label{rem:refinement}
   The polynomial-time frontier construction from Thm.~\ref{thm:frontier} is interesting in its own right, particulary because 
frontiers are closely related to \emph{upward refinement operators}, a core component of 
DL learning systems such as the ELTL variant of
 DL Learner~\cite{DBLP:journals/ws/BuhmannLW16}. 
We can show that the function $\rho$
mapping each $\lang{\geq,\sqcap,\top}$-concept
to its frontier is an \emph{ideal upward refinement operator} as defined in~\cite{Lehmann2007FoundationsOR}.
This follows
from the fact that, for every  $\lang{\geq,\sqcap,\top}$-concept $C$ there 
are only finitely many $\lang{\geq,\sqcap,\top}$-concepts $D$, up to equivalence, for which $\models C\sqsubseteq D$. The latter, in turn, holds because $\models C\sqsubseteq D$ implies that $\mathsf{dp}(D)\leq \mathsf{dp}(C)$ and $\mathsf{nr}(D)\leq\mathsf{nr}(C)$.
\end{remark}

\thispagestyle{empty}

\subsection{Negative Results}

The following theorem summarizes our negative
result pertaining to the existence of finite
characterisations.

\begin{restatable}{theorem}{countingcupbot}
\label{thm:countingcup}
\label{thm:countingbot}
\label{thm:inversecap}
$\lang{\geq,\sqcup}$, $\lang{\geq, \bot}$ and $\lang{\forall,\exists,-,\sqcap}$ 
do not admit finite characterisations.
\end{restatable}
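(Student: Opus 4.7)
The plan, for each of the three languages, is to exhibit a single witness concept $C$ that admits no finite characterisation. The uniform strategy is to construct a parametric family $(D_n)_{n\ge 1}$ of concepts in the language, each inequivalent to $C$, such that for every finite labeled example set $E$ that $C$ fits, some $D_n$ (with $n$ chosen from a numerical invariant of $E$) also fits $E$. This suffices, because then the set of concepts fitting $E$ always contains a member $D_n\not\equiv C$, so no finite $E$ can uniquely characterize $C$.

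For $\lang{\geq,\sqcup}$, I would take $C:=A$ and $D_n:=A\sqcup(\geq n\,R.A)$, for $A\in\Sigma_{\mathsf{C}}$ and $R\in\Sigma_{\mathsf{R}}$. Positive examples, being in $A^{\Imc}$, automatically satisfy $D_n$ via the left disjunct. A negative example $(\Imc,d)$ (with $d\notin A^{\Imc}$) satisfies $D_n$ iff $d$ has at least $n$ distinct $R$-successors in $A^{\Imc}$, so taking $n:=1+\max_{(\Imc,d)\in E^-}|\{e\in A^{\Imc}:(d,e)\in R^{\Imc}\}|$ makes $D_n$ fit $E$; a star interpretation with a non-$A$ root having $n$ distinct $R$-children in $A$ witnesses $D_n\not\equiv A$.

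For $\lang{\geq,\bot}$, I would take $C:=\bot$, so $E^+=\emptyset$. Set $D_n:=\,\geq n\,R.A$ and pick $n$ strictly larger than $\max_{(\Imc,d)\in E^-}|\{e\in A^{\Imc}:(d,e)\in R^{\Imc}\}|$; no negative example satisfies $D_n$, so $D_n$ fits $E$, and since $D_n$ is clearly satisfiable (by a star with $n$ distinct $R$-successors in $A$), $D_n\not\equiv\bot$.

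For $\lang{\forall,\exists,-,\sqcap}$, the argument is the most delicate because neither $\top$ nor $\bot$ is available. The plan is to exploit the inverse operator and nest $\forall R^-$ quantifiers to arbitrary depth: take $C:=A$ and $D_n:=A\sqcap(\forall R^-)^n.\,B$ (with $n$-fold nesting of $\forall R^-$), where $A,B$ are distinct concept names. For a positive example $(\Imc,d)$ with $d\in A^{\Imc}$, the constraint $(\forall R^-)^n.\,B$ is vacuously satisfied at $d$ whenever no length-$n$ backward $R$-path exists from $d$ in $\Imc$, so for acyclic examples choosing $n$ larger than the longest backward $R$-path suffices, and $D_n\not\equiv A$ is witnessed by a fresh interpretation in which some $A$-element admits a length-$n$ $R^-$-path ending outside $B$. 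The main obstacle, which I expect to be the hardest step, is the cyclic case: a positive example may contain a backward $R$-cycle through a non-$B$ element, making $(\forall R^-)^n.\,B$ fail for all $n$. Handling it will require either refining the family $(D_n)$ by combining several inverse-constraint patterns whose joint vacuity can be arranged on the finitely many cycles of $E$, or shifting to a more structured witness (e.g.\ $A\sqcap\forall R.A$) whose positives have restricted cyclic behavior that the nesting construction can absorb.
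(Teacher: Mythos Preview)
Your arguments for $\lang{\geq,\sqcup}$ and $\lang{\geq,\bot}$ are correct and essentially identical to the paper's (the paper simply takes $n$ larger than the domain size of every example, a minor simplification of your successor-counting bound).

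For $\lang{\forall,\exists,-,\sqcap}$, however, you have correctly identified the obstruction to your own approach and have not resolved it; the suggested fixes do not work. A putative finite characterisation $(E^+,E^-)$ of $A$ may contain a positive example in which $d$ lies on, or can reach backward along $R$, a cycle through a non-$B$ element; then $(\forall R^-)^n.B$ fails at $d$ for \emph{every} $n$, so no member of your family $D_n = A\sqcap(\forall R^-)^n.B$ fits. ``Combining several inverse-constraint patterns'' is not a construction, and changing the target to $A\sqcap\forall R.A$ does not help: positive examples for that concept can still carry backward $R$-cycles through non-$B$ elements, defeating any purely universal nesting in the same way.

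The paper's construction is genuinely different and uses a ``there-and-back'' interaction of $\forall$, $\exists$, and inverse. It sets
\[
C_n \ :=\ [\forall R.]^n\,[\exists R^-.]^n\,[\exists R.]^{n+1}\,[\exists R^-.]^{n+1}\,A
\]
and proves (i) $\not\models A\sqsubseteq C_n$ and (ii) $\models A\sqcap\neg C_n\sqsubseteq height^{R}_{n}$. For (ii): if $d\in(A\sqcap\neg C_n)^\Imc$, unwinding $\neg C_n$ gives a length-$n$ $R$-path from $d$ to some $d'$ such that \emph{every} length-$n$ $R$-path into $d'$ starts at a point satisfying $[\forall R.]^{n+1}[\forall R^-.]^{n+1}\neg A$; in particular $d$ itself does, and together with $d\in A^\Imc$ this forces $d$ to have no outgoing $R$-path of length $n{+}1$. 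Hence any positive example for $A$ that falsifies $C_n$ has forward $R$-height exactly $n$. Choosing $n$ larger than every positive example's size makes this impossible (crucially, cycles are handled automatically: a cycle gives infinite $R$-height, not height $n$). Then, since $\sqcap$ is in the language, any concept not entailed by $A$ must be falsified on some positive example of a characterisation; so $C_n$ is satisfied on all of $E^+$, whence $A\sqcap C_n$ fits $(E^+,E^-)$ while being inequivalent to $A$. This height-forcing trick is precisely the missing idea in your third case.
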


In the remainder of this section, we establish
lower bounds on the size and/or the complexity of computing
finite characterisations. The following
lemma will be helpful for this.

\begin{restatable}{lemma}{lemcanposexample}
\label{lem:canposexamples}
Let $\sqcap\in\Obf$ and $(E^+,E^-)$ be a finite characterisation of a concept $C\in\lang{\Obf}$ w.r.t. $\lang{\Obf}$. For all $C'\in\lang{\Obf}$, if $\not\models C\sqsubseteq C'$ then $C'$ is falsified on some positive example in $E^+$. 
\end{restatable}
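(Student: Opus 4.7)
The plan is to argue by contrapositive using closure of $\lang{\Obf}$ under $\sqcap$. Suppose $C'\in\lang{\Obf}$ is satisfied on every positive example in $E^+$; the goal is to deduce $\models C\sqsubseteq C'$.

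The key construction is to consider the concept $D := C\sqcap C'$. Since $\sqcap\in\Obf$ and both $C,C'\in\lang{\Obf}$, we have $D\in\lang{\Obf}$, so $D$ is a legitimate candidate fitter. I will verify that $D$ fits $(E^+,E^-)$: for any positive example $(\Imc,d)\in E^+$, we have $d\in C^{\Imc}$ (because $C$ fits its own finite characterisation) and $d\in (C')^{\Imc}$ (by the supposition on $C'$), so $d\in D^{\Imc}$. For any negative example $(\Jmc,e)\in E^-$, since $\models D\sqsubseteq C$ and $e\notin C^{\Jmc}$, we obtain $e\notin D^{\Jmc}$.

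Now I invoke the characterisation property: every concept in $\lang{\Obf}$ that fits $(E^+,E^-)$ must be equivalent to $C$. Hence $D\equiv C$, i.e., $C\sqcap C'\equiv C$, which is precisely $\models C\sqsubseteq C'$. Contrapositively, if $\not\models C\sqsubseteq C'$ then $C'$ is falsified on at least one member of $E^+$.

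I do not foresee a technical obstacle here; the lemma is essentially a one-move application of the ``conjoin with the target'' trick. The only thing to be careful about is the direction of the containment check on $E^-$ (using $D\sqsubseteq C$ rather than the other way around), and making sure that $\sqcap\in\Obf$ is genuinely used so that $D$ lies in the language for which $(E^+,E^-)$ is a characterisation.
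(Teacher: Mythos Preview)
Your proposal is correct and follows essentially the same argument as the paper: both construct $C\sqcap C'$, verify it fits $(E^+,E^-)$, and appeal to the characterisation property to reach a contradiction (the paper phrases it as a direct contradiction, you as a contrapositive, but the content is identical). If anything, your write-up is slightly cleaner, since you make explicit that the conclusion $C\sqcap C'\equiv C$ is exactly $\models C\sqsubseteq C'$.
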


Using this lemma, we show:

\begin{restatable}{theorem}{ELQlowerbound}
\label{thm:ELQlowerbound}\
$\lang{\geq,\sqcap}$ does not admit polynomial size characterisations. In fact, there are $\lang{\geq,\sqcap}$-concepts $(C_i)_{i\in\mathbb{N}}$ of constant depth and number restriction such that every finite characterisation of $C_i$ has size at least $2^{|C_i|}$.
\end{restatable}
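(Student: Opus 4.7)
The plan is to exhibit a simple family of $\lang{\geq,\sqcap}$-concepts whose size grows linearly in a parameter $i$, but whose finite characterisations are forced to contain at least one positive example of size exponential in $i$. Fix a single concept name $A$ and a single role name $R$, and for each $i\geq 1$ set
\[
C_i \;\coloneqq\; \geq 2^i\, R.A,
\]
with $2^i$ written in binary, so that $|C_i|=i+O(1)$. Note that $C_i$ has role depth $1$ and contains a single number restriction, matching the quantitative claim in the theorem.

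The proof then combines Lemma~\ref{lem:canposexamples} with a one-line counting observation. I would apply the lemma with the competing concept $C'_i \coloneqq C_i \sqcap A \in \lang{\geq,\sqcap}$. First one verifies $\not\models C_i\sqsubseteq C'_i$: on the interpretation with domain $\{d,e_1,\ldots,e_{2^i}\}$, $A^\Imc=\{e_1,\ldots,e_{2^i}\}$, and $R^\Imc=\{(d,e_j) : 1\leq j\leq 2^i\}$, the element $d$ lies in $C_i^\Imc$ but not in $(C'_i)^\Imc$ since $d\notin A^\Imc$. Lemma~\ref{lem:canposexamples} then produces a positive example $(\Imc,d)\in E^+$ on which $C'_i$ is falsified. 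Because $(\Imc,d)$ is a positive example for $C_i$ we have $d\in C_i^\Imc$, so $d$ already witnesses at least $2^i$ pairwise distinct $R$-successors lying in $A^\Imc$, whence $|\Delta^\Imc|\geq 2^i + 1 = 2^{\Omega(|C_i|)}$, yielding the stated lower bound and, a fortiori, the non-existence of polynomial-size characterisations.

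I do not foresee any substantial obstacle. The only point requiring some care is that the witness concept $C'_i$ must itself lie in the restricted language $\lang{\geq,\sqcap}$ -- without access to $\top$, $\forall$, or $\neg$ -- but the plain conjunction $C_i \sqcap A$ of the target with a concept name already suffices. The resulting single-exponential lower bound matches the doubly-exponential upper bound from Theorem~\ref{thm:ELQcharacterisations} up to one exponential, the gap already flagged in the text.
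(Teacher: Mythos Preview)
Your argument establishes the first sentence of the theorem, but it does \emph{not} prove the stronger ``In fact'' clause, and this is where the real content lies. The theorem promises concepts $C_i$ of constant depth \emph{and constant number restriction}; here ``number restriction'' refers to $\mathsf{nr}(C_i)$, the largest integer occurring in any $\geq k$ (see the definition of $\mathsf{nr}$ in the preliminaries). Your family $C_i=\;\geq 2^i\,R.A$ has $\mathsf{nr}(C_i)=2^i$, which is unbounded. You appear to have read ``constant number restriction'' as ``a constant \emph{number of} number restrictions'', but that is not what is being claimed. Relatedly, the paper states in the preliminaries that all lower bounds hold even when numbers are represented in unary; under unary encoding your $C_i$ has size $2^i+O(1)$ and the bound $|\Delta^\Imc|\geq 2^i$ becomes linear in $|C_i|$, so your argument collapses entirely in that regime.

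The paper's proof is genuinely more intricate for exactly this reason. It works over a signature with $3n$ concept names $A_{ij}$ and builds a concept $C$ (depending on $n$) using only $\exists R$ and $\geq 2\,R$, so $\mathsf{nr}(C)\leq 2$ throughout. The combinatorial idea is to set up, for each $i\leq n$, a pair $(D_i,D'_i)$ of concepts neither of which is entailed by $C_i$ but whose disjunction is; then for each of the $2^n$ choice functions $\sigma\in\prod_i\{D_i,D'_i\}$ one forms $D_\sigma=\exists R.\bigsqcap_i\sigma(i)$, shows $\not\models C\sqsubseteq D_\sigma$ (so by Lemma~\ref{lem:canposexamples} some positive example must refute $D_\sigma$), and crucially shows that any positive example refuting $D_\sigma$ must satisfy every other $D_{\sigma'}$. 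Hence the $2^n$ concepts require $2^n$ \emph{distinct} positive examples. Your approach obtains one large example; the paper's approach obtains exponentially many examples while keeping all counting thresholds at $2$.
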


The next lemma establishes a connection between the problem of computing finite characterisations and the problem of checking subsumptions between concepts.
\begin{restatable}{lemma}{lempolycharimpliessubsumptioninp}
\label{lem:polychar_implies_subsumptioninP}
If $\lang{\Obf}$ admits polynomial characterisations, $\sqcap\in\Obf$ and $\lang{\Obf}$ has a polynomial time model checking problem (in combined complexity), then subsumptions between $\lang{\Obf}$ concepts can be tested in polynomial time.
\end{restatable}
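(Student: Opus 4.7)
The plan is to reduce an arbitrary subsumption test $\models C\sqsubseteq D$ (with $C,D\in\lang{\Obf}$) to the computation of a finite characterisation of $C$ followed by polynomially many model checks. By the assumption that $\lang{\Obf}$ admits polynomial-time computable characterisations, I can compute, in time polynomial in $|C|$ and $|\Sigma_{\mathsf{C}}|+|\Sigma_{\mathsf{R}}|$, a finite characterisation $(E^+,E^-)$ of $C$ w.r.t.\ $\lang{\Obf}$, consisting of polynomially many examples each of polynomial size. The algorithm will then evaluate, for each $(\Imc,d)\in E^+$, whether $d\in D^{\Imc}$, using the polynomial-time combined-complexity model checking algorithm guaranteed by hypothesis, and accept iff all of these tests succeed.

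The correctness claim I will prove is that $\models C\sqsubseteq D$ if and only if $D$ is satisfied on every positive example in $E^+$. The forward direction is immediate, since each $(\Imc,d)\in E^+$ satisfies $d\in C^\Imc$, and hence $d\in D^\Imc$ by the subsumption. The reverse direction is where the assumption $\sqcap\in\Obf$ is used, via Lemma~\ref{lem:canposexamples}: if $\not\models C\sqsubseteq D$, then $D$ must be falsified on at least one positive example in $E^+$. Contrapositively, if $D$ holds throughout $E^+$, then $\models C\sqsubseteq D$.

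The complexity analysis is routine: the polynomial-size characterisation gives rise to polynomially many model-checking instances, each solvable in time polynomial in $|D|+|\Imc|$ by assumption, so the whole procedure runs in time polynomial in $|C|+|D|+|\Sigma_{\mathsf{C}}|+|\Sigma_{\mathsf{R}}|$. The only slightly delicate point in the proof is recognising that Lemma~\ref{lem:canposexamples} supplies exactly the equivalence between subsumption and behaviour on positive examples that makes the reduction work; notably, the negative examples $E^-$ play no role in the argument, and all that is really required of the characterisation is its positive part together with the guarantee that no strictly weaker concept is missed.
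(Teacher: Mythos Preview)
Your proof is correct and follows essentially the same approach as the paper: compute a polynomial-time characterisation of $C$, invoke Lemma~\ref{lem:canposexamples} to reduce the subsumption test to checking whether $D$ holds on all positive examples in $E^+$, and appeal to polynomial-time model checking for the latter. Your write-up is in fact more explicit than the paper's about the forward direction and about the irrelevance of $E^-$, but the argument is the same.
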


 This allows us to use hardness results from the literature on subsumption algorithms to establish hardness of computing certain example sets.
Note that 
the converse of Lem.~\ref{lem:polychar_implies_subsumptioninP} fails, since by Theorem~\ref{thm:polytimesubsumption} $\lang{\geq,\sqcap,\top}$ has a polynomial time subsumption problem but does not admit
polynomial-time computable characterisations
(Thm.~\ref{thm:ELQlowerbound}).

The following subsumption hardness result follows
from~\cite{DONINI1992}. 

\begin{restatable}{proposition}{NPhardness}
\label{prop:NPhardness}
Subsumption checking for $\lang{\forall,\exists,\sqcap}$ is $\mathsf{NP}$-hard.
\end{restatable}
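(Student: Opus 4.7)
The plan is to adapt the classical polynomial-time reduction of Donini et al.~1992. A word on conventions first: under polynomial-time many-one reductions, subsumption in $\lang{\forall,\exists,\sqcap}$ is actually coNP-complete (a polynomial-size counter-model can always be guessed and verified). The proposition's phrasing ``NP-hard'' should therefore be understood either in the polynomial-time Turing-reduction sense (under which NP-hardness and coNP-hardness coincide) or as the DL-literature shorthand for intractability under $\mathrm{P}\neq\mathrm{NP}$. The reduction I describe establishes both, so it suffices for the proposition as stated.

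Concretely, I would reduce from 3-SAT to the \emph{non-subsumption} problem. Given a 3-CNF formula $\phi=c_1\wedge\cdots\wedge c_m$ on variables $x_1,\ldots,x_n$, I would construct in polynomial time a pair of $\lang{\forall,\exists,\sqcap}$-concepts $C_\phi,D_\phi$ over a signature with a single role $R$ and concept names $P_i, N_i$ for each variable (representing the two polarities of $x_i$ without invoking genuine negation), plus a few auxiliary concept names for bookkeeping. The construction would be arranged so that $\phi$ is satisfiable iff $C_\phi\not\sqsubseteq D_\phi$. From a satisfying assignment $\alpha$ one builds an explicit finite pointed interpretation $(\Imc,d)$ witnessing $d\in C_\phi^{\Imc}\setminus D_\phi^{\Imc}$; conversely, any counter-model to the subsumption is shown, by inspecting the canonical tree structure, to exhibit a coherent truth assignment at some distinguished node, which is then forced by the structure of $D_\phi$ to satisfy every clause of $\phi$.

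The central technical difficulty is encoding Boolean satisfiability without disjunction or negation. The standard trick exploits the characteristic inequivalence between $\exists R.A\sqcap\exists R.B$ (witnessable by two distinct $R$-successors) and $\exists R.(A\sqcap B)$ (requiring a single successor satisfying both), combined with $\forall R.\psi$, which propagates $\psi$ to every $R$-successor. Suitably nested, these let one build gadgets that ``branch'' on a variable's truth value and that, per clause, force one of the three literal-witnesses to coincide with the variable-branch chosen at the root. The hardest step is assembling these gadgets so that the models of $C_\phi$ violating $D_\phi$ are in correspondence with satisfying assignments of $\phi$; because $\lang{\forall,\exists,\sqcap}$ cannot directly express ``$x_i$ is true or false'', this correspondence must be enforced implicitly through the branching structure of the tree models. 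I would follow Donini et al.'s original construction for this step rather than reinvent it; once it is fixed, polynomial size of the reduction and correctness are verified by a routine induction on the canonical tree models.
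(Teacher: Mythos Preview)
Your approach differs from the paper's. The paper does not rebuild a SAT reduction; instead it invokes Donini et al.\ for NP-hardness of subsumption in $\lang{\forall,\exists,\sqcap,\top}$ over the signature $\Sigma_{\mathsf{C}}=\emptyset$, $\Sigma_{\mathsf{R}}=\{R\}$, and then gives a short polynomial-time reduction eliminating $\top$: introduce a single fresh concept name $B$, replace every occurrence of $\top$ in $C,D$ by $B$, and conjoin $C_0=\bigsqcap_{n\le k}[\forall R.]^n B$ (where $k=\max(\mathsf{dp}(C),\mathsf{dp}(D))$) to both sides. One then checks that $\models C\sqsubseteq D$ iff $\models C'\sqcap C_0\sqsubseteq D'\sqcap C_0$, which is a two-line argument. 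This is modular and brief; your route would be self-contained but requires carrying out the full gadget construction.

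Two points on your sketch. First, you say you would ``follow Donini et al.'s original construction'', but that construction uses $\top$ and no concept names, whereas your $P_i,N_i$ encoding uses concept names and no $\top$ --- so you would in fact be designing a new reduction, not reusing theirs. The paper's $\top$-elimination trick is precisely what avoids this re-engineering. Second, your reduction is from SAT to \emph{non}-subsumption, yielding coNP-hardness of subsumption rather than the NP-hardness literally stated; and your opening assertion that subsumption here is ``actually coNP-complete'' conflicts with what Donini et al.\ establish (they place subsumption in NP via a non-trivial certificate argument, and the small-counter-model intuition does not straightforwardly survive the interaction of $\forall$ with multiple $\exists$-conjuncts). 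For the downstream application in Corollary~\ref{cor:ALE_NPhardness} either form of hardness would suffice, but as a proof of the proposition as stated your argument establishes a formally different claim.
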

\noindent
Using Lem.~\ref{lem:polychar_implies_subsumptioninP} 
we can the establish our next result, resolving an open question from \cite{TOCLarxiv}. 

\begin{corollary}\label{cor:ALE_NPhardness}
$\lang{\forall,\exists,\sqcap}$ does not admit polynomial time computable characterisations, unless $\mathsf{P}=\mathsf{NP}$.
\end{corollary}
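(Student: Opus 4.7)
The plan is to derive this directly by combining the three ingredients already assembled in the excerpt: the meta-reduction in Lem.~\ref{lem:polychar_implies_subsumptioninP}, the tractability of model checking from Thm.~\ref{thm:modelchecking}, and the subsumption hardness in Prop.~\ref{prop:NPhardness}. The argument is a one-line contrapositive, so the ``proof'' is really just checking that all three hypotheses of Lem.~\ref{lem:polychar_implies_subsumptioninP} are satisfied by the language $\lang{\forall,\exists,\sqcap}$.

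Concretely, I would set $\Obf=\{\forall,\exists,\sqcap\}$ and verify the three premises. First, $\sqcap\in\Obf$ by definition, so the conjunction hypothesis of Lem.~\ref{lem:polychar_implies_subsumptioninP} holds. Second, since $\lang{\forall,\exists,\sqcap}$ is a syntactic fragment of $\mathcal{ALCQI}=\langfull$, the combined-complexity polynomial-time bottom-up model checking procedure from Thm.~\ref{thm:modelchecking} applies (one can simply restrict the enumeration of subconcepts to those using only $\forall,\exists,\sqcap$). Hence $\lang{\forall,\exists,\sqcap}$ has polynomial-time combined-complexity model checking, supplying the second hypothesis of Lem.~\ref{lem:polychar_implies_subsumptioninP}.

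Now assume for contradiction that $\lang{\forall,\exists,\sqcap}$ admits polynomial-time computable characterisations. Then Lem.~\ref{lem:polychar_implies_subsumptioninP} yields a polynomial-time algorithm for deciding $\models C\sqsubseteq D$ between $\lang{\forall,\exists,\sqcap}$-concepts. But Prop.~\ref{prop:NPhardness} tells us that this subsumption problem is $\mathsf{NP}$-hard, so we would conclude $\mathsf{P}=\mathsf{NP}$. Contrapositively, under the assumption $\mathsf{P}\neq\mathsf{NP}$, no such polynomial-time characterisation algorithm can exist, which is exactly Cor.~\ref{cor:ALE_NPhardness}.

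Since every ingredient is already in place, there is essentially no obstacle; the only thing worth being careful about is the verification that the hypotheses of Lem.~\ref{lem:polychar_implies_subsumptioninP} genuinely transfer to $\lang{\forall,\exists,\sqcap}$ (in particular, that passing to a sublanguage of $\mathcal{ALCQI}$ preserves polynomial-time model checking, which is immediate from the construction in the proof of Thm.~\ref{thm:modelchecking}). No new combinatorial or model-theoretic work is required.
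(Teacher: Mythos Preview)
Your proposal is correct and matches the paper's approach exactly: the corollary is stated immediately after Prop.~\ref{prop:NPhardness} with the remark that it follows using Lem.~\ref{lem:polychar_implies_subsumptioninP}, and your verification of the hypotheses (that $\sqcap\in\Obf$ and that model checking is polynomial via Thm.~\ref{thm:modelchecking}) is precisely what is needed.
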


Theorems~\ref{thm:mainone} and~$\ref{thm:maintwo}$ follow 
from the results proved in this section and prior results that we recalled in Thm.~\ref{thm:knownresults}.   


\section{Finite Characterisations With Ontologies}
\label{sec:withontologies}

In this section we are concerned with finite characterisations of concepts w.r.t. ontologies.

\subsection{\texorpdfstring{$\lang{\exists,\sqcap,\top,\bot}$}{L(exists,and,top,bot)} Admits Finite Characterisations Under DL-Lite Ontologies}\label{sec:el-dllite}



We show that $\lang{\exists,\sqcap,\top,\bot}$ admits 
finite characterisations under DL-Lite  ontologies. 
%
 %
 The proof uses the following key lemma from~\cite{DuplicateDBLP:conf/ijcai/FunkJL22}.
 Recall the definition of a frontier relative to an ontology
 (Def.~\ref{def:frontier}).

 \begin{lemma} (\cite{DuplicateDBLP:conf/ijcai/FunkJL22})
 \label{lem:frontiers-el-dllite}
  There is a polynomial-time algorithm that takes
  as input a $\lang{\exists,\sqcap,\top}$-concept $C$ and a DL-Lite ontology $\ont$ such 
  that $C$ is satisfiable w.r.t.~$\ont$, and outputs a frontier 
  for $C$ w.r.t.~$\lang{\exists,\sqcap,\top}$ relative to $\ont$.
\end{lemma}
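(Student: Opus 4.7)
The plan is to prove the lemma by exploiting the well-known connection between $\lang{\exists,\sqcap,\top}$-subsumption and simulations/homomorphisms between tree-shaped canonical models, extended to account for the DL-Lite ontology via a chase-like construction. First I would compute a finite tree-shaped representation $T_C$ of $C$ whose nodes are labelled by sets of basic $\lang{\exists,\sqcap,\top}$-concepts and $T_C$ is saturated under the positive inclusions of $\ont$; for each concept name $A$ and basic concept $\exists R$ appearing in $T_C$, I would enrich its label by all basic $B$ with $\ont\models A\sqsubseteq B$, and similarly propagate along edges. Since DL-Lite has only polynomially many basic concepts over the signature of $C$ and $\ont$, this saturated tree has size polynomial in $|C|+|\ont|$ and faithfully captures $\ont$-subsumption of $\lang{\exists,\sqcap,\top}$-concepts into $C$.

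The next step is to define a polynomial family of \emph{atomic weakening moves} on $T_C$: (i) delete a leaf edge (i.e., drop an $\exists R$-conjunct at some node); (ii) remove a single concept-name conjunct from a node's label, provided the label remains nonempty-or-$\top$ and the removal is not recovered by $\ont$-saturation of the surrounding context; (iii) replace an outgoing edge with role $R$ by an edge with a role $R'$ such that $\ont\models R\sqsubseteq R'$ (vacuous for pure DL-Lite, but relevant for DL-Lite$^\Hmc$-style variants). For each atomic move, I re-read the resulting tree as an $\lang{\exists,\sqcap,\top}$-concept, obtaining a candidate frontier element $C_i$. Using satisfiability of $C$ w.r.t.~$\ont$ (the hypothesis of the lemma), I can test in polynomial time whether each candidate is strictly weaker than $C$ under $\ont$ by running the standard polynomial-time $\ont$-subsumption procedure for $\lang{\exists,\sqcap,\top}$ under DL-Lite ontologies in both directions, and discard those that turn out to be equivalent to $C$.

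Verification splits into soundness and completeness. Soundness—that each surviving $C_i$ satisfies $\ont\models C\sqsubseteq C_i$ and $\ont\not\models C_i\sqsubseteq C$—is almost immediate from the construction and the filtering step. The main obstacle is \textbf{completeness}: given an arbitrary $D\in\lang{\exists,\sqcap,\top}$ with $\ont\models C\sqsubseteq D$ and $\ont\not\models D\sqsubseteq C$, I must exhibit some $C_i$ with $\ont\models C_i\sqsubseteq D$. My plan here is to use the simulation characterisation of $\ont$-subsumption on the saturated trees: since $D$ does not $\ont$-entail $C$, there is a ``first point of failure'' of any simulation from $T_C$ into the saturated tree $T_D$—either a conjunct of a node label in $T_C$ that is not matched, or a child subtree of $T_C$ that has no matching child in $T_D$. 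This point of failure picks out one of the atomic weakening moves, and the corresponding $C_i$ can then be shown to $\ont$-entail $D$ by composing the partial simulation with the canonical identity-on-the-rest simulation into the weakened tree. Making this precise requires careful bookkeeping to ensure that ``minimal'' failures in $T_D$ correspond exactly to the atomic moves enumerated in step two, and is where the delicate interaction between DL-Lite's basic-concept inclusions and the existential tree structure of $C$ must be handled—this is the technical heart of the argument and, modulo routine simulation lemmas, delivers the lemma.
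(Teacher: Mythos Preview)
The paper does not prove this lemma; it is stated with a citation to Funk, Jung and Lutz (IJCAI 2022) and used as a black box in the proof of Theorem~\ref{thm:EL-poly-char-DLLite}. There is therefore no in-paper argument to compare your sketch against.

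As a standalone proof, your outline has the right shape for the ontology-free case but has a genuine gap once DL-Lite existential axioms enter. Your three atomic moves act only on material already present in the tree $T_C$: drop a concept-name label, delete a leaf edge, or (vacuously here) relabel a role. A DL-Lite axiom $A\sqsubseteq\exists R$, however, can force frontier elements that are not reachable by any such move. Take $\ont=\{A\sqsubseteq\exists R\}$ and $C=A$. The tree $T_C$ is a single node with no edges; the only applicable move is ``drop $A$'', and reading the resulting tree back as an $\lang{\exists,\sqcap,\top}$-concept yields the candidate set $\{\top\}$. But $\exists R.\top$ lies strictly between $A$ and $\top$ relative to $\ont$ (we have $\ont\models A\sqsubseteq\exists R.\top$, while $\ont\not\models\exists R.\top\sqsubseteq A$ and $\ont\not\models\top\sqsubseteq\exists R.\top$), so for $D=\exists R.\top$ no candidate $\ont$-entails $D$, and condition~(ii) of Definition~\ref{def:frontier} fails. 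A correct frontier here is $\{\exists R.\top\}$.

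You hint that label saturation would put the basic concept $\exists R$ into the root label; but then ``re-read the resulting tree as an $\lang{\exists,\sqcap,\top}$-concept'' is undefined, since $\exists R$ is not an $\lang{\exists,\sqcap,\top}$-concept and the tree has no $R$-edge to supply one. Resolving this requires an additional kind of move --- essentially, replacing a concept name $A$ (or a basic concept $\exists R$ in a saturated label) by a fresh $\exists R.\top$ subtree whenever $\ont$ licenses it --- and arguing that this stays polynomial and interacts correctly with your ``first point of failure'' completeness argument. That extra move, and the bookkeeping around it, is precisely the nontrivial content of the cited construction and is not covered by your three moves as written.
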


Our second ingredient is a polynomial-time \emph{canonical
model construction} for $\lang{\exists,\sqcap,\top}$-concepts w.r.t.~DL-Lite ontologies. 
Intuitively, one can think of a canonical model,
for a concept $C$ and ontology $\ont$, as a
minimal positive example, i.e., a
positive example for $C$ under $\ont$ that
``makes as few other concepts true as possible''. We show that every $\lang{\exists,\sqcap,\top}$ concept that is satisfiable w.r.t. a given DL-Lite ontology $\ont$ admits a polynomial-time computable canonical model w.r.t.~$\ont$ that satisfies its CIs.


\begin{restatable}{theorem}{thmcanmod}
\label{thm:canmod}
There is a polynomial-time algorithm that takes a $\lang{\exists,\sqcap,\top}$-concept $C$
    and a DL-Lite ontology $\ont$,
    where $C$ is satisfiable w.r.t.~$\ont$, and that 
    produces a pointed interpretation 
    $(\Imc_{C,\ont},d_C)$
    (which we will call the \emph{canonical model} of $C$ under $\ont$) with the following
    properties.
    \begin{enumerate}
    \item for all $\lang{\exists,\sqcap,\top}$-concepts $D$, $d_C\in D^{\Imc_{C,\ont}}$
        iff $\ont\models C\sqsubseteq D$. In particular, we have that $d_C\in C^{\Imc_{C,\ont}}$.
        \item Moreover, $\Imc_{C,\ont}\models\ont$. 
    \end{enumerate}  
\end{restatable}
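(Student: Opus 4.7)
The plan is to build $(\Imc_{C,\ont}, d_C)$ in two stages: first an $\EL$-style tree-unfolding of $C$ that ignores $\ont$, then a ``folded chase'' that appends polynomially many anonymous witnesses satisfying $\ont$. In Stage~1, parse $C$ recursively as $A_1\sqcap\cdots\sqcap A_k \sqcap \exists R_1.C_1\sqcap\cdots\sqcap \exists R_m.C_m$ (with $\top$ as the empty conjunction), recursively build $(\Imc_{C_j}, d_{C_j})$, then assemble $\Imc_1$ by taking the disjoint union and adding a fresh root $d_C$ placed in each $A_i^{\Imc_1}$ together with edges $(d_C, d_{C_j})\in R_j^{\Imc_1}$. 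This yields a tree of size $O(|C|)$ with $d_C\in C^{\Imc_1}$. In Stage~2, precompute the closure $\{(B,B') : \ont\models B\sqsubseteq B'\}$ between basic concepts using standard polynomial-time DL-Lite reasoning, and introduce one anonymous element $a_B$ per basic concept $B$ mentioned in $\ont$. Then iteratively saturate: (a) whenever $x\in B^{\Imc}$ and $\ont\models B\sqsubseteq A$ for a concept name $A$, add $x$ to $A^{\Imc}$; (b) whenever $\ont\models B\sqsubseteq \exists S$ and $x\in B^{\Imc}$ lacks an $S$-successor (where $S\in\{R,R^-\}$), add the edge $(x, a_{\exists S^-})\in S^{\Imc}$. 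The domain has $O(|C|+|\ont|)$ elements and the saturation converges in polynomial time.

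For correctness, claim~(ii) holds by construction: positive and existential CIs are discharged in steps~(a) and~(b), and any violation of a negative CI $B_1\sqsubseteq\neg B_2$ in $\Imc_{C,\ont}$ would, by the universality argument below, propagate to every model of $\ont$ containing a realizer of $C$, contradicting the assumption that $C$ is satisfiable w.r.t.~$\ont$. For~(i), the forward direction follows from~(ii) together with $d_C\in C^{\Imc_{C,\ont}}$. The converse reduces to the \emph{universal-model property}: for every $\Jmc\models\ont$ and $e\in C^{\Jmc}$ there is a homomorphism $h:\Imc_{C,\ont}\to\Jmc$ with $h(d_C)=e$. One defines $h$ by structural recursion on the tree of $C$ for original nodes, picking witnesses for each $\exists R_j.C_j$ in $\Jmc$, and extends it to the anonymous elements $a_B$ by using that $\ont\models B\sqsubseteq \exists S$ forces every realizer of $B$ in $\Jmc$ to have an $S$-successor. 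Since $\lang{\exists,\sqcap,\top}$-concepts are preserved under pointed homomorphisms, this gives $d_C\in D^{\Imc_{C,\ont}}\Rightarrow e\in D^{\Jmc}$, hence $\ont\models C\sqsubseteq D$.

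The main obstacle will be justifying the universal-model property in the presence of cyclic DL-Lite axioms such as $A\sqsubseteq \exists R$ together with $\exists R^-\sqsubseteq A$, which would otherwise expand the chase into an infinite regular tree. Identifying all $B$-type anonymous witnesses into a single element $a_B$ is sound because any two nodes in the infinite chase sharing a basic-concept type generate bisimilar subtrees, so the folding preserves both $\lang{\exists,\sqcap,\top}$-concept membership at $d_C$ and the existence of homomorphisms into arbitrary models of $\ont$. This folding is standard in DL-Lite canonical-model theory and implicitly underlies Lem.~\ref{lem:frontiers-el-dllite}; the principal care needed is in handling inverse roles correctly when connecting anonymous elements in step~(b).
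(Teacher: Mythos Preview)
Your overall strategy coincides with the paper's: build the tree interpretation of $C$, adjoin one anonymous witness per existential basic concept, close under the ontology, and argue universality via a structure-preserving map into an arbitrary model of~$\ont$ that realizes~$C$. The paper's canonical model is essentially your two-stage construction, and its correctness argument likewise derives Point~1 from a universality lemma and verifies Point~2 by case analysis over the shapes of DL-Lite CIs.

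There is, however, a genuine gap in your universality argument. You claim that for every $\Jmc\models\ont$ with $e\in C^{\Jmc}$ there is a \emph{homomorphism} $h\colon\Imc_{C,\ont}\to\Jmc$ sending $d_C$ to~$e$, and you explicitly assert that the folding of anonymous witnesses ``preserves\ldots the existence of homomorphisms into arbitrary models of~$\ont$.'' This is false precisely because of the folding. Take $\ont=\{A\sqsubseteq\exists R\}$ and $C=A\sqcap\exists T.A$: your construction yields a \emph{single} anonymous node $a_{\exists R^-}$ with two incoming $R$-edges, one from $d_C$ and one from its $T$-child. In a model $\Jmc$ where these two $A$-elements have \emph{distinct} $R$-successors and no common one, no function $h$ can send $a_{\exists R^-}$ to an $R$-successor of both images simultaneously, so no homomorphism exists. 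Your argument for negative CIs in Point~2 inherits the same defect, since it invokes this universality claim.

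The fix---and what the paper actually does---is to work with \emph{simulations} (binary relations) rather than homomorphisms: one relates each anonymous element $a_{\exists R^-}$ to \emph{every} element of $\Jmc$ satisfying $\exists R^-$, and verifies the forward clauses. Since $\lang{\exists,\sqcap,\top}$-concepts are preserved already under simulations, this suffices for Point~1, and the negative-CI argument for Point~2 still goes through because simulations preserve basic concepts in the forward direction. Equivalently, you could argue via a homomorphism from the (infinite) tree-unfolding of $\Imc_{C,\ont}$ at $d_C$; but as stated, your homomorphism claim from the folded structure does not hold.
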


The proof is based on previous constructions of canonical model for query answering~\cite{DBLP:conf/kr/KontchakovLTWZ10}.
Canonical models for query answering usually differ from those for satisfying the ontology. In the case of the \EL ontology language (consisting of CIs $C\sqsubseteq D$ where $C,D\in\lang{\exists,\sqcap,\top}$), it was shown that a polynomial canonical model for \EL concepts can be constructed \cite{DBLP:conf/ijcai/BaaderBL05,DBLP:journals/jsc/LutzW10}, but it cannot be ensured that this canonical model satisfies the ontology.  In fact there are \EL ontologies for which there is no \emph{finite} canonical model satisfying the ontology~\cite[Example 33]{DBLP:conf/nmr/0001OR23}.


Thm.~\ref{thm:canmod} \emph{does} produce 
models satisfying the DL-Lite ontology, and this is important for
our purpose.
By combining the frontier construction with the 
canonical model construction, we can construct
finite characterisations in polynomial time
for concepts that are satisfiable w.r.t.~a
given ontology. For inconsistent concepts, we 
separately prove the following.

\begin{restatable}{lemma}{lembottomcharacterisation}
\label{lem:bottomcharacterisationontology}
Let $\ont$ be a DL-Lite ontology. Then the unsatisfiable concept $\bot$ admits an exponentially sized finite characterisation w.r.t. $\lang{\exists,\sqcap,\top,\bot}$ under $\ont$. \end{restatable}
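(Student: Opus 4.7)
The plan is to construct a single interpretation $\Imc^*$ satisfying $\ont$, of exponential size, whose elements collectively realise every concept in $\lang{\exists,\sqcap,\top,\bot}$ that is satisfiable w.r.t.~$\ont$, and to take $E^+=\emptyset$ together with $E^-$ consisting of one pointed example at each element of $\Imc^*$. The first observation is that within $\lang{\exists,\sqcap,\top,\bot}$ any occurrence of $\bot$ propagates to the top of the concept tree, since $\bot\sqcap C\equiv\bot$ and $\exists R.\bot\equiv\bot$. Hence every $D\in\lang{\exists,\sqcap,\top,\bot}$ is equivalent either to $\bot$ or to some $\lang{\exists,\sqcap,\top}$-concept, and consequently $D\not\equiv_\ont\bot$ iff $D$ is equivalent under $\ont$ to a concept $D'\in\lang{\exists,\sqcap,\top}$ that is satisfiable w.r.t.~$\ont$. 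It therefore suffices to exhibit negative examples realising every such $D'$.

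To construct $\Imc^*$, let $\mathcal{B}=\Sigma_C\cup\{\exists R,\exists R^-:R\in\Sigma_R\}$ be the set of DL-Lite basic concepts, and let $\mathcal{T}$ be the set of \emph{realisable types}, i.e., subsets $T\subseteq\mathcal{B}$ such that $\bigsqcap_{B\in T}B\sqcap\bigsqcap_{B\in\mathcal{B}\setminus T}\neg B$ is satisfiable w.r.t.~$\ont$; note $|\mathcal{T}|\leq 2^{|\mathcal{B}|}$. Define $\Imc^*$ to have domain $\mathcal{T}$, put $A^{\Imc^*}=\{T:A\in T\}$ for each $A\in\Sigma_C$, and set $(T,T')\in R^{\Imc^*}$ iff $\exists R\in T$ and $\exists R^-\in T'$. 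A short calculation shows that $T\in B^{\Imc^*}\Leftrightarrow B\in T$ for every basic concept $B$ (for $B=\exists R$, the required witness $T'$ exists because if $\exists R$ is realisable then so is $\exists R^-$). From this it follows that $\Imc^*\models\ont$, since realisable types are by definition closed under $\ont$'s positive consequences and respect its disjointness constraints.

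The key step is then to prove that every $D'\in\lang{\exists,\sqcap,\top}$ satisfiable w.r.t.~$\ont$ is realised at some element of $\Imc^*$. For this, I would invoke Thm.~\ref{thm:canmod} to obtain a canonical model $(\Imc_{D',\ont},d_{D'})$, and define a folding $f:\Imc_{D',\ont}\to\Imc^*$ by $f(e)=T_e:=\{B\in\mathcal{B}:e\in B^{\Imc_{D',\ont}}\}$. Since $\Imc_{D',\ont}\models\ont$, each $T_e$ lies in $\mathcal{T}$. Atomic preservation is immediate, and for any $R$-edge $(e,e')$ of $\Imc_{D',\ont}$ we have $\exists R\in T_e$ and $\exists R^-\in T_{e'}$, so $(f(e),f(e'))\in R^{\Imc^*}$; hence $f$ is a homomorphism. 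As $\lang{\exists,\sqcap,\top}$-concepts are preserved under homomorphism, $d_{D'}\in (D')^{\Imc_{D',\ont}}$ gives $f(d_{D'})\in (D')^{\Imc^*}$.

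Finally, setting $E^-=\{(\Imc^*,T):T\in\mathcal{T}\}$ yields a finite characterisation of $\bot$ of size $|\mathcal{T}|^2=2^{O(|\Sigma_C|+|\Sigma_R|)}$: the concept $\bot$ fits $(\emptyset,E^-)$ vacuously, while any $D\not\equiv_\ont\bot$ has an equivalent satisfiable $D'\in\lang{\exists,\sqcap,\top}$ realised at some $(\Imc^*,T)\in E^-$, so $D$ does not fit. I expect the main obstacle to be the verification that $\Imc^*\models\ont$ — in particular, tracking how realisable types interact with both positive and negative CIs of DL-Lite and confirming that the role-edge rule indeed witnesses every $\exists R$ required of a realisable type — after which the homomorphism argument and the final conclusion are routine.
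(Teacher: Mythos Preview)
Your argument is correct and follows the paper's construction closely: both build a single type-based interpretation over subsets of the basic concepts, with edges $(\Gamma,\Gamma')\in R$ whenever $\exists R\in\Gamma$ and $\exists R^-\in\Gamma'$, establish a truth lemma for basic concepts, conclude that the ontology is satisfied, and take $E^+=\emptyset$ with $E^-$ ranging over all points. Two differences are worth flagging. First, the paper takes as domain the \emph{maximally consistent} subsets of $\Phi$, whereas you take all \emph{realisable} (full) types; the former is in general a strict subset of the latter (e.g.\ for $\ont=\emptyset$ there is only one maximal set), so the paper's model can be smaller, but both are bounded by $2^{|\Phi|}$ and either choice works. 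Second, for the key step that every satisfiable $\lang{\exists,\sqcap,\top}$-concept is realised, the paper does a direct induction on role depth using the strengthened hypothesis ``if $D\sqcap B$ is satisfiable then $(D\sqcap B)^{\Imc_\ont}\neq\emptyset$'', while you instead fold the canonical model $\Imc_{D',\ont}$ of Thm.~\ref{thm:canmod} into $\Imc^*$ via the type map and invoke homomorphism-preservation. Your route is more modular and avoids inventing the right induction loading, at the cost of a dependency on Thm.~\ref{thm:canmod}; the paper's route is self-contained. Both are sound.
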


\noindent
Given the above, we obtain the 
main result of this section.

\begin{restatable}{theorem}{thmpositiveontology}
\label{thm:EL-poly-char-DLLite}
$\lang{\exists,\sqcap,\top,\bot}$ admits finite
characterisations under DL-Lite ontologies. Moreover, for concepts $C\in\lang{\exists,\sqcap,\top,\bot}$ that are satisfiable w.r.t.~the given ontology $\ont$,
a finite characterisation can be computed
in polynomial time (in $|C|$ and $|\ont|$).
\end{restatable}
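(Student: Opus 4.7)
The plan is to split on whether $C$ is satisfiable w.r.t.~$\ont$. In the unsatisfiable case ($C\equiv_\ont\bot$), I would invoke Lem.~\ref{lem:bottomcharacterisationontology} directly and be done. In the satisfiable case, I would assemble the characterisation as $E^+=\{(\Imc_{C,\ont},d_C)\}$ together with $E^-=\{(\Imc_{C_i,\ont},d_{C_i}) : 1\leq i\leq n\}$, where $\{C_1,\dots,C_n\}$ is the frontier of $C$ w.r.t.~$\lang{\exists,\sqcap,\top}$ under $\ont$ supplied by Lem.~\ref{lem:frontiers-el-dllite}, and each $(\Imc_{D,\ont},d_D)$ is the canonical model of $D$ under $\ont$ supplied by Thm.~\ref{thm:canmod}. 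All these pointed interpretations satisfy $\ont$ by Thm.~\ref{thm:canmod}(2), so they are legitimate examples under $\ont$.

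For correctness, I would first record a structural observation about the language: since the only non-trivial constructors are $\exists$ and $\sqcap$, any $\lang{\exists,\sqcap,\top,\bot}$-concept either contains no occurrence of $\bot$ (and then lies in $\lang{\exists,\sqcap,\top}$), or contains one, in which case unsatisfiability propagates to the root along the tree of $\sqcap$/$\exists$ nodes and the concept is equivalent to $\bot$. Hence a fitting $D\in\lang{\exists,\sqcap,\top,\bot}$ satisfies $D^{\Imc_{C,\ont}}\ne\emptyset$ and so cannot be $\equiv_\ont\bot$; we may therefore assume $D\in\lang{\exists,\sqcap,\top}$. Thm.~\ref{thm:canmod}(1) then converts the positive fit at $d_C$ into $\ont\models C\sqsubseteq D$, and each negative fit at $d_{C_i}$ into $\ont\not\models C_i\sqsubseteq D$. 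The frontier property (Def.~\ref{def:frontier}) forces $\ont\models D\sqsubseteq C$, yielding $C\equiv_\ont D$.

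For the polynomial-time claim, I would rely on Lem.~\ref{lem:frontiers-el-dllite} for a polynomial-size frontier in polynomial time, and on Thm.~\ref{thm:canmod} for each canonical model in polynomial time; since only $1+n$ canonical models are built, the whole construction runs in polynomial time in $|C|+|\ont|$. The hard part has already been absorbed into Thm.~\ref{thm:canmod}, whose real content is producing a canonical model that is \emph{both} polynomially bounded \emph{and} actually a model of the DL-Lite ontology --- a combination that, as the text notes, fails for \EL ontologies. Given Lem.~\ref{lem:frontiers-el-dllite} and Thm.~\ref{thm:canmod}, what remains is the routine combinatorial verification above, plus the observation that $\bot$ can be eliminated from a fitting $D$ as soon as a positive example is present.
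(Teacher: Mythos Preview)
Your proposal is correct and follows essentially the same approach as the paper: split on satisfiability, use Lem.~\ref{lem:bottomcharacterisationontology} for the $\bot$-case, and in the satisfiable case take the canonical model of $C$ as the sole positive example and the canonical models of the frontier elements (from Lem.~\ref{lem:frontiers-el-dllite}) as the negative examples, with Thm.~\ref{thm:canmod} supplying both correctness and the polynomial bound. Your explicit reduction of a fitting $D\in\lang{\exists,\sqcap,\top,\bot}$ to an equivalent $\lang{\exists,\sqcap,\top}$-concept (via the observation that any occurrence of $\bot$ propagates to the root) is in fact a detail the paper's proof glosses over, so your argument is, if anything, slightly more complete on that point.
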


\thispagestyle{empty}

\begin{remark}
    The restriction to satisfiable concepts 
    in the above theorem is essential given that the ontology is treated as part of the input.
    Specifically, let $\ont$ be the ontology
    $\ont:=\{A_0\sqsubseteq \neg A_0\}\cup\{A_i\sqsubseteq\neg B_i \mid i=1\ldots n\}$. It can be shown, for the concept $A_0$ (which is unsatisfiable w.r.t.~$\ont$), that
    every finite characterisation must be of
    size at least $2^n$ (see e.g.~\cite{FJL-IJCAI21} for a similar style of counterexample). If the ontology is treated as fixed in the complexity analysis, however, a finite characterisation could be computed in polynomial time for \emph{all} 
    $\lang{\exists,\sqcap,\top,\bot}$-concepts. 
\end{remark}




\subsection{Negative Results}

The following two negative results establish   that we cannot further
enrich the concept language.

\begin{restatable}{theorem}{thmcountingontologies}
\label{thm:countingontologies}
\label{thm:ALEontologies}
Let $\ont$ be the DL-Lite ontology $\{A\sqsubseteq \neg A\}$. Then $\lang{{\geq}}$ and $\lang{\forall,\exists,\sqcap}$ do not admit finite characterisations under $\ont$.
\end{restatable}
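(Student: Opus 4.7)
The key observation is that any interpretation $\Imc \models \ont$ must satisfy $A^\Imc = \emptyset$, so under $\ont$ the concept name $A$ behaves exactly like $\bot$. My plan is to reduce from the known failures in Thm.~\ref{thm:knownresults}.4 and Thm.~\ref{thm:countingbot}: if $\lang{\forall,\exists,\sqcap,\bot}$ (resp.~$\lang{\geq,\bot}$) does not admit finite characterisations without an ontology, then $\lang{\forall,\exists,\sqcap}$ (resp.~$\lang{\geq}$) should not admit them under $\ont$ either, since we can trade the explicit $\bot$ for the concept name $A$.

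I will spell out the argument for $\lang{\forall,\exists,\sqcap}$; the one for $\lang{\geq}$ is entirely analogous. By Thm.~\ref{thm:knownresults}.4 there exist a signature $(\Sigma_C', \Sigma_R)$ (which I may assume satisfies $A \notin \Sigma_C'$, after relabelling if necessary) and a concept $C \in \lang{\forall,\exists,\sqcap,\bot}[\Sigma_C', \Sigma_R]$ without a finite characterisation w.r.t.~$\lang{\forall,\exists,\sqcap,\bot}[\Sigma_C', \Sigma_R]$. I set $\Sigma_C := \Sigma_C' \cup \{A\}$ and let $C^* \in \lang{\forall,\exists,\sqcap}[\Sigma_C, \Sigma_R]$ be obtained from $C$ by replacing every occurrence of $\bot$ with $A$. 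I will show that $C^*$ has no finite characterisation w.r.t.~$\lang{\forall,\exists,\sqcap}[\Sigma_C, \Sigma_R]$ under $\ont$: suppose for contradiction that $(E^+, E^-)$ is one; then I claim that the same collection is already a finite characterisation of $C$ w.r.t.~$\lang{\forall,\exists,\sqcap,\bot}[\Sigma_C', \Sigma_R]$ without any ontology, contradicting the choice of $C$.

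To verify the claim, note first that each example $\Imc$ in $(E^+, E^-)$ satisfies $\ont$, so $A^\Imc = \emptyset$, and hence any concept $D \in \lang{\forall,\exists,\sqcap,\bot}[\Sigma_C', \Sigma_R]$ and its translation $D^* := D[\bot := A]$ evaluate to the same set on each example. In particular, $C$ itself fits $(E^+, E^-)$. Moreover, if some $D$ also fits, then $D^*$ fits too, and the characterisation assumption gives $D^* \equiv_\ont C^*$. To lift this to $D \equiv C$, I take any interpretation $\mathcal{J}$ over $\Sigma_C'$ and extend it to $\mathcal{J}'$ over $\Sigma_C$ by setting $A^{\mathcal{J}'} := \emptyset$; since $\mathcal{J}' \models \ont$ and neither $C$ nor $D$ mentions $A$, I conclude $D^{\mathcal{J}} = D^{*\mathcal{J}'} = C^{*\mathcal{J}'} = C^{\mathcal{J}}$, as required. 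The main obstacle is precisely this last lifting step, which is what forces the witness concept to avoid the symbol $A$; fortunately, this is always achievable by renaming.
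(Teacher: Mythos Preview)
Your proof is correct and follows the same idea as the paper: both exploit that $A^\Imc=\emptyset$ whenever $\Imc\models\ont$, so $A$ can play the role of $\bot$, reducing to the known failures for $\lang{\geq,\bot}$ and $\lang{\forall,\exists,\sqcap,\bot}$. The only cosmetic difference is that the paper simply says to rerun the concrete counterexample arguments from Thm.~\ref{thm:countingbot} and Thm.~\ref{thm:knownresults}.4 with $A$ substituted for $\bot$, whereas you package this as an explicit black-box reduction (any finite characterisation of $C^*$ under $\ont$ would yield one of $C$ without the ontology); your more careful treatment of the signature and the lifting step $D^*\equiv_\ont C^* \Rightarrow D\equiv C$ is a nice touch.
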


\begin{restatable}{theorem}{inverseontology}
\label{thm:inverseontology}
Let $\ont$ be the DL-Lite ontology
$\{A\sqsubseteq\exists R,$ $ \exists R^-\sqsubseteq A\}$.
Then $\lang{\exists,-,\sqcap}$ does not admit finite
characterisations under $\ont$.
\end{restatable}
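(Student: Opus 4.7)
The plan is to exhibit one specific concept without a finite characterisation under $\ont$. I take $C = A$ over the signature $\Sigma_{\mathsf{C}} = \{A\}$, $\Sigma_{\mathsf{R}} = \{R\}$, and show: for every finite candidate $(E^+, E^-)$ that $A$ fits, some $D \in \lang{\exists,-,\sqcap}$ with $D \not\equiv_\ont A$ also fits, so no finite $(E^+, E^-)$ can characterise $A$ under $\ont$.

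Given such an $(E^+, E^-)$, I set $M := \max\{|\Imc| : (\Imc,d) \in E^+\}$ (or $M = 0$ if $E^+ = \emptyset$) and define
\[
  D \;:=\; A \,\sqcap\, [\exists R.]^{M+1}\,[\exists R^-.]^{M+2}\,A,
\]
where $[\exists R.]^k$ and $[\exists R^-.]^k$ denote $k$-fold nesting. The $\sqcap A$ conjunct forces $D \sqsubseteq_\ont A$, so every negative example in $E^-$ automatically falsifies $D$. The key fact for positives is that in any finite $\Imc \models \ont$, each $A$-element begins an infinite forward $R$-walk entirely of $A$-elements (apply $A \sqsubseteq \exists R$ at each step, then $\exists R^- \sqsubseteq A$ at the target), so by pigeonhole in a model of size $\leq M$ this walk enters a loop within $M$ steps. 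Therefore, from $d \in A^\Imc$ with $|\Imc| \leq M$, walking forward $M+1$ steps lands at some loop-node $d'$, and from $d'$ one can traverse the loop backwards along $R^-$-edges for $M+2$ steps, all through $A$-elements. Hence $d \in D^\Imc$, and so $D$ fits $E^+$.

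To see $D \not\equiv_\ont A$, I exhibit an $\ont$-model separating them. Let $\Imc^{\star}$ be the interpretation with domain $\{a_0, \ldots, a_{M+2}\}$, $A^{\Imc^\star}$ the entire domain, and $R^{\Imc^\star}$ consisting of the edges $a_i \to a_{i+1}$ for $0 \leq i \leq M+1$ together with the self-loop $a_{M+2} \to a_{M+2}$. A direct check shows $\Imc^\star \models \ont$. The only forward $R$-walk of length $M+1$ from $a_0$ ends at $a_{M+1}$, whose longest backward $R$-walk is $a_M, a_{M-1}, \ldots, a_0$ of length only $M+1$; it cannot be extended because $a_0$ has no $R$-predecessor. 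Hence $a_{M+1} \notin ([\exists R^-.]^{M+2} A)^{\Imc^\star}$, so $a_0 \in A^{\Imc^\star} \setminus D^{\Imc^\star}$, witnessing $A \not\sqsubseteq_\ont D$.

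The main hurdle is calibrating $D$ correctly: the forward quantifier depth $M+1$ must be large enough to trap every positive example on an $A$-cycle (so that backward walks round the cycle become unbounded), while the backward depth $M+2$ must strictly exceed $M+1$ so that the linear-tail-plus-self-loop model $\Imc^\star$ still falsifies $D$ at its root. Once $D$ is calibrated this way, both verifications are short: fitting on positives is a pigeonhole argument using both ontology axioms, and the non-equivalence with $A$ is read directly off the explicit model $\Imc^\star$.
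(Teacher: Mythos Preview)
Your proof is correct and follows essentially the same approach as the paper: both show that the concept $A$ has no finite characterisation under $\ont$ by exhibiting, for any candidate example set, a fitting concept of the form $A\sqcap[\exists R.]^{n}[\exists R^-.]^{n+1}A$ (you take $n=M+1$), arguing via the lasso structure forced by $\ont$ on finite models that all positives are fit, and then separating from $A$ with a finite path-plus-self-loop model. Your separating model is in fact more carefully calibrated than the paper's: you place the self-loop one step \emph{beyond} the endpoint of the forward walk, which is exactly what is needed to block the backward walk of length $M+2$.
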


These results, together with Thm.~\ref{thm:EL-poly-char-DLLite}, imply Thm.~\ref{thm:mainthree}, as they show that no extension of $\lang{\exists,\sqcap,\top,\bot}$ with $\geq$ or $\forall$ admits finite characterisations under DL-Lite ontologies.
Our positive result in Section~\ref{sec:el-dllite} also cannot be extended to DL-Lite$^\Hmc$ ontologies (that is, DL-Lite extended with role inclusions~\cite{dllite-jair09}). This contrasts with a recent positive result for \EL with DL-Lite$^\Hmc$ ontologies using ABox-examples~\cite{DuplicateDBLP:conf/ijcai/FunkJL22}.

\begin{restatable}{theorem}{thmnegeldlliteh}
\label{thm:inverseontologyel}
Let $\ont$ be the DL-Lite$^\Hmc$ ontology
$\{A\sqsubseteq\exists R, \exists R^-\sqsubseteq A, R^-\sqsubseteq S\}$. Then
$\lang{\exists,\sqcap}$ does not admit finite
characterisations under $\ont$.
\end{restatable}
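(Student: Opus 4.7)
The plan is to mirror the proof of Thm~\ref{thm:inverseontology} and exploit the extra axiom $R^-\sqsubseteq S$ to let $\exists S$ play the role previously played by $\exists R^-$. Writing $\ont_0 := \{A\sqsubseteq\exists R,\ \exists R^-\sqsubseteq A\}$, so that $\ont = \ont_0\cup\{R^-\sqsubseteq S\}$, every model $\Imc\models\ont$ satisfies $\{(y,x)\mid (x,y)\in R^\Imc\}\subseteq S^\Imc$; moreover, in the chase of any satisfiable $\lang{\exists,\sqcap}$-concept under $\ont$, these are \emph{exactly} the $S$-edges generated. Hence, along such a chase, the constructors $\exists S.\varphi$ and $\exists R^-.\varphi$ are interchangeable.

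Concretely, I would transfer the witnessing target concept $C\in\lang{\exists,-,\sqcap}$ and the infinite family of candidate alternatives $(D_n)_{n\geq 1}$ used in the proof of Thm~\ref{thm:inverseontology} (under $\ont_0$) to the present setting: define $\bar C,\bar D_n\in\lang{\exists,\sqcap}$ by syntactically replacing every occurrence of $\exists R^-.\varphi$ with $\exists S.\varphi$. The first task is to verify that $\bar D_n\not\equiv_\ont \bar C$ for every $n$. Given an $\ont_0$-interpretation $\Imc$ witnessing $D_n\not\equiv_{\ont_0}C$, I would augment it to $\Imc'\models\ont$ by setting $S^{\Imc'} := \{(y,x)\mid (x,y)\in R^\Imc\}$ while keeping $R^{\Imc'}=R^\Imc$ and the concept-name interpretations unchanged; in $\Imc'$, the connectives $\exists S$ and $\exists R^-$ coincide pointwise, so $\bar D_n^{\Imc'}=D_n^\Imc$ and $\bar C^{\Imc'}=C^\Imc$, and the non-equivalence is preserved.

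The main obstacle is then to show that for every finite $(E^+,E^-)$ that $\bar C$ fits under $\ont$, some $\bar D_n$ also fits. I plan to reuse the combinatorial argument behind Thm~\ref{thm:inverseontology}: every example in $E^+\cup E^-$, restricted to its $\{A,R\}$-reduct, is still an example over $\ont_0$, and the size-based choice of $n$ used there still rules out $D_n$ on each negative example after this reduction. Two asymmetries introduced by the substitution $\exists R^-\leadsto \exists S$ need handling: on positive examples, passing from $\exists R^-$ to the (possibly larger) $\exists S$ only enlarges the extension, so positive fit is preserved; on negative examples, one must ensure that the ``extra'' $S$-edges (those not coming from $R^-$) do not create unintended satisfactions of $\bar D_n$. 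I expect the latter to be handled by the fact that in the specific counterexamples used to instantiate the argument, $S$ is defined to be precisely the converse of $R$ by the construction above, so no spurious $S$-edges are present and $\exists S$ genuinely simulates $\exists R^-$ there.
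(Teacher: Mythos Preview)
Your approach is exactly the paper's: take the proof of Thm.~\ref{thm:inverseontology} and replace every $\exists R^-$ by $\exists S$, using $R^-\sqsubseteq S$ to justify the transfer. The non-equivalence step (augmenting the witness $\mathcal{J}_n$ by setting $S^{\Imc'}$ to be the converse of $R^{\Imc}$) is handled correctly.

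However, your analysis of the fitting argument contains a confusion worth fixing. Recall that in the proof of Thm.~\ref{thm:inverseontology} the target concept is the concept name $A$ itself, and each alternative is $C_n = A\sqcap [\exists R.]^n[\exists R^-.]^{n+1}A$; after your substitution, $\bar C=A$ and $\bar D_n = A\sqcap [\exists R.]^n[\exists S.]^{n+1}A$. Consequently, the negative-example case is trivial: any $(\Imc,d)\in E^-$ has $d\notin A^\Imc$, hence $d\notin \bar D_n^\Imc$ simply because $A$ is a top-level conjunct of $\bar D_n$. No control over $S$-edges is needed, and in particular your proposed fix (``in the specific counterexamples $S$ is defined to be precisely the converse of $R$'') cannot be used here --- the examples in a putative characterisation $(E^+,E^-)$ are \emph{arbitrary} finite models of $\ont$, not something you construct. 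The size-based choice of $n$ is used only on the \emph{positive} side: every positive example for $A$ under $\ont$ contains a lasso of $R$-edges through $A$-elements, and for $n$ large enough the forward $R$-walk lands on the cycle, after which the required $(n{+}1)$ backward steps can be taken along $R^-$-edges, which are all $S$-edges by the role inclusion. So the monotone direction you noted (``$\exists S$ is at least as large as $\exists R^-$'') is exactly what makes the positive case go through; the negative case needs no such reasoning.
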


\section{Summary and Discussion}
\label{sec:conclusion}

We systematically studied the 
existence and complexity of computing finite characterisations for
concept languages $\lang{\Obf}$ with $\{\exists,\sqcap\}\subseteq\Obf\subseteq\{\forall,\exists,\geq,-,\sqcap,\sqcup,\top,\bot,\neg\}$, both in the absence of an ontology and in the presence of a DL-Lite ontology.
 While we classified most concept
 languages, there are a few cases left open,
 detailed below.

In passing, we obtained a polynomial-time
algorithm for checking subsumptions between $\lang{\exists,\geq,\sqcap,\top}$ concepts, and a polynomial time algorithm for computing frontiers of
$\lang{\exists,\geq,\sqcap,\top}$-concepts
(which gives rise to a polynomial-time ideal upward refinement
operator, cf.~Remark~\ref{rem:refinement}).


\paragraph{Repercussions for Exact Learnability}
As we mentioned in the introduction, the existence of polynomial-time computable characterisations is a necessary precondition for the existence of polynomial-time exact learning algorithms with membership queries. 
It was shown in~\cite{BalderCQ} that 
$\lang{\exists,-,\sqcap,\top,\bot}$ is indeed polynomial-time exactly learnable with membership queries. 
 Thm.~\ref{thm:maintwo} therefore implies that $\lang{\exists,-,\sqcap,\top,\bot}$ is a maximal fragment of $\mathcal{ALCQI}$ that is efficiently exactly learnable with membership queries, in the absence of an ontology. 

Conversely, every algorithm for effectively computing finite characterisations also gives rise to an effective exact learning algorithm with membership queries, although not an efficient one. 
Thus, 
Thm.~\ref{thm:mainone} also tells us something about the limits of (not-necessarily efficient) exact learnability with membership queries, without an ontology.

Finally, in the same way, Thm.~\ref{thm:mainthree} tells us something about the limits of exact learnability with membership queries where the membership queries can only be asked for examples satisfying a given DL-Lite ontology.
The authors are not aware of prior work on such a notion of learnability.



\paragraph{Open questions}
 Open questions include
 whether $\lang{\exists,\geq,-,\sqcap}$ or $\lang{\exists,\geq,-,\top}$ admit finite characterisations (under the empty ontology) and whether concept languages between $\lang{\exists,\sqcap,\sqcup}$ and $\lang{\exists,\sqcap,\sqcup,\top,\bot}$ admit finite characterisations under DL-Lite ontologies. 
Moreover, we do not yet have tight bounds on the complexity of computing finite characterisations for many of the concept languages we study that admit finite characterisations. In particular, for concept languages between $\lang{\forall,\exists,\sqcap}$ and $\lang{\forall,\exists,\geq \sqcap,\top}$ we have an NP lower bound (from Corollary~\ref{cor:ALE_NPhardness}) and a non-elementary upper bound (from Thm.~\ref{thm:ALEQcharacterisations}). In Thm.~\ref{thm:ELQcharacterisations} we managed to improve this to a doubly exponential upper bound for  $\lang{\exists,\geq,\sqcap,\top}$,
  while an exponential lower bound for $\lang{\exists,\geq,\sqcap}$ follows from Thm.~\ref{thm:ELQlowerbound}. It remains 
  to find tight bounds for these fragments.
 
 Since our setting includes $\forall$ and $\neg$, we were forced to work with finite pointed interpretations as examples (cf.~Remark~\ref{rem:exampledef}). If one discards these connectives it becomes natural to adopt pointed ABoxes as examples, as in \cite{DuplicateDBLP:conf/ijcai/FunkJL22}.
 Many of our negative results for the existence of finite characterisations under DL-Lite ontologies rely on examples being finite pointed interpretations satisfying the ontology, except Thm.~\ref{thm:countingontologies} because for the ontology in question, there is essentially no difference between a (consistent) ABox and a finite interpretation. 
 Our study therefore only leaves open the possibility of positive results in the ABox-as-examples setting under DL-Lite ontologies for fragments of
 $\lang{\exists,-,\sqcap,\sqcup,\top,\bot}$.

  \thispagestyle{empty}
 It would be of interest to investigate finite characterisations under 
  ontology languages other than DL-Lite.
 The recent paper \cite{ICDT2024} implies some positive results for $\lang{\exists,\sqcap,\sqcup,\top,\bot}$ relative to ontologies defined by certain universal Horn formulas (e.g. symmetry and transitivity). 

One of the motivations of our study was to generate examples to illustrate a concept to a user (e.g, for debugging purposes or for educational purposes). Finding practical solutions for this require more research, e.g., on how to generate ``natural'' examples (cf.~\cite{Glavic2021Book} for an overview of 
related problems and techniques in data management).



\thispagestyle{empty}

\section*{Acknowledgments}

\thispagestyle{empty}
Balder ten Cate is supported by the European Union's Horizon
2020 research and innovation programme under grant MSCA-101031081. 
Raoul Koudijs and Ana Ozaki are supported by the Research Council of Norway, project number $316022$, led by Ana Ozaki.  

\bibliographystyle{named}
\bibliography{bib}




\appendix
\section{Additional Preliminaries}
\label{sec:additional}

We will use the following abbreviations for writing complex nested concept. Given a concept $C$, we use  $[\geq k S.]^n C$ as a shorthand
notation for
\[\underbrace{\geq k S.\geq k S.\cdots.\geq k S.}_{\text{$n$ times}} C\]
and we use  $[\forall S.]^n C$ as a shorthand
notation for
\[\underbrace{\forall S.\forall S.\cdots.\forall S.}_{\text{$n$ times}} C\]
Formally, $[\geq k S.]^0C:=C$ and $[\geq k S.]^{n+1}.C := ~\geq k S.([\geq k S.]^{n}.C)$, where $k,n$ are natural numbers and $S$ is a role name from $\NR$ or the inverse of a role name in $\Sigma_{\mathsf{R}}$ inverse. Similarly, we set $[\forall S.]^0C:=C$ and $[\forall S.]^{n+1}C:=\forall S.[\forall S.]^nC$ for every role name $S$ or the inverse of a role name. 
Given a role name $R\in\NR$, we use
$\overline{R}$ to toggle between 
a role name and its inverse.
That is, if $R\in\NR$ then we set  $\overline{R}=R^-$ and $\overline{R^-}=R$.

To simplify our proofs, we assume that all ontologies
are in \emph{named form}, meaning that
we only allow CIs where
one of the sides is a concept name.
Every DL-Lite ontology $\ont$ 
can be converted in polynomial time to a conservative extension $\ont'$ of \Omc in named form (over an extended signature)  
 ~\cite{DBLP:conf/rweb/Krotzsch12}. Let $\mathsf{sig}(\ont)$ denote the finite set of concept and role names occurring in $\ont$. 

\begin{lemma}\label{lem:namedform}
Every DL-Lite ontology $\ont$ may be transformed in polynomial time into a DL-Lite ontology $\ont^*$ in named form such that:
\begin{itemize}
    \item $\mathsf{sig}(\ont^*)\supseteq\mathsf{sig}(\ont)$
    \item for all CIs $B\sqsubseteq C$ over $\mathsf{sig}(\ont)$, $\ont\models B\sqsubseteq C$ iff $\ont^*\models B\sqsubseteq C$
    \item For every concept language $\mathcal{L}$, $\mathcal{L}$ admits finite (polynomial time computable) characterisations under $\ont$ iff $\mathcal{L}$ admits finite (polynomial time computable) characterisations under $\ont^*$.
\end{itemize}
\end{lemma}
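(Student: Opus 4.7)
The plan is as follows. To construct $\ont^*$, I replace each CI $B \sqsubseteq C$ of $\ont$ whose left-hand side $B = \exists R$ or $\exists R^-$ is not a concept name by three CIs: $A_B \sqsubseteq B$, $B \sqsubseteq A_B$, and $A_B \sqsubseteq C$, where $A_B$ is a fresh concept name chosen outside of $\mathsf{sig}(\ont)$ and outside of any ambient $\Sigma_C$ under consideration (possible since $\NC$ is countably infinite). Each new CI now has a concept name on one side, so $\ont^*$ is in named form. The construction is linear in $|\ont|$, and $\mathsf{sig}(\ont^*) \supseteq \mathsf{sig}(\ont)$ is immediate. This is essentially the standard construction of~\cite{DBLP:conf/rweb/Krotzsch12}; the added content of the lemma lies in the third bullet.

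For the second bullet, the key observation is that the two added CIs make $A_B$ semantically equivalent to $B$ under $\ont^*$. Consequently, every model $\Imc$ of $\ont$ over $\mathsf{sig}(\ont)$ extends \emph{uniquely} to a model $\Imc^*$ of $\ont^*$ by setting $(A_B)^{\Imc^*} := B^\Imc$ for each fresh $A_B$, and conversely every model of $\ont^*$ restricts to a model of $\ont$ on $\mathsf{sig}(\ont)$. Since the truth of any CI $B \sqsubseteq C$ over $\mathsf{sig}(\ont)$ depends only on the interpretation of $\mathsf{sig}(\ont)$, the equivalence $\ont \models B \sqsubseteq C \iff \ont^* \models B \sqsubseteq C$ follows, and the same reasoning yields the conservative extension property for arbitrary concept equivalences $C \equiv_\ont D \iff C \equiv_{\ont^*} D$ when $C, D$ use only names outside the fresh set.

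For the third bullet I use the same model bijection to transport characterisations. Fix $(\Sigma_C, \Sigma_R)$; as noted above I may assume, by renaming, that the fresh names $A_B$ all lie outside of $\Sigma_C$. Given a finite characterisation $E = (E^+, E^-)$ of $C \in \mathcal{L}[\Sigma_C, \Sigma_R]$ under $\ont$, I extend each example $(\Imc,d) \in E^+ \cup E^-$ to $(\Imc^*, d)$ by setting $(A_B)^{\Imc^*} := B^\Imc$; the result satisfies $\ont^*$, hence is a valid example for $\ont^*$. Because no concept in $\mathcal{L}[\Sigma_C, \Sigma_R]$ mentions a fresh $A_B$, the fitting relation on every example is unchanged, and equivalence relative to $\ont$ coincides with equivalence relative to $\ont^*$ on $\mathcal{L}[\Sigma_C, \Sigma_R]$ by the conservative extension property. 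Thus the extended examples are a finite characterisation of $C$ under $\ont^*$. The converse direction is symmetric: any characterisation under $\ont^*$ restricts, by dropping the interpretations of fresh names, to one under $\ont$. Both translations are linear-time, so polynomial-time computability transfers in both directions as well.

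The main obstacle I expect is not combinatorial but definitional: admissibility of finite characterisations quantifies universally over signatures $(\Sigma_C, \Sigma_R)$, so one must be careful that the fresh names of $\ont^*$ can always be arranged to be disjoint from $\Sigma_C$ (otherwise concepts of $\mathcal{L}[\Sigma_C, \Sigma_R]$ could mention $A_B$ as a first-class name, and fitting behaviour under $\ont$ vs.~$\ont^*$ would differ since $A_B$ is unconstrained under $\ont$ but equivalent to $B$ under $\ont^*$). The renaming step at the outset resolves this cleanly, reducing the lemma to the model-theoretic bijection above and routine bookkeeping.
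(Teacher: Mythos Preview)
Your proposal is correct and follows essentially the same approach as the paper: introduce fresh concept names for the offending basic concepts, add the two equivalence CIs, and use the resulting model bijection (extend by $A_B^{\Imc^*} := B^\Imc$, restrict by forgetting) to transport characterisations in both directions. The only cosmetic difference is that the paper introduces a fresh name $A_{\exists S}$ for \emph{every} basic concept $\exists S$ over $\mathsf{sig}(\ont)$ and substitutes it uniformly for every occurrence, whereas you more parsimoniously introduce fresh names only for basic concepts occurring as a left-hand side; both constructions yield a named-form ontology and both support the same conservative-extension and example-translation arguments.
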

\begin{proof}
We introduce a fresh concept name $A_{\exists S}$ for every basic concept $\exists S$ over $\mathsf{sig}(\ont)$. Let $\ont'$ be the result of substituting $A_{\exists S}$ for every occurrence of the basic concept $\exists S$ in $\ont$ and observe that $\ont'$ is now in named form. Further, let $\ont^*$ be the union of $\ont'$ with
\[\{\exists S\sqsubseteq A_{\exists S}, A_{\exists S}\sqsubseteq\exists S \mid S\in\mathsf{sig}(\ont)\;\text{or}\;\overline{S}\in\mathsf{sig}(\ont)\}\] 
Clearly $\ont^*$ is in named form and can be computed from $\ont$ in polynomial time. Moreover, since $\ont^*$ is a conservative extension of $\ont$ \cite{DBLP:conf/rweb/Krotzsch12}, it is clear that points 1 and 2 hold. Hence, it rests to show point 3.

Let $\mathcal{L}[\Sigma_{\mathsf{C}},\Sigma_{\mathsf{R}}]$ be a concept language over the finite signature $(\Sigma_{\mathsf{C}},\Sigma_{\mathsf{R}})$.
Let $\Sigma_{\mathsf{C}}'$ be the union of $\Sigma_{\mathsf{C}}$ with $\mathsf{sig}(\ont)\cap\NC$ and similarly let $\Sigma_{\mathsf{R}}'$ be the union of $\Sigma_{\mathsf{R}}$ with $\mathsf{sig}(\ont)\cap\NR$. Further, let 
$\Sigma_{\mathsf{C}}''$ be the union of $\Sigma_{\mathsf{C}}$ with $\mathsf{sig}(\ont^*)\cap\NC$ and similarly let $\Sigma_{\mathsf{R}}''$ be the union of $\Sigma_{\mathsf{R}}$ with $\mathsf{sig}(\ont^*)\cap\NR$. Observe that as $\mathsf{sig}(\ont)\subseteq\mathsf{sig}(\ont^*)$ also $\Sigma_{\mathsf{C}}'\subseteq\Sigma_{\mathsf{C}}''$ and $\Sigma_{\mathsf{R}}'\subseteq\Sigma_{\mathsf{R}}''$.

We show that every pointed interpretation $(\Imc,d)$ for the signature $(\Sigma_{\mathsf{C}}',\Sigma_{\mathsf{R}}')$ such that $\Imc\models\ont$ can be transformed in polynomial time to an interpretation $\Imc^*$ for the signature $(\Sigma_{\mathsf{C}}'',\Sigma_{\mathsf{R}}'')$ such that $\Imc^*\models\ont^*$ and for all concepts $D\in\mathcal{L}[\Sigma_C,\Sigma_R]$, $d\in D^{\Imc}$ iff $d\in D^{\Imc^*}$. Given such $(\Imc,d)$, define $\Imc^*$ by keeping the domain and the interpretations of all the old symbols (i.e. those in $\Sigma_{\mathsf{C}}''\setminus\Sigma_{\mathsf{C}}'$ and $\Sigma_{\mathsf{R}}''\setminus\Sigma_{\mathsf{C}}'$) fixed, while setting
\[A_{\exists S}^{\Imc^+}:=\{d\in\Delta^{\Imc} \mid d\in(\exists S)^{\Imc}\}\]
In particular, this means that the interpretation of all the symbols in $\Sigma_{\mathsf{C}}$ and $\Sigma_{\mathsf{R}}$ will remain fixed, from which it follows that $d\in D^{\Imc}$ iff $d\in D^{\Imc^*}$ for all $D\in\mathcal{L}[\Sigma_{\mathsf{C}},\Sigma_{\mathsf{R}}]$. Moreover, clearly this construction is polynomial and by definition of $\Imc^*$ we have $A_{\exists S}^{\Imc^*}=(\exists S)^{\Imc^*}$ for all new symbols $A_{\exists S}\in\mathsf{sig}(\ont^*)\setminus\mathsf{sig}(\ont)$, from which it follows easily that $\Imc^*\models\ont^*$.

The mapping in the other direction just forgets about the interpretation of all the symbols in $\mathsf{sig}(\ont^*)\setminus\mathsf{sig}(\ont)$ (note that $\mathsf{sig}(\ont^*)\setminus\mathsf{sig}(\ont)=(\Sigma_{\mathsf{C}}''\setminus\Sigma_{\mathsf{C}}')\cup(\Sigma_{\mathsf{R}}''\setminus\Sigma_{\mathsf{R}})$). Since for every interpretation $\Jmc$ for $(\Sigma_{\mathsf{C}}'',\Sigma_{\mathsf{R}}'')$ that satisfies $\ont^*$ we have $A_{\exists S}^{\Jmc}=(\exists S)^{\Jmc}$, it follows by the same reasoning as above that its reduct to the smaller signature $(\Sigma_{\mathsf{C}}',\Sigma_{\mathsf{R}}')$ satisfies $\ont$ and also that the element $d$ satisfies the same $\mathcal{L}[\Sigma_{\mathsf{C}},\Sigma_{\mathsf{R}}]$ concepts in this reduct as in $\Jmc$.

It follows that every finite characterisation of $C$ w.r.t.~$\mathcal{L}[\Sigma_{\mathsf{C}},\Sigma_{\mathsf{R}}]$ under $\ont$ can be transformed in polynomial time to a finite characterisation of $C$ w.r.t.~$\mathcal{L}[\Sigma_{\mathsf{C}},\Sigma_{\mathsf{R}}]$ under $\ont^*$ and vice versa. Hence, we conclude that $\mathcal{L}[\Sigma_{\mathsf{C}},\Sigma_{\mathsf{R}}]$ admits finite or polynomial time computable characterisations under $\ont$ iff $\mathcal{L}[\Sigma_{\mathsf{C}},\Sigma_{\mathsf{R}}]$ admits finite or polynomial time computable characterisations under $\ont^*$.
 \end{proof}
\section{Proofs of Section~\ref{sec:prelims}}\label{app:sectwoproofs}

Observe that $\lang{\forall,\exists,\geq,\sqcap,\sqcup,\top,\bot,\neg}$ is a notational variant of $\mathcal{ALCQI}$. We prove that $\mathcal{ALCQI}$ is finitely controllable w.r.t. DL-Lite ontologies (Thm.~\ref{thm:finitecontrollability}). To this end, we quickly review the notion of $\mathcal{ALCQI}$-bisimulations. For technical reasons we immediately introduce a parameterised version.

\begin{definition}[$\mathcal{ALCQI}$-bisimulations]
Let $\Imc,\Imc'$ be two interpretations for the signature $(\Sigma_{\mathsf{C}},\Sigma_{\mathsf{R}})$ and let $k$ and $m$ be natural numbers greater than $0$. Let $Z_1\subseteq\ldots\subseteq Z_m\subseteq\Delta^{\Imc}\times\Delta^{\Imc'}$ be a set of binary relations on the domains of $\Imc$ and $\Imc'$. 
We say that the tuple $(Z_1,\ldots,Z_m)$ of non-empty sets is an \emph{$\mathcal{ALCQI}$ $m$-bisimulation up to $k$} (for the signature $(\Sigma_{\mathsf{C}},\Sigma_{\mathsf{R}})$) between $\Imc$ and $\Imc'$ if for every $1\leq i\leq m$ and $(d,d')\in Z_i$, the following conditions hold:
\begin{itemize}
    \item[(atom)] $d\in A^{\Imc}$ iff $d'\in A^{\Imc'}$ for all $A\in\Sigma_{\mathsf{C}}$
    \item[(forth)] If $i<m$, then for every $S\in\Sigma_{\mathsf{R}}\cup\{R^- \mid R\in\Sigma_{\mathsf{R}}\}$ and $e_1,\ldots e_{k'}\in S^{\Imc}[d]$ pairwise-distinct with $k'\leq k$, there are $e'_1,\ldots e'_{k'}\in S^{\Imc'}[d']$ pairwise-distinct with $(e_j,e'_j)\in Z_{i+1}$ for every $1\leq j\leq k'$
    \item[(back)] symmetric to the (forth)
\end{itemize}
If $(Z_1,\ldots,Z_m)$ is an $m$-bisimulation up to $k$ between $\Imc$ and $\Imc'$ with $(d,d')\in Z_1$, we say that 
there is a $m$-bisimulation up to $k$ linking $d$ to $d'$ and write $(\Imc,d)\;\underline{\leftrightarrow}^{k}_{m}\;(\Imc',d')$.
\end{definition}

\revision{We say that a concept $C$ is \emph{preserved} under $m$-bisimulations (up to $k$ if whenever $(\Imc,d)\;\underline{\leftrightarrow}^{k}_{m}\;(\Imc',d')$ and $d\in C^{\Imc}$ then also $d'\in C^{\Imc'}$.}

\begin{proposition}\label{prop:ALCQIpreservation}
If $C\in\mathcal{ALCQI}[\Sigma_{\mathsf{C}},\Sigma_{\mathsf{R}}]$ then $C$ is preserved under $\mathcal{ALCQI}$ $\mathsf{dp}(C)$-bisimulations up to $\mathsf{nr}(C)$.
\end{proposition}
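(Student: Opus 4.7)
The plan is to prove a strengthening of the statement by structural induction on the concept $C$, tracking simultaneously the depth of the subconcept being evaluated and the level of the bisimulation at which we evaluate it. Concretely, I would prove: for every $\mathcal{ALCQI}[\Sigma_{\mathsf{C}},\Sigma_{\mathsf{R}}]$-concept $D$, every $\mathcal{ALCQI}$ $m$-bisimulation up to $k$ given by $(Z_1,\ldots,Z_m)$ between $\mathcal{I}$ and $\mathcal{I}'$, every index $1\leq i\leq m$ with sufficient remaining depth for $D$ (so that $\mathsf{dp}(D)$ steps of (forth)/(back) can still be taken from level $i$), and every $(d,d')\in Z_i$ with $\mathsf{nr}(D)\leq k$, we have $d\in D^{\mathcal{I}}$ iff $d'\in D^{\mathcal{I}'}$. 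The proposition follows by instantiating at $D=C$ and $i=1$.

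The base cases are straightforward: concept names are handled directly by the (atom) clause, while $\top$ and $\bot$ are trivially preserved. For the Boolean cases $D=D_1\sqcap D_2$, $D_1\sqcup D_2$ and $\neg D_1$, the induction hypothesis applies at the same level $i$ to each immediate subconcept, since $\mathsf{dp}$ and $\mathsf{nr}$ are monotone under taking subconcepts, and the truth value of $D$ is then a Boolean function of the truth values of its subconcepts on both sides. The quantifier and counting cases are the heart of the proof: for $D=(\geq k' S.D')$ with $k'\leq \mathsf{nr}(D)\leq k$ and $S$ a (possibly inverse) role, if $d\in D^{\mathcal{I}}$ is witnessed by pairwise-distinct $S$-successors $e_1,\ldots,e_{k'}$ all in $D'^{\mathcal{I}}$, then the (forth) clause, applicable because $i<m$ thanks to the depth budget, supplies pairwise-distinct $e'_1,\ldots,e'_{k'}\in S^{\mathcal{I}'}[d']$ with $(e_j,e'_j)\in Z_{i+1}$; the IH applied to $D'$ at level $i+1$ (note $\mathsf{dp}(D')<\mathsf{dp}(D)$) gives $e'_j\in D'^{\mathcal{I}'}$, so $d'\in D^{\mathcal{I}'}$. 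The converse direction uses (back) symmetrically, and the $\exists$ case is simply $k'=1$. For $D=\forall S.D'$, I would dualize: take any $e'\in S^{\mathcal{I}'}[d']$, use (back) with $k'=1$ to obtain $e\in S^{\mathcal{I}}[d]$ with $(e,e')\in Z_{i+1}$, then $e\in D'^{\mathcal{I}}$ by hypothesis and hence $e'\in D'^{\mathcal{I}'}$ by the IH.

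Inverse roles require no separate treatment since the (forth)/(back) clauses quantify uniformly over $S\in\Sigma_{\mathsf{R}}\cup\{R^-\mid R\in\Sigma_{\mathsf{R}}\}$. The main bookkeeping point, and the only real obstacle, is aligning the level index $i$ with the remaining quantifier depth as we recurse into subconcepts, and ensuring that the pairwise-distinctness of the witnesses for a number restriction $\geq k'S.D'$ is preserved on both sides; the latter is precisely the content of requiring $\mathsf{nr}(C)\leq k$ and is the reason the parameter $k$ appears explicitly in the definition of bisimulation up to $k$.
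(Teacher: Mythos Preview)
Your proposal is correct and is exactly the standard structural induction that the paper's one-line proof (``An easy consequence of the above clauses'') leaves implicit. The level-index bookkeeping and the use of (forth)/(back) to transfer pairwise-distinct witnesses for number restrictions are the only nontrivial points, and you handle both correctly.
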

\begin{proof}
An easy consequence of the above clauses.
\end{proof}

\finitecontrollability*

\begin{proof}
Since the language is closed under all the Boolean operations, it suffices to show that every DL-Lite ontology $\ont$ and every $\mathcal{ALCQI}$ concept $C$ over the finite signature 
$\Sigma_{\mathsf{C}},\Sigma_{\mathsf{R}}$ that is satisfiable w.r.t. $\ont$ is already satisfiable in in a finite interpretation satisfying $\ont$. So let $\ont$ be a DL-Lite ontology and $C$ be a concept from the language $\mathcal{ALCQI}[\Sigma_{\mathsf{C}},\Sigma_{\mathsf{R}}]$ (for some finite $\Sigma_{\mathsf{C}}\subseteq\NC,\Sigma_{\mathsf{R}}\subseteq\NR$) such that $C$ is satisfiable w.r.t. $\ont$. Let $(\Imc,d)$ be a pointed interpretation satisfying $C$ such that $\Imc\models\ont$. If $|\Imc|$ is finite, we are done, so assume $|\Imc|$ is infinite with enumeration $\Delta^{\Imc}=\{d_1,d_2,d_3,\ldots\}$ where $d=d_1$. Via a standard modal unravelling argument, we may assume (up to bisimulation) that $\Imc$ is a tree, meaning that for every $e\in\Delta^{\Imc}$ there is a unique path
\[d=e_1\xrightarrow{S_1}\ldots\xrightarrow{S_n}e_{n+1}=e\]
in $\Imc$, where for each $1\leq i\leq n$ either $S_i\in\Sigma_{\mathsf{R}}$ or $\overline{S}_i\in\Sigma_{\mathsf{R}}$. Then the distance $d_{\Imc}(d,e)$ between $d$ and $e$ in $\Imc$ is $n$. For each node $e\in\Delta^{\Imc}$ at distance $k\leq\mathsf{dp}(C)$ from $d$ in $\Imc$, we define $tp(e):=\{D\in\mathcal{ALCQI}[\Sigma_{\mathsf{C}},\Sigma_{\mathsf{R}}]\;|\;\mathsf{dp}(D)\leq \mathsf{dp}(C)-k,\;\mathsf{nr}(D)\leq\mathsf{nr}(C), \;e\in D^{\Imc}\}\;\cup\;\{B\;\text{basic concept}\;|\;B\;\text{occurs in}\;\ont,\; e\in B^{\Imc}\}$. There are only finitely many such types and hence there are only finitely many types realized in $\Imc$. Note that the type depends on the height of the node (i.e. its distance from the root $d$); if the node $e$ is of height $k$ only formulas of depth at most $\mathsf{dp}(C)-k$ occur in $tp(e)$.\footnote{Note that the notion of height of a node is defined differently then the height $h_R(\Imc,d)$ of a pointed interpretation $(\Imc,d)$ w.r.t. a role name $R$.}

For each $0\leq i\leq \mathsf{dp}(C)$, define a relation $Z_i\subseteq\Delta^{\Imc}\times\Delta^{\Imc}$ as follows.
\[Z_i:=\{(e,e')\in\Delta^{\Imc}\times\Delta^{\Imc}\;|\;d_{\Imc}(d,e)=d_{\Imc}(d,e')=i,\]
\[tp(e)=tp(e')\;\text{and}\;e'\;\text{is among the first}\;\mathsf{dp}(C)\;\text{elements}\]
\[\text{in the enumeration of}\;\Delta^{\Imc}\;\text{of this height and type}\}\]
In particular, this implies that $Z_0=\{(d,d)\}$. Let $Z:=\bigcup_{0\leq i\leq \mathsf{dp}(C)}$ and let $\Imc'$ be the substructure induced by the set $cod(Z)=\{e\;|\;\exists e'\;(e',e)\in Z\}$, which is a subset of $\Delta^{\Imc}$. Clearly $|\Imc'|$ is finite since for each of the finitely many types at most $\mathsf{nr}(C)$ elements of $\Delta^{\Imc}$ realizing that type can occur in the codomain of $Z$. We claim that $(Z_0,\ldots,Z_{\mathsf{dp}(C)})$ is a $\mathsf{dp}(C)$-bisimulation up to $\mathsf{nr}(C)$ linking $d$ in $\Imc$ to $d$ in $\Imc'$, from which it follows that $d\in C^{\Imc'}$ by Prop.~\ref{prop:ALCQIpreservation}.

Since each $Z_i$ only links elements in $\Delta^{\Imc}$ that satisfy the same atomic (depth $0$) type, it follows that the (atom) clause holds for all $Z_i$. Note that thereby the relation $Z_m$ satisfies all the required clauses. Let $0\leq i<m$. Clearly the (back) clause holds for all pairs in $Z_i$ since for each $e'\in cod(Z_i)$, we have that $(e',e')\in Z_i$ and $\Imc'\subseteq\Imc$. For the (forth) clause, consider a pair $(e,e')\in Z_i$, let $s$ be a role name or the inverse of a role name in $\Sigma_{\mathsf{R}}$ and let $t_1,\ldots, t_k\in S^{\Imc}[e]$ be distinct $s$-successors of $e$, where $k\leq\mathsf{nr}(C)$. It follows that $\{t_1,\ldots,t_k\}$ contains at most $k\leq\mathsf{nr}(C)$ realizations of a type, and hence there are sufficiently many distinct $s$-successors of $e'$ in $\Imc$ that realize those same types, and are of the same height and therefore $Z_i$-related to the appropriate $t_i$'s. This shows that each $Z_i$ satisfies the (forth) clauses and hence $(Z_0,\ldots, Z_{\mathsf{dp}(C)})$ is a $\mathsf{dp}(C)$-bisimulation up to $\mathsf{nr}(C)$ linking $d$ in $\Imc$ to $d$ in $\Imc'\subseteq\Imc$.

Define an extension $\Jmc$ of $\Imc'$ as follows.
\begin{itemize}
    \item $\Delta^{\mathcal{J}}:=\Delta^{\Imc'}\cup\{d_{\exists S} \mid S\in\mathsf{sig}(\ont)\;\text{or}\;\overline{S}\in\mathsf{sig}(\ont),\exists S\;\text{is satisfiable w.r.t.}\;\ont\}$
    \item For all $A\in\Sigma_{\mathsf{C}}'$, $A^{\mathcal{J}}:=A^{\Imc'}\cup\{d_{\exists S} \mid \ont\models\exists S\sqsubseteq A\}$
    \item For all $R\in\Sigma_{\mathsf{R}}'$, $R^{\mathcal{J}}$ is the union of $R^{\Imc'}$ with $\{(e,d_{\exists S^-})\in\Delta^{\Imc'}\times\Delta^{\mathcal{J}} \mid \ont\models B\sqsubseteq \exists S, e\in (B\sqcap\neg\exists S)^{\Imc'},d(d,e)=\mathsf{dp}(C)\}\cup $ \newline
    $\{(d_{\exists S},d_{\exists T^-})\in\Delta^{\mathcal{J}}\times\Delta^{\mathcal{J}} \mid \ont\models \exists S\sqsubseteq\exists T\}\cup $ \newline
    $\{(d_{\exists T},d_{\exists S})\in\Delta^{\mathcal{J}}\times\Delta^{\mathcal{J}} \mid \ont\models\exists S\sqsubseteq \exists T^-\}$
\end{itemize}
That is, the only edges between elements in $\Delta^{\Imc'}$ and $\Delta^{\Jmc}\setminus\Delta^{\Imc}$ go between leaves of the tree $\Imc'$ (at distance $\mathsf{dp}(C)$ from the root $d$) and newly added elements of the form $d_{\exists S}$. It follows that $(\Imc',d)\;\underline{\leftrightarrow}^{\mathsf{nr}(C)}_{\mathsf{dp}(C)}\;(\mathcal{J},d)$ because the neighborhood of $d$ in $\Imc'$ and $\mathcal{J}$ at distance $\mathsf{dp}(C)$ remains the same. Thus, by Prop.~\ref{prop:ALCQIpreservation} $d\in C^{\mathcal{J}}$ as well. It rests to show that $\Jmc\models\ont$, so let $B\sqsubseteq C\in\ont$.

Suppose $e=d_{\exists S}$ is an element of $\Delta^{\Jmc}\setminus\Delta^{\Imc}$ and that $e\in B^{\Jmc}$. It follows straight from the definition of $\Jmc$ that for every basic concept $B$ over $\mathsf{sig}(\ont)$ we have $d_{\exists S}\in B^{\Jmc}$ iff $\ont\models\exists S\sqsubseteq B$. Hence we know that $d_{\exists S}\in B^{\Jmc}$ implies that $\ont\models\exists S\sqsubseteq B$, which together with $\ont\models B\sqsubseteq C$ implies that $\ont\models\exists S\sqsubseteq C$. Using the equivalence again we get that $d_{\exists S}\in C^{\Jmc}$.

Now suppose that $e\in\Delta^{\Jmc}$ at distance $k<\mathsf{dp}(C)$ from $d$ with $e\in B^{\Jmc}$. In this case it is clear that $(\Imc,e)\;\underline{\leftrightarrow}^{\mathsf{nr}(C)}_{1}\;(\Imc',e)\;\underline{\leftrightarrow}^{\mathsf{nr}(C)}_{1}\;(\Jmc,e)$.
Since $\mathsf{dp}(C),\mathsf{dp}(D)\leq 1$ and $e\in B^{\Jmc}$, via the bisimulation $e\in B^{\Imc}$, but as $\Imc\models\ont$ we get $e\in C^{\Imc}$ from which we conclude that $e\in C^{\Jmc}$ via the bisimulation again.

Finally, suppose that $e\in\Delta^{\Imc'}$ is at distance $\mathsf{dp}(C)$ from $d$ with $e\in B^{\Jmc}$. It follows from the definition of $\Jmc$ that $e\in B^{\Jmc}$ only if there is some basic concept $B'$ such that $\ont\models B'\sqsubseteq B$ and $e\in (B'\sqcap\neg B)^{\Imc'}$. Note that by concatenating these subsumptions we get $\ont\models B'\sqsubseteq C$. If $C$ is a concept name or the negation of one, it follows that $e\in C^{\Jmc}$ because $(\Imc,e)\;\underline{\leftrightarrow}^{0}_{0}\;(\Jmc,e)$ and $\Imc\models\ont$.

Next, suppose that $C$ is of the form $\exists S$. If $e\in(\exists S)^{\Imc'}$, it follows that $e\in(\exists S)^{\Jmc}$ as well as $\Imc'\subseteq\Jmc$. Else $e\in (B'\sqcap\neg\exists S)^{\Jmc}$ so by definition of $\Jmc$, $(e,d_{\exists S^-})\in R^{\Jmc}$ and thus $e\in(\exists S)^{\Jmc}$. Finally, suppose that $C$ is of the form $\neg\exists S$. Since $\Imc\models\ont$ and $\ont\models B'\sqsubseteq\neg\exists S$, we get that $e\in(\neg\exists S)^{\Imc}$. But as $\Imc'\subseteq\Imc$ it follows that $e\in(\neg\exists S)^{\Imc'}$. It can only be that an $S$-successor is added to $e$ in $\Jmc$ if $\ont\models B''\sqsubseteq\exists S$ for some basic concept $B''$ with $e\in(B''\sqcap\neg\exists S)^{\Imc}$. But in that case $e\in(B'')^{\Imc}$ and since $\Imc\models\ont$ we get $e\in(\exists S\sqcap\neg\exists S)^{\Imc}$; a contradiction. Hence it cannot be that an $S$-successor is added to $e$ in $\Jmc$ and hence $e\in(\neg\exists S)^{\Jmc}$.
\end{proof}

\section{Proofs of Section~\ref{sec:withoutontologies}}

We start by proving Thm.~\ref{thm:ALEQcharacterisations}. To this end, we first define a set of interpretations that characterises $\top$ w.r.t. $\lang{\forall,\exists,\geq,\sqcap,\top}$.

Let $\Imc$ be the interpretation consisting of a single point with no successors, i.e. $\Delta^{\Imc}:=\{d\}$ and $R^{\Imc}:=\emptyset$ and $A^{\Imc}:=\emptyset$ for all $R\in\Sigma_{\mathsf{R}}$ and $A\in\Sigma_{\mathsf{C}}$. Further, let $\Imc'$ be the interpretation with one loopstate with a deadlock successor, i.e. $\Delta^{\mathcal{I'}}:=\{d',e'\}$ and $R^{\mathcal{I'}}:=\{(d',d'),(d',e')\}$ and $A^{\mathcal{I'}}:=\emptyset$ for all $R\in\Sigma_{\mathsf{R}}$ and $A\in\Sigma_{\mathsf{C}}$.

\begin{lemma}
$E^+_{\top}:=\{(\Imc,d),(\mathcal{I'},d')\}$ is a finite characterisation of $\top$ w.r.t. $\lang{\forall,\exists,\geq ,\sqcap,\top}$.
\end{lemma}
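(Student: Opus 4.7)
The plan is to establish both conditions of a finite characterisation: that $\top$ fits $E^+_\top$ (trivial, since every element satisfies $\top$) and, for uniqueness, that any $C\in\lang{\forall,\exists,\geq,\sqcap,\top}$ fitting both examples is equivalent to $\top$. The proof of the latter will proceed by induction on the role depth of $C$, after an auxiliary observation linking the two examples.

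The first step is to record a key lemma: for every $D\in\lang{\forall,\exists,\geq,\sqcap,\top}$, $d\in D^{\Imc}$ iff $e'\in D^{\Imc'}$. This holds by a routine structural induction on $D$, since $d$ and $e'$ both have no outgoing edges under any role name and no atomic concept labels, so both falsify every atomic concept and every concept of the form $\exists R.E$ or ${\geq k\,R.E}$ (with $k\geq 1$), both vacuously satisfy every concept of the form $\forall R.E$, and the induction passes through $\sqcap$ and $\top$. It is essential here that $\Obf$ does not contain $-$, so all quantifiers range over role names rather than inverses; otherwise $e'$ would have $d'$ as an $R^-$-successor while $d$ still has none.

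For the main induction, I would put $C$ in the standard normal form $C_1\sqcap\cdots\sqcap C_n$, where each $C_i$ is either $\top$, an atomic concept, or of the form $\forall R.D$, $\exists R.D$, or $\geq k\,R.D$. Since $C$ fits $E^+_\top$ iff each $C_i$ does, it suffices to show that every such basic $C_i$ fitting $E^+_\top$ is equivalent to $\top$. The requirement $d\in C_i^{\Imc}$ immediately excludes $C_i$ being atomic, $\exists R.D$, or $\geq k\,R.D$, because $d$ carries no atomic labels and has no successors. So $C_i$ is $\top$ (trivially equivalent to $\top$) or $C_i=\forall R.D$.

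In the remaining case $C_i=\forall R.D$, the condition $d'\in C_i^{\Imc'}$ forces $d',e'\in D^{\Imc'}$, because the $R$-successors of $d'$ are exactly $\{d',e'\}$. The key lemma then converts $e'\in D^{\Imc'}$ into $d\in D^{\Imc}$, so $D$ fits $E^+_\top$ and has strictly smaller role depth than $C_i$; the induction hypothesis yields $D\equiv\top$, whence $C_i\equiv\forall R.\top\equiv\top$. Assembling the conjuncts gives $C\equiv\top$. The main conceptual step is the key lemma together with noticing that the absence of inverses is what makes $d$ and $e'$ indistinguishable in the language; the rest is a clean structural induction.
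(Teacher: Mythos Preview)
Your proof is correct and follows essentially the same approach as the paper's: an induction on role depth, using that $d\in C^{\Imc}$ rules out all top-level conjuncts except $\top$ and $\forall R.D$, and then using the fact that $d'\in(\forall R.D)^{\Imc'}$ forces both $d',e'\in D^{\Imc'}$, from which one recovers $d\in D^{\Imc}$ to close the induction. The only difference is cosmetic: where the paper invokes preservation under counting bisimulations (observing that $\{(d,e')\}$ is a counting bisimulation between $(\Imc,d)$ and $(\Imc',e')$, since both nodes have no successors), you prove the equivalence $d\in D^{\Imc}\Leftrightarrow e'\in D^{\Imc'}$ directly by structural induction; both arguments are equally valid and equally short.
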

\begin{proof}
Let $C\in\lang{\forall,\exists,\geq,\sqcap,\top}$ be arbitrary. We show that $C$ fits $E^+_{\top}$ iff $C\equiv\top$ by induction on the role depth of $C$. For the base case, if a depth $0$ concept fits these examples, clearly it must be a conjunction of $\top$'s and hence equivalent to $\top$, because no concept name is made true at point $d$ in $\Imc$ (or on any other positive example). 

For the induction step, let $C$ be of role depth $n+1$. Since $d\in C^{\Imc}$ we know that $C$ is equivalent to a concept of the form $\forall R_{1}.C_1\sqcap\ldots\forall R_m.C_m$ where $R_1,\ldots,R_m\in\Sigma_{\mathsf{R}}$ and $C_1,\ldots,C_m$ have role depth at most $n$. For each $1\leq i\leq m$, since $d'\in(\forall R_i.C_i)^{\mathcal{I'}}$ it follows that both $d',e'\in C_i^{\mathcal{I'}}$. 

However, it is easily seen that $(\mathcal{I'},e')$ is counting bisimilar to $(\Imc,d)$, where the relation defining the counting bisimulation is simply the singleton pair $\{(d,e')\}$ (this is a counting bisimulation because both nodes have no successors). Hence by preservation of $\lang{\forall,\exists,\geq,\sqcap,\top,\bot}$ under counting bisimulations, $d\in C_i^\Imc$. But then $C_i$ fits $E^+_{\top}$ so by the inductive hypothesis $C_i\equiv\top$. Since this holds for all $1\leq i\leq m$ and $\forall R.\top\equiv\top$ for all $R\in\Sigma_{\mathsf{R}}$, it follows that $C\equiv\top\sqcap\ldots\sqcap\top\equiv\top$.
\end{proof}

Now we are in a position to prove Prop.~\ref{prop:knrdepthn}.
As a warming up exercise, we first show an easy special case.

\begin{proposition}
For $n\geq 0$, there is an interpretation $(\Imc_n,d)$ (which can be constructed in polynomial time), such that
    for all $\lang{\exists,\sqcap,\top}$-concepts $C$,    
    the following are equivalent:
    \begin{enumerate}
    \item $C$ is equivalent to a concept expression of role depth at most $n$;
    \item $C$ fits the positive example $(\Imc_n,d)$.    \end{enumerate}
\end{proposition}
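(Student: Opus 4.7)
The plan is to take $\Imc_n$ to be the ``generic forward path'' on domain $\{d_0,d_1,\ldots,d_n\}$ with $d:=d_0$, where $(d_i,d_{i+1})\in R^{\Imc_n}$ for every $R\in\Sigma_{\mathsf{R}}$ and every $0\leq i<n$, and $d_i\in A^{\Imc_n}$ for every $A\in\Sigma_{\mathsf{C}}$ and every $0\leq i\leq n$, with no further tuples in any relation. This is trivially polynomial-time constructible in $n$, $|\Sigma_{\mathsf{C}}|$, and $|\Sigma_{\mathsf{R}}|$.

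For the direction (1) $\Rightarrow$ (2), I would prove by induction on $n$ that \emph{every} $\lang{\exists,\sqcap,\top}$-concept $C'$ with $\mathsf{dp}(C')\leq n$ is satisfied at $d_0$ in $\Imc_n$. The inductive step is a straightforward structural induction on $C'$: atoms and $\top$ are satisfied at every $d_i$ by construction, conjunctions are immediate, and $\exists R.D$ with $\mathsf{dp}(D)\leq n-1$ uses that $(d_0,d_1)\in R^{\Imc_n}$ together with the observation that the substructure of $\Imc_n$ induced by $\{d_1,\ldots,d_n\}$ is isomorphic to $\Imc_{n-1}$ with distinguished element $d_1$; by the outer induction hypothesis applied to $D$, we then have $d_1\in D^{\Imc_n}$. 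Closure under equivalence then yields $d_0\in C^{\Imc_n}$ for any $C$ equivalent to such a $C'$.

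For the direction (2) $\Rightarrow$ (1), I would rely on the standard homomorphism characterization for $\lang{\exists,\sqcap,\top}$: $d_0\in C^{\Imc_n}$ iff there is a homomorphism from the description tree $T_C$ of $C$ into $\Imc_n$ mapping the root of $T_C$ to $d_0$. Since $\Imc_n$ is a strictly forward chain, for each $0\leq k\leq n$ the element $d_k$ is the \emph{unique} element reachable from $d_0$ via any $R$-path of length $k$, and no paths of length greater than $n$ exist. Hence any such homomorphism must send every node of $T_C$ at distance $k$ from its root to $d_k$, forcing the height of $T_C$ to be at most $n$, i.e.\ $\mathsf{dp}(C)\leq n$. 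This trivially yields (1), since $C$ is then equivalent to itself, a concept of role depth at most $n$.

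The main obstacle is really just packaging the two standard ingredients---the isomorphic restriction to the forward cone of $\Imc_n$ and the homomorphism characterization of satisfaction for $\lang{\exists,\sqcap,\top}$---into clean inductions; no genuinely new technical difficulty is anticipated. This simple chain construction also foreshadows the more delicate construction required for the full Prop.~\ref{prop:knrdepthn}, where a $\top$-characterizing gadget at the bottom of each path and sufficient branching are additionally needed to handle $\forall$ and $\geq k$.
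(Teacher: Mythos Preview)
Your proposal is correct and matches the paper's approach exactly: the paper constructs the same ``generic forward path'' $\Imc_n$ on $\{d_0,\ldots,d_n\}$ with all concept names true everywhere and every role linking $d_i$ to $d_{i+1}$, and then appeals to ``an easy induction argument'' for the equivalence. You have simply spelled out that induction in more detail, invoking the standard description-tree/homomorphism characterization of $\lang{\exists,\sqcap,\top}$-satisfaction for the $(2)\Rightarrow(1)$ direction where the paper leaves it implicit.
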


\begin{proof}
Let $\Imc_n$ be the interpretation
given by 
    $\Delta^{\Imc_n} = 
    A^{\Imc_n}=\{d_0, \ldots, d_n\}$ for all $A\in\Sigma_{\mathsf{C}}$, 
    and $R^{\Imc_n}=\{(d_i,d_{i+1})\mid i< n\}$
    for all $R\in\Sigma_{\mathsf{R}}$. That is, $\Imc_n$ is the 
    directed path
        \[ d_0\xrightarrow{\bigcap_{R\in\NR}R} d_1 \xrightarrow{\bigcap_{R\in\NR}R}  \cdots d_{n-1} \xrightarrow{\bigcap_{R\in\NR}R} d_n\]
    where each element satisfies all $A\in\Sigma_{\mathsf{C}}$.
    An easy induction argument shows
    that, for all natural numbers $k\leq n$ and for all 
    $\lang{\exists,\sqcap,\top}$-concept expressions $C$,
    $C$ has role depth $k$ if and only if $d_{n-k}\in C^{\Imc_n}$. In particular, this shows that the proposition holds (take $k=n$ and $d=d_0$).
    \end{proof}

We can do something similar for 
$\lang{\forall,\exists,\geq,\sqcap,\top}$:

\knrdepthn*

\begin{proof}
Given a pointed interpretation $(\Imc,d)$, we define a new pointed interpretation $(\Imc,d)^n_k,d_1$:
\begin{itemize}
\item  $\Delta^{(\Imc,d)^n_k}:=\Delta^{\Imc}\cup\{d_1,d^{1}_{2},\ldots,d^{k}_{2},\ldots,d^{1}_{n},\ldots,d^{k}_{n}\}$, 
\item $r^{(\Imc,d)^n_k}:=r^{\Imc}\cup\{(d_1,d^j_2)\;|\;1\leq j\leq k\}\cup\{(d^j_i,d^{j'}_{i+1})\;|\;2\leq i<n, 1\leq j,j'\leq k\}\cup\{(d^j_n,d)\:|\;1\leq j\leq k\}$ for all $r\in\NR$ and
\item $A^{(\Imc,d)^n_k}:=A^{\Imc}\cup\{d_1,d^{1}_{2},\ldots,d^{k}_{2},\ldots,d^{1}_{n},\ldots,d^{k}_{n}\}$ for all $A\in\NC$.     
\end{itemize}
Finally, set $D_{n,k}:=\{(\Imc,d)^n_k,d_1)\;|\;(\Imc,d)\in E^+_{\top}\}$. In other words, $D_{n,k}$ consists of all pointed interpretations $(\Imc,d)^n_k$ with distinguished point $d_1$ of the form:
        \[ 
        d_1
        \xrightarrow{\bigcap_{R\in\NR}R} 
        \begin{array}{c}d_2^1\\\vdots\\d_2^k\end{array}
        \xrightarrow{\bigcap_{R\in\NR}R}
        \cdots 
        \begin{array}{c}d_n^1\\\vdots\\d_n^k\end{array}
        \xrightarrow{\bigcap_{R\in\NR}R} 
        \fbox{$(\mathcal{I},d)$}\]

with $(\mathcal{I},d)\in E^+_\top$ (with edges running from $d_1$ to each $d_2^i$, from each $d_i^j$ to each $d_{i+1}^{j'}$, and from each $d_n^i$ to $d'$).

Suppose that some $C\in\lang{\forall,\exists,\geq,\sqcap,\top}$ fits $D_{n,k}$ as positive examples.
By the same argument as before, $\mathsf{dp}(C)$ must be bounded by $n$. Moreover, since the width (i.e. the degree as a graph) of the examples in $D_{n,k}$ is bounded by $k$, it also follows that $\mathsf{nr}(C)\leq k$. To see this, first observe that every domain element in $(\Imc,d)^n_k$  and $(\Imc',d')^n_k$ has \emph{at most} $k$ successors. Thus, if there is some concept expression $C\in\lang{\forall,\exists,\geq,\sqcap,\top}$ with some sub-expression $C'$ which has a conjunct of the form $\geq k' C''$ where $k'>k$ that fits $D_{n,k}$, then $C''$ must be true at some domain element in  $(\Imc_0,d)^n_k$ or $(\Imc',d')^n_k$, which cannot be as each of these elements has at most $k<k'$ successors.
 
Conversely, it is not difficult to see that any concept $D\in\lang{\forall,\exists,\geq,\sqcap,\top}$ with $\mathsf{dp}(D)\leq n$ and $\mathsf{nr}(C)\leq k$ is satisfied on all the examples in $D_{n,k}$, because every point $d_i^j$ satisfy all basic concepts by construction and the interpretations are ``fat'' in the sense that every point $d_i^j$ has $k$ counting-bisimilar successors.
\end{proof}

Finally, Thm.~\ref{thm:ALEQcharacterisations} follows from Prop.~\ref{prop:knrdepthn}.

\ALEQcharacterisations*

\begin{proof}
Every concept expression $C$ in $\lang{\forall,\exists,\geq,\sqcap,\top}$ has a finite role depth $n$ and maximum number restriction $k$. Hence by Prop.~\ref{prop:knrdepthn} 
we know that $C$ fits $E^+_{n,k}$. Moreover, since there are only finitely many concepts expression up to logical equivalence that have maximal role depth $n$ and maximal number restriction $k$, it follows that there is some natural number $m$ and concept expressions $C_1,\ldots, C_m$ such that every concept expression that fits $E^+_{n,k}$ is equivalent to $C_i$ for some $1\leq i\leq m$. In  particular, there is some $C_i$ which is equivalent to $C$. But then by Thm.~\ref{thm:finitecontrollability} we can pick $m-1$ examples $(\Imc_1,d_1),\ldots,(\Imc_{m-1},d_{m-1})$ distinguishing $C_i$ from all $C_j$ with $i\ne j$. Then clearly adding these (positive or negative) examples to $E^+_{n,k}$ results in a finite characterisation of $C$ w.r.t. $\lang{\forall,\exists,\geq,\sqcap,\top}$.
\end{proof}

We proceed with the proofs of the results leading up to Thm.~\ref{thm:ELQcharacterisations}, which provides an elementary (doubly exponential) construction of finite characterisations for the fragment $\lang{\exists,\geq,\sqcap,\top}$ of $\lang{\forall,\exists,\geq,\sqcap,\top}$.

\begin{restatable}{lemma}{ELQsubsumption}
\label{lem:ELQsubsumption}
For all $\lang{\geq,\sqcap,\top}$-concepts 
\[ C = A_1\sqcap\cdots\sqcap A_m \sqcap \underset{1\leq i\leq n}{\bigsqcap}\geq k_i\;R_i.C_i\]
and
\[ C' = A'_i\sqcap\cdots\sqcap A'_{m'}\sqcap \underset{1\leq i\leq n'}{\bigsqcap}\geq k'_i\;R'_i.C'_i \]
(where $m,n,m',n'\geq 0$) the following are equivalent:
\begin{enumerate}
    \item $\models C\sqsubseteq C'$
    \item $\{A'_1, \ldots, A'_{m'}\}\subseteq \{A_1, \ldots, A_m\}$ and for every $j\leq n'$ there is an $i\leq n$ such that
    $k_i\geq k'_j$, $R_i=R'_j$, and 
    $\models C_i\sqsubseteq C'_j$.
\end{enumerate}
\end{restatable}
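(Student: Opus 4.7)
I would prove the two directions separately. The direction (2)$\Rightarrow$(1) should follow by direct semantic verification: given any $\Imc$ with $d\in C^\Imc$, each atomic conjunct $A'_l$ of $C'$ will hold at $d$ because $A'_l\in\{A_1,\ldots,A_m\}$, and each conjunct $\geq k'_j R'_j.C'_j$ will hold because the $i$ witnessing (2) supplies $k_i\geq k'_j$ distinct $R_i=R'_j$-successors of $d$ satisfying $C_i$, hence also $C'_j$ via $\models C_i\sqsubseteq C'_j$.

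For (1)$\Rightarrow$(2), I would argue the contrapositive by constructing, for each failure of (2), an interpretation $\Imc$ with $d\in C^\Imc$ but $d\notin (C')^\Imc$. If some $A'_l\notin\{A_1,\ldots,A_m\}$, I would take a standard tree-like model of $C$ rooted at $d$ where $d$ satisfies exactly the atoms $A_1,\ldots,A_m$ and has $k_i$ fresh disjoint subtree-models of $C_i$ as $R_i$-successors. Otherwise, there is some $j$ such that no $i$ simultaneously satisfies $R_i=R'_j$, $k_i\geq k'_j$, and $\models C_i\sqsubseteq C'_j$. I would then partition the conjuncts of $C$ as follows:
\begin{itemize}
    \item (B1) $R_i\neq R'_j$: attach $k_i$ fresh disjoint $R_i$-successors, each the root of a model of $C_i$;
    \item (B2) $R_i=R'_j$ and $k_i\geq k'_j$ (so $\not\models C_i\sqsubseteq C'_j$ by assumption): attach $k_i$ fresh disjoint $R_i$-successors, each the root of a model of $C_i$ falsifying $C'_j$, which exists by $\not\models C_i\sqsubseteq C'_j$;
    \item (B3) $R_i=R'_j$ and $k_i<k'_j$: merge all conjuncts in this group by sharing a single set of $K:=\max_{i\in\text{(B3)}} k_i$ distinct $R'_j$-successors, each the root of an arbitrary model of $\bigsqcap_{i\in\text{(B3)}} C_i$ (satisfiable since every $\lang{\geq,\sqcap,\top}$-concept is satisfiable).
\end{itemize}

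In the resulting $\Imc$ the root $d$ satisfies $C$: each conjunct $\geq k_i R_i.C_i$ is witnessed either by its own $k_i$ fresh successors (for (B1) and (B2)) or by the $K\geq k_i$ shared successors, each satisfying $C_i$ through the conjunction (for (B3)). At the same time, the number of $R'_j$-successors of $d$ satisfying $C'_j$ is at most $K<k'_j$: (B1) produces no $R'_j$-successors at all, (B2) produces $R'_j$-successors that do not satisfy $C'_j$ by choice, and (B3) produces only the $K$ shared successors, whether or not they happen to satisfy $C'_j$. Hence $d\notin(\geq k'_j R'_j.C'_j)^\Imc$, contradicting $C\sqsubseteq C'$.

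The hardest part will be the (B3) case: handling each such conjunct independently would introduce $\sum_{i\in\text{(B3)}} k_i$ fresh $R'_j$-successors, which may exceed $k'_j$ even when every $k_i<k'_j$. The merging trick — letting one batch of successors witness all (B3) conjuncts simultaneously by satisfying their conjunction — keeps the count of $C'_j$-successors at $K<k'_j$ while still verifying each individual $\geq k_i R_i.C_i$. Notably, no induction on the lemma itself is required; I only invoke the elementary facts that $\lang{\geq,\sqcap,\top}$-concepts are always satisfiable, and that $\not\models D\sqsubseteq E$ directly yields a model of $D$ falsifying $E$.
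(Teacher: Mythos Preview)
Your proposal is correct and follows essentially the same construction as the paper's proof: both handle the atomic case by a tree model with exactly the atoms $A_1,\ldots,A_m$ at the root, and both build the counterexample for the counting case by (i) attaching fresh subtrees for conjuncts with $R_i\neq R'_j$, (ii) attaching $k_i$ copies of a model of $C_i\sqcap\neg C'_j$ for conjuncts with $R_i=R'_j$ and $k_i\geq k'_j$, and (iii) sharing a small batch of $R'_j$-successors satisfying the conjunction of all $C_i$ with $R_i=R'_j$ and $k_i<k'_j$. The only cosmetic difference is that the paper uses $k'_j-1$ shared successors in step (iii) whereas you use $K=\max_{i\in\text{(B3)}} k_i$; both are strictly below $k'_j$, so both work.
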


\begin{proof}
We will prove the direction from 1 to 2. The other direction is immediate. Assume that $\models C\sqsubseteq C'$. It is easy to construct a pointed interpretation $(\Imc,d)$ satisfying $C$ in which $d$ satisfies the
concept names $A_1, \ldots, A_m$ and no other concept name. Since $\models C\sqsubseteq C'$ it must be that $d$ satisfies all of $A'_1,\ldots, A'_{m'}$ as well and hence $\{A'_1,\ldots,A'_{m'}\}\subseteq\{A_1,\ldots,A_m\}$. Next, fix any $j\leq n'$, and consider
the conjunct $\geq k'_j R'_j.C'_j$.
To simplify the notation in what follows,
let $k=k'_j$, $R=R'_j$, and $D=C'_j$.
We must show that there exists an $i\leq n$
such that $k_i\geq k$, $R_i=R$, and $\models C_i\sqsubseteq D$. For the purpose of deriving a contradiction, suppose otherwise.

By assumption, for each $C_i$ with $R_i=R$ and $k_i\geq k$ there is some pointed interpretation $(\Imc_i,d_i)$ such that $d_i\in(C_i\sqcap\neg D)^{\Imc_i}$. Further, let $(\Jmc,d)$ be a pointed interpretation satisfying the concept
\[C^{-R}:=A_1\sqcap\cdots\sqcap A_m \sqcap\underset{1\leq i\leq n, R_i\ne R}{\bigsqcap}\geq k_i R_i.C_i\]
(that is, the concept obtained from $C$ by removing the conjuncts that have a number restriction with the role $R$ as their main operand). We can furthermore ensure (through unraveling and taking a suitable sub-interpretation) that $d$ has no $R$-successors in $\mathcal{J}$. Finally, let $(\Jmc',d')$ be any model of the conjunction of all concepts in $C_i$ for which $R_i=R$ and $k_i< k$. We will now create a pointed interpretation $(\Imc,d)$ such that $e\in(C\sqcap\neg\geq k R.D)^{\Imc}$.

First, let $\Imc'$ be the disjoint union of $\mathcal{J}$ together with $k-1$ isomorphic copies of $\Jmc'$ and $k_i$ isomorphic copies of $\Imc_i$ for each $C_i$ with $R_i=R$ and $k_i\geq k$. We denote by $d'^l$ the $l$-th copy of the designated element $d'\in\Delta^{\mathcal{J'}}$ (where $1\leq l\leq k-1$) and by $d_i^j$ the $j$-th copy of the designated element $d_i\in\Delta^{\Imc_i}$ (where $1\leq j\leq k_i$ and $1\leq i\leq n$). Finally, $\Imc$ is obtained from $\Imc'$ by  setting $R^{\Imc}:=R^{\Imc'}\cup\{(d,d'^l)\;|\;1\leq l\leq k-1\}\cup\{(d,d_i^j)\;|\;1\leq j\leq k_i\}$ (where $d$ is the designated element of $\mathcal{J}$). In a picture, $(\Imc,d)$ looks as follows (where $\{C_{i_1},\ldots C_{i_l}\}$ is the set of all $C_i$ with $R_i=R$ and $k_i\geq k$).

\vspace{0.1in}

\begin{tikzpicture}[sibling distance=0.45cm,level 2/.style={sibling distance =0.3cm}]

\node[] {$d$}
        child{ node[rtria] {$\mathcal{J}$} 
        edge from parent [draw=none] }
        child [missing] {}
        child [missing] {}
        child [missing] {} 
        child{ node[itria] {$\mathcal{J'}$} 
            child {node[fill=white] {{\footnotesize $k-1$ copies}}
            edge from parent [dashed] }
            }
        child [missing] {}
        child [missing] {}
        child{ node[fill=white] {$\ldots$}
        edge from parent [draw=none] }
        child{ node[itria] {$\Imc_1$} 
            child {node[fill=white] {{\footnotesize $k_{i_1}$ copies}}
            edge from parent [dashed] }
            }
        child [missing] {}
        child{ node[fill=white] {$\ldots$}
        edge from parent [draw=none] }
        child [missing] {}
        child{ node[itria] {$\Imc_n$} 
            child {node[fill=white] {{\footnotesize $k_{i_l}$ copies}}
            edge from parent [dashed] }
            };
\end{tikzpicture}
To reach our contradiction, it suffices to prove:

\medskip\par\noindent\emph{Claim: $d\in(C\sqcap\neg\geq k R.D)^{\Imc}$.}

\medskip\par\noindent 
First of all, note that $d_i^j\in(C_i\sqcap\neg D)^{\Imc}$ for all $1\leq j\leq k_i$ and $d'^l\in C_i^{\Imc}$ for all $1\leq l\leq k-1$ and all $C_i$ with $R_i=R$ and $k_i<k$. This is because $\lang{\exists,\geq,\sqcap,\top}$ concepts are preserved under generated submodels, and the submodel generated by the element $d_i^j\in\Delta^{\Imc}$ is exactly $\Imc_i^j$ (i.e. the $j$'th isomorphic copy of $\Imc_i$ in $\Imc$), and likewise the submodel generated by the element $d'^l\in\Delta^{\Imc}$ is exactly $\mathcal{J'}^l$ (i.e. the $l$'th copy of $\mathcal{J'}$ in $\Imc$). It follows that $d\in C^{\Imc}$. Thus it remains to show that $d\not\in(\geq k R.D)^{\Imc}$.

By construction, we have that $R^{\Imc}[d]=\{d'^l \mid 1\leq l\leq k-1\}\cup\{d_i^j \mid 1\leq j\leq k_i, C_i\in\mathcal{C}_{R}^{\geq k}\}$ since we assumed that $R^{\mathcal{J}}[d]=\emptyset$. But then the claim follows since $d_i^j\not\in D^{\Imc}$ for all $1\leq i\leq n$ and $1\leq j\leq k_i$ and $d$ has only $k-1$ other $R$-successors (namely $d'_1,\ldots,d'_{k-1}$).
\end{proof}

\begin{remark} Note that a straightforward generalisation of Lem.~\ref{lem:ELQsubsumption} to its extension $\lang{\exists,\geq,-,\sqcap,\top}$ with inverse roles does not hold, as witnessed by the following example. Let $C:=\exists R.A\;\sqcap\;\exists R.B$ and let $D:=\exists R.(A\;\sqcap \;B)$ and $D':=\;\geq 2 R.\top$. Note that $\not\models C\sqsubseteq D$ and $\not\models C\sqsubseteq D'$ but $\models C\sqsubseteq D\sqcup D'$. Consider the following concepts:
\begin{align*}
    C_0:=&\exists R.(A\sqcap\exists R^-.D)\sqcap\exists R.(B\sqcap\exists R^-.D') \\
    D_0:=&\exists R.(\exists R^-.D\sqcap\exists R^-.D')
\end{align*}
and observe that $\models C_0\sqsubseteq D_0$. This is because $\models C_0\sqsubseteq D\;\sqcup\;D'$, $\models C_0\sqcap D\sqsubseteq D_0$ and $\models C_0\sqcap D'\sqsubseteq D_0$. Thus, $\models C_0\sqsubseteq D_0$ is a valid concept subsumption that
does not allow for the same type of decomposition
as in Lem.~\ref{lem:ELQsubsumption}.
\end{remark}

\polytimesubsumption*

\begin{proof}
We can use a straightforward dynamic programming approach: 
for each subconcept $D$ of $C$ and for 
each subconcept $D'$ of $C'$, 
we test
if $\models D\sqsubseteq D'$. We do this 
in order of increasing role depth, starting
with subconcepts that are concept names.
Note that Lem.~\ref{lem:ELQsubsumption} allows
us to determine subsumption for pairs of 
sub-concepts of role depth $d$ once we
already tested subsumption or pairs of
sub-concepts of role depth strictly less than $d$.
\end{proof}

\propnormalformpolytime*
\begin{proof}
We repeatedly test, for each conjunction
in $C$, all pairwise subsumptions between
conjuncts, using Theorem~\ref{thm:polytimesubsumption}. If a subsumption 
$\models C_i\sqsubseteq C_j$ holds,
where $C_i,C_j$ are distinct conjuncts
of some conjunction, we remove the conjunct $C_j$. Clearly, this process terminates in 
polynomial time and results in an equivalent
irredundant concept expression.
\end{proof}

\smallsubinterpretation*

\begin{proof}
By induction on $\mathsf{dp}(C)$. If $\mathsf{dp}(C)=0$ and $d\in C^{\Imc}$ then already the submodel $\Jmc$ of $\Imc$ induced by the single point $\{d\}$ is such that $d\in C^{\Jmc}$ and $|\Jmc|=1\leq|C|$. Now suppose the claim holds for all $\lang{\geq,\sqcap,\top}$ concepts of depth at most $n$ and consider a concept $C$ from this language with $\mathsf{dp}(C)=n+1$ of the form

\[A_1\sqcap\cdots\sqcap A_m \sqcap \underset{1\leq i\leq n}{\bigsqcap}\geq k_i\;R_i.C_i\]

By Proposition \ref{prop:normalformpolytime} we can assume without loss of generality that $C$ is irredundant. Let $(\Imc,d)$ be a pointed interpretation with $d\in C^{\Imc}$. Following the semantics, for each $1\leq i\leq n$ there must be $k_i$ distinct successors $d_i^1,\ldots,d_i^{k_i}\in R^{\Imc}[d]$ such that $d_i^j\in C_i^{\Imc}$ for all $1\leq j\leq k_i$.

By the inductive hypothesis, for each such $1\leq i\leq n$ and $1\leq j\leq k_i$ there is a $\Imc_i^j\subseteq\Imc$ and a point $d_i^j\in\Delta^{\Imc_{i}^{j}}$ such that $d_i^j\in C_i^{\Imc_i^j}$ and $|\Imc_i^j|\leq |C_i|^{|C_i|}$. Let $\Imc'$ be the sub-interpretation of $\Imc$ induced by the subset $X=\{d\}\cup\underset{1\leq i\leq n}{\bigcup}\;\underset{1\leq j\leq k_i}{\bigcup}\Delta^{\Imc_i^j}$ of $\Delta^{\Imc}$. Clearly $d\in C^{\Imc'}$ and
\begin{align*}
    |\Imc|=|X| & \leq\Sigma_{i=1}^n\Sigma_{j=1}^{k_i}|\Imc_i^j| \leq \Sigma_{i=1}^n|C|\cdot|C_i|^{|C_i|}\\
    & \leq|C|\cdot\Sigma_{i=1}^n|C_i|^{|C|-1}\leq|C|\cdot(\Sigma_{i=1}^n|C_i|)^{|C|-1}\\
    & \leq|C|\cdot|C|^{|C|-1}=|C|^{|C|}.
    \end{align*}
    where we use that $\Sigma_{i=1}^n|C_i|\leq|C|-1$ (and hence $|C_i|\leq |C|-1$ for all $1\leq i\leq n$), $k_1,\ldots, k_n\leq\mathsf{nr}(C)\leq|C|$ and $\Sigma_{i=1}^na^{b}\leq(\Sigma_{i=1}^na)^b$ for all numbers $a,b$.
\end{proof}

\frontier*

\begin{proof}
The frontier construction is given by induction on the role depth of $C$.
Let $C$ be any $\lang{\geq,\sqcap,\top}$-concept of the form
\[C=A_1\sqcap\ldots\sqcap A_m\sqcap\underset{1\leq i\leq n}{\bigsqcap}\geq k_i R_i.C_i\]
where $m,n\geq 0$. Note that the case where $n=0$ (in which case $C$ has role depth 0) corresponds to the base case of the induction. By Prop.~\ref{prop:normalformpolytime} we may further assume that $C$ is irredundant. Let $\mathcal{F}(C)$ consist of
\begin{enumerate}
     \item[(a)]  all concepts obtained from $C$ by dropping some conjuncts $A_j\in\NC$ for $1\leq j\leq m$, and
     \item[(b)] all concepts $D_i$ obtained from $C$ by replacing a conjunct $\geq k_i R_i.C_i$ (with $i\leq n$) by \[\geq (k_i-1) R_i.C_i\sqcap\underset{D\in \mathcal{F}(C_i)}{\bigsqcap}\geq k_i R_i.D\]
 \end{enumerate}
Note that $\mathcal{F}(\top)=\emptyset$ and that $\mathcal{F}(C)=\{\top\}$ for concept names $C$ as well as for concepts $C$ of the form $\geq 1 R.\top$. We first show that our construction is polynomial. More precisely, we will show that for every concept $C\in\lang{\geq,\sqcap,\top}$ that $\Sigma_{C'\in\mathcal{F}(C)}|C'|\leq 5\cdot\mathsf{dp}(C)\cdot|C|^3$ showing that $\Sigma_{C'\in\mathcal{F}(C)}|C'|=\mathcal{O}(|C|^4)$ since $\mathsf{dp}(C)\leq|C|$. If $\mathsf{dp}(C)=0$, i.e. $C=A_1\sqcap\ldots\sqcap A_m$, we have that $\mathsf{nr}(C)=0$ and $\Sigma_{C'\in F(C)}|C'|=\Sigma_{i=1}^m|A_1\sqcap\ldots A_{i-1}\sqcap A_{i+1}\sqcap\ldots\sqcap A_m|\leq \Sigma_{i=1}^m|C|\leq |C|^2$. 

Next, consider some concept $C$ of depth $\mathsf{dp}(C)>0$ of the form $C=A_1\sqcap\cdots\sqcap A_m \sqcap \underset{1\leq i\leq n}{\bigsqcap}\geq k_i\;R_i.C_i$. By the inductive hypothesis, for all subconcepts $C_i$ we have that $\Sigma_{D\in\mathcal{F}(C_i)}|D|\leq 5\cdot\mathsf{dp}(C_i)\cdot|C_i|^3$. Let $\mathcal{F}_{a}(C)$ be the set of all elements of the frontier $\mathcal{F}(C)$ of the form (a), and similarly we write $\mathcal{F}_{b}(C)$ for the set of all frontier elements of the form (b). Note that $\mathcal{F}_{a}(C)$ and $\mathcal{F}_{b}(C)$ partition $\mathcal{F}(C)$ in two, so $\Sigma_{C'\in\mathcal{F}(C)}|C'|=\Sigma_{C'\in\mathcal{F}_{a}(C)}|C'|+\Sigma_{C''\in\mathcal{F}_{b}(C)}|C''|$. Observe that $|C'|\leq|C|$ for all $C'\in\mathcal{F}_a(C)$ and hence $\Sigma_{C'\in\mathcal{F}_{a}(C)}|C'|\leq|C|^2$. Next, for every concept $C''=D_i\in\mathcal{F}_b(C)$ we have
\[|D_i|\leq|C|+\Sigma_{D\in\mathcal{F}(C_i)}(|D|+|C|+2)\]
because $D_i$ contains a conjunct of the form $\geq k_i R_i.D$ for every $D\in\mathcal{F}(C_i)$ and $|\geq k_i R_i.D|\leq 1+|C|+|D|$ because the number $k_i\leq\mathsf{nr}(C)$ can be represented in less than $|C|$ symbols. The extra $+1$ is incurred by also counting the conjunction symbols. It follows that the total size of $\mathcal{F}(C)$ is smaller than or equal to
\begin{align*}
& \leq |C|^2 + \Sigma_{i=1}^n(|C|+\Sigma_{D\in\mathcal{F}(C_i)}(|D|+|C|+2))\\
& \leq |C|^2 +|C|^2+|C|^3+2|C|^2+\Sigma_{i=1}^n\Sigma_{D\in\mathcal{F}(C_i)}|D|\\
& \leq 5\cdot|C|^3+\Sigma_{i=1}^n(5\cdot\mathsf{dp}(C_i)\cdot|C_i|^3) \\
& \leq 5\cdot|C|^3+5(\mathsf{dp}(C)-1)\Sigma_{i=1}^n|C_i|^3\\
& \leq 5\cdot|C|^3+5(\mathsf{dp}(C)-1)(\Sigma_{i=1}^n|C_i|)^3\\
& \leq 5\cdot|C|^3+5|C|^3(\mathsf{dp}(C)-1)\\
& = 5\cdot|C|^3 + 5\cdot\mathsf{dp}(C)\cdot|C|^3-5|C|^3=5\cdot\mathsf{dp}(C)\cdot|C|^3
\end{align*}
where we use that $n\leq |C|, |\mathcal{F}(C_i)|\leq|C_i|\leq|C|$, $a\Sigma_{i=1}^nb_i=\Sigma_{i=1}^n(a\cdot b_i)$ and $\Sigma_{i=1}^n(a^b)\leq(\Sigma_{i=1}^na)^b$ for all numbers $a,b,b_1,\ldots,b_n$.

\vspace{0.1in}

\emph{Claim: $F(C)$ is indeed a frontier of $C$ w.r.t. $\lang{\geq,\sqcap.\top}$.} We will prove that both items (i) and (ii) from the definition of frontiers hold. For (i), let $D\in F(C)$. If $D$ was obtained through case (a) of the above construction (i.e., by dropping a concept name $A_j$ concept), it is clear that $\models C\sqsubseteq D$, and also $\not\models D\sqsubseteq C$ since $\not\models D\sqsubseteq A_j$. If $D$ was obtained through case (b) of the above construction (i.e., by replacing some conjunct $\geq k_i R_i.C_i$  by $\geq k_i-1 R_i.C_i\sqcap\underset{C'\in F(C_i)}{\bigsqcap}\geq k_i R_i.C'$), then it follows from the induction hypothesis that the replacing concept subsumes the original conjunct. To show that $\not\models D\sqsubseteq C$, first observe that by Lem.~\ref{lem:ELQsubsumption} we see that $\not\models\geq k_i-1 R_i.C_i\sqcap\underset{C'\in F(C_i)}{\bigsqcap}\geq k_i R_i.C'\sqsubseteq\geq k_i R_i.C_i$. Moreover, since $C$ is irredundant, $\geq k_i R_i.C_i$ does not subsume any other conjunct of $C$. But then it follows from Lem.~\ref{lem:ELQsubsumption} that $\geq k_i R_i.C_i$ does not subsume the conjunction of all those conjuncts of $C$ other than $\geq k_i R_i.C_i$, and hence $\not\models C\sqsubseteq D$.

For item (ii), let $D\in\lang{\exists,\geq,\sqcap,\top}$ be a concept strictly subsuming $C$, i.e. $\models C\sqsubseteq D$ and $\not\models D\sqsubseteq C$. We show that there is a concept $C'\in F(C)$ such that $C'\sqsubseteq D$. Since $\not\models D\sqsubseteq C$, there is some conjunct of $C$ that does not subsume $D$. If this conjunct is a concept name $A_j\in\Sigma_{\mathsf{C}}$ for some $1\leq j\leq k$, then the concept $C'\in F(C)$ obtained from $C$ by throwing away the conjunct $A_j$ is such that $\models C'\sqsubseteq D$.

Otherwise there is a conjunct of $C$ of the form $\geq k_i R_i.C_i$ for some $1\leq i\leq n$ such that $\not\models D\sqsubseteq\geq k_i R_i.C_i$. By Lem.~\ref{lem:ELQsubsumption}, either all conjuncts of $D$ are of the form $\geq k R. D'$ with $k<k_i$, or else for all such conjuncts with $k\geq k_i$ we have $\not\models D'\sqsubseteq C_i$. In the former case, we have $\models\;\geq (k_i-1)R.C_i\;\sqsubseteq\;\geq k R.D$. 
 
Else, consider some conjunct $\geq k R. D'$ of $D$ with $k\geq k_i$ such that $\models\;\geq k_i R_i.C_i\;\sqsubseteq\;\geq k R.D$ but it subsumes no other conjunct of $C$. Then it must be that $k_i=k, \models C_i\sqsubseteq D'$ and $\not\models D'\sqsubseteq C_i$. But then by the inductive hypothesis there is a concept $C''\in F(C_i)$ such that $\models C''\sqsubseteq D'$, so it follows that $\models\;\geq k_i R.C''\;\sqsubseteq\;\geq k R.D'$. Since the concept $D_i\in F(C)$ contains each $\geq k_i R.C''$ for $C''\in F(C_i)$ as a conjunct, we conclude that $\models D_i\sqsubseteq D$.
\end{proof}

\ELQcharacterisations*

\begin{proof}
Let a concept $C\in\lang{\geq,\sqcap,\top}$ be given. We set $E^+$ to be the set of all positive examples $(\Imc,d)$ of $C$ of size $|\Delta^{\Imc}|\leq |C|^{|C|}$, and observe that $|E^+|$ is in $\mathcal{O}(2^{2^{|C|}})$. For the negative examples, using Thm.~\ref{thm:frontier} we construct a polynomial frontier $\mathcal{F}(C)$ for $C$ w.r.t.~$\lang{\geq,\sqcap,\top}$, which consists of polynomially many concepts strictly subsuming $C$, all of polynomial size. For each $C'\in\mathcal{F}(C)$, we have $\not\models C'\sqsubseteq C$ so and let $(\Imc_{C'},d_{C'})$ be a pointed interpretation witnessing this. By Lemmas~\ref{lem:small-subinterpretation} and \ref{lem:subset-monotone}, there is a small sub-interpretation $\Imc'_{C'}\subseteq\Imc_{C'}$ with $d_{C'}\in C^{\Imc'}$ and $|\Imc'_{C'}|\leq|C'|^{|C'|}$. Let $E^-:=\{(\Imc'_{C'},d_{C'}) \mid C'\in\mathcal{F}(C)\}$.

Since $|C'|$ is polynomial in $|C|$, it follows that $|\Imc'_{C'}|$ is polynomial in $|C|$ for all $C'\in\mathcal{F}(C)$. It follows that the sum of the sizes of the negative examples is polynomial in $|C|$, so that $|E^+\cup E^-|$ is in $\mathcal{O}(2^{2^{|C|}})$. 

We claim that $E=(E^+,E^-)$ is a finite
characterisation of $C$ w.r.t.~$\lang{\geq,\sqcap,\top}$.
Consider some concept $D\in\lang{\geq,\sqcap,\top}$ and suppose that $D$ fits $(E^+,E^-)$. Suppose for contradiction that $\not\models C\sqsubseteq D$. From Lemmas~\ref{lem:small-subinterpretation} and \ref{lem:subset-monotone} it follows that there is a pointed interpretation $(\Imc,d)$ of size 
$|\Delta^{\Imc}|\leq|C|^{|C|}$ with $d\in(C\sqcap\neg D)^{\Imc}$.
By definition of $E^+$, we have $(\Imc,d)\in E^+$ but then $D$ does not fit $E^+$ after all. We conclude that $\models C\sqsubseteq D$.

Next suppose for the sake of a contradiction that  $\not\models D\sqsubseteq C$. In that case $D$ is a concept strictly subsuming $C$, so by the definition of frontiers there is some concept $C'\in F(C)$ such that $\models C'\sqsubseteq D$. Now there is a negative example $(\Imc_{C'},d_{C'})\in E^-$ such that $d_{C'}\in (C'\sqcap\neg C)^{\Imc_{C'}}$, and therefore also $d_{C'}\in(D\sqcap\neg C)^{\Imc_{C'}}$. Thus $D$ does not fit $E^-$ after all. Hence, it must be that $C\equiv D$.
\end{proof}

Next, we prove the negative results without ontologies from Section~\ref{sec:withoutontologies}.

\countingcupbot*

\begin{proof}
Let $A\in\Sigma_{\mathsf{C}}$ and suppose for the sake of a contradiction that it has a finite characterisation $(E^+,E^-)$. Let $k$ be any number larger than the size of the domain of each example in $E^+\cup E^-$. We have that $C_k:=A\sqcup\geq k R.A\in\lang{\geq,\sqcup}$ fits $(E^+,E^-)$. Clearly $C_{k}$ fits all positive examples for $A$ by properties of $\sqcup$, and by choice of $k$, the concept $C_k$ cannot be true on any negative example in $(\Imc,d)\in E^-$, because $|\Delta^{\Imc}|< k$. However, $C_k$ is clearly not logically equivalent to $A$, as witnessed by the pointed interpretation $(\Jmc,d)$ consisting of $d$ together with $k$ other points $d_1,\ldots d_k$, where $d\not\in A^{\Jmc}$ but all other points are, $d$ is $R$-related to all $d_j$'s and no other points are related to each other.

Suppose for contradiction that $\bot$ has a finite characterisation $(E^+,E^-)$. By definition of $\bot$, we have $E^+=\emptyset$. Let $k$ be any number greater than the size of the domain of each example and let $C_k=\geq k R.A$, for some $A\in\NC$. Then, $C_k\in\lang{\geq,\bot}$ fits $E^-$ since, for all $(\Imc,d)\in E^-$, we have $|\Delta^{\Imc}|< k$. Hence, as $E^+=\emptyset$, we have that $C_k$ fits $(E^+,E^-)$. However, $C_k$ is not equivalent to $\bot$, since $C_k$ is clearly satisfiable.

Suppose for contradiction that this fragment does admit finite characterisations, and let $A\in\Sigma_{\mathsf{C}}$ be a concept name. Then in particular, the concept $A$ must have a finite characterisation $(E^+, E^-)$. By Lemma \ref{lem:canposexamples} it must be that any $\lang{\forall,\exists,-,\sqcap}$ concept not entailed by $A$ must be falsified on some positive example in $E^+$. For each $n$, let $C_n:=[\forall R.]^n[\exists R^-.]^n[\exists R.]^{n+1}[\exists R^-.]^{n+1}A$. We show, for all $n$, that $\not\models A\sqsubseteq C_n$ and $\models A\sqcap\neg C_n\sqsubseteq height^{R}_n$. First of all, let $(\Imc_n,d)$ be the interpretation consisting of a directed $R$-path of length $n$, where $d\in A^{\Imc_n}$; i.e.
\[d\xrightarrow{R} d_2\xrightarrow{R} \cdots \xrightarrow{R} d_{n+1}\]
We have $d\in(A\sqcap\neg C_n)^{\Imc_n}$ where $\neg C_n\equiv[\exists R.]^n[\forall R^-.]^n[\forall R.]^{n+1}[\forall R^-.]^{n+1}.\neg A$. First, of all, we have $d\in A^{\Imc_n}$ by assumption. Further, clearly $d\in([\forall R]^{n+1}\bot)^{\Imc_n}$ but as $[\forall R.]^{n+1}\bot$ subsumes $[\forall R.]^{n+1}[\forall R^-.]^{n+1}\neg A$, it follows that $d\in([\forall R]^{n+1}[\forall R^-.]^{n+1}\neg A)^{\Imc_n}$.

Next, we show that $\models A\sqcap\neg C_n\sqsubseteq height^{R}_n$ for all $n$. So pick any $n$ and let $d\in(A\sqcap\neg C_n)^{\Imc}$. Recall $\neg C_n\equiv[\exists R.]^n[\forall R^-.]^n[\forall R.]^{n+1}[\forall R^-.]^{n+1}\neg A$, so there is a directed $R$-path of length $n$ from $d$ to a state $d'\in\Delta^{\Imc}$ such that for all directed $R^-$-paths of length $n$ emanating from $d'$ end in states satisfying $[\forall R]^{n+1}[\forall R^-]^{n+1}\neg A$. Hence in particular we have $d\in(A\sqcap[\forall R]^{n+1}[\forall R^-]^{n+1}\neg A)^{\Imc}$ but this concept is subsumed by $[\forall R.]^n\bot$, so it must be that $d\in([\forall R.]^{n+1}\bot)^{\Imc}$. Moreover, clearly $\models\neg C_n\sqsubseteq[\exists R.]^n\top$ and hence $d\in(height^{R}_n)^{\Imc}$.

Now choose $k$ to be any number strictly larger than the size of any example in $E^+$. Then $C_k$ fits $E^+$, since no positive example in $E^+$ can satisfy $A\sqcap\neg C_k$ as that would imply that it also satisfies $height^{R}_k$, which is impossible by choice of $k$. It follows that $A\sqcap C_k$ fits the entire set $(E^+,E^-)$, yet $A\not\equiv A\sqcap C_k$ since $\not\models A\sqsubseteq C_k$.
\end{proof}

This concludes the proofs of all the results needed to show Main Thm.~\ref{thm:mainone} which provides an almost complete classification of which fragments of $\mathcal{ALCQI}$ admit finite characterisations.

We continue with the results needed to show Main Thm.~\ref{thm:maintwo}, which provides a complete classification of which fragments containing at least $\exists$ and $\sqcap$ of $\mathcal{ALCQI}$ admit polynomial time computable characterisations. We prove two new negative results, using two Lemmas.

\lemcanposexample*

\begin{proof}
Let $C,C'\in\lang{\Obf}$ with $\sqcap\in\Obf$ such that $\not\models C\sqsubseteq C'$. Suppose for contradiction that $(E^+,E^-)$ is a finite characterisation of $C$ w.r.t. $\lang{\Obf}$ and that $C'$ is not falsified on any positive example in $E^+$. Then it follows that $C\sqcap C'$ fits all the positive examples, and also fits all negative examples in $E^-$ because the conjunct $C$ is false on all of them. However, since $C\not\equiv C'$, this is a contradiction.
\end{proof}

\ELQlowerbound*

\begin{proof}
Let $n$ be a natural number, $\Sigma_{\mathsf{C}}:=\{A_{i1},A_{i2},A_{i3}\mid 1\leq i\leq n\}$ and $\Sigma_{\mathsf{R}}=\{R\}$. For each $1\leq i\leq n$ define the concepts $C_i:=\exists R.(A_{i1}\sqcap A_{i3})\sqcap\exists R.(A_{i2}\sqcap A_{i3})$, $D_i:=\exists R.(A_{i1}\sqcap A_{i2}\sqcap A_{i3})$ and $D'_i:=\geq 2 R.A_{i3}$. Observe that $\not\models C_i\sqsubseteq D_i$ and $\not\models C_i\sqsubseteq D'_i$ yet $\models C_i\sqsubseteq D_i\sqcup D'_i$. In other words $\models C_i\sqcap\neg D_i\sqsubseteq D'_i$ and $\models C_i\sqcap\neg D'_i\sqsubseteq D_i$. This is because every point satisfying $C_i$ in some interpretation must have at least one successor satisfying $A_{i1}\sqcap A_{i2}\sqcap A_{i3}$, or at least two distinct successors satisfying $A_{i3}$. 

Moreover, none of the concepts in $\{D_1,D_1',\ldots,D_n,D'_n\}$ subsume one another since $D_i$ and $D'_i$ do not subsume one another for all $1\leq i\leq n$, and for every $1\leq j\leq n$ with $j\ne i$ we have that $D_i$ and $D'_i$ do not share any common concept names with $D_j$ and $D'_j$, and hence do not subsume one another. Define
\[C:=\underset{1\leq i\leq n}{\bigsqcap}\exists R.(C_i\sqcap\underset{1\leq j\leq n,\;j\ne i}{\bigsqcap}(D_j\sqcap D'_j))\]
and suppose that $C$ has a finite characterisation $(E^+,E^-)$ w.r.t. $\lang{\leq,\sqcap}$. We use Lemma \ref{lem:canposexamples} to show that $|E^+|\geq 2^{\mathcal{O}(|C|)}$. This establishes an exponential lower bound on size of finite characterisations for the concept $C$ w.r.t. $\lang{\geq,\sqcap}[\Sigma_{\mathsf{C}},\Sigma_{\mathsf{R}}]$ in terms of $|C|$ (and the size of the 
signature). This lower bound still holds when number are represented in unary, because the number restriction in the concept $C$ (which is implicitly parameterised by $n$) is kept constant.

For every $\sigma\in\{D_1,D'_1\}\times\ldots\times\{D_n,D'_n\}$ define
\[D_{\sigma}:=\exists R.(\underset{1\leq i\leq n}{\bigsqcap}\sigma(i))\]
where $\sigma(i)\in\{D_i,D'_i\}$ is the $i$'th component of the sequence $\sigma$. That is, $D_{\sigma}$ is true if there is at least one $R$-successor satisfying every concept occurring in $\sigma$. Clearly, there are $2^{n}$ many distinct concepts of the form $D_{\sigma}$ because each $\sigma$ determines a unique subset of $\{D_1,D'_1,\ldots,D_n,D'_n\}$. We saw before that none of the concepts in this set entail each other, and hence $D_{\sigma}$ does not entail $D_{\sigma'}$ for any $\sigma\ne\sigma'$.

We show for each $\sigma$ that (i) $\not\models C\sqsubseteq D_{\sigma}$ and (ii) $\models C\sqcap\neg D_{\sigma}\sqsubseteq D_{\sigma'}$ for all $\sigma'\ne\sigma$. For then by (i) every $D_{\sigma}$ must be falsified on positive example in $E^+$, and by (ii) this example has to be distinct from any other positive example chosen to falsify $D_{\sigma'}$ for any $\sigma'\ne\sigma$. Thus, it follows from (i) and (ii) that $|E^+_|\geq 2^n$.

By Lem.~\ref{lem:ELQsubsumption}, the subsumption in (i) holds iff there is some $1\leq i\leq n$ such that
\[\models C_i\sqcap\underset{1\leq j\leq n,\;j\ne i}{\bigsqcap}(D_j\sqcap D'_j)\sqsubseteq\sigma(i)\]
since for all $j\ne i$ we already know that $\sigma(j)\in\{D_j,D'j\}$ occurs as a conjunct on the left hand side. As noted before, it cannot be that $\sigma(i)\in\{D_i,D'_i\}$ subsumes either $D_j$ or $D_j'$ for $j\ne i$ since they share no common concept names. Thus the above subsumption holds iff $\models C_i\sqsubseteq\sigma(i)$, which we know is not the case as $\not\models C_i\sqsubseteq D_i$ and $\not\models C_i\sqsubseteq D'_i$. Therefore the subsumption cannot hold which proves item (i).

For item (ii), let $\sigma,\sigma'\in\{D_1,D'_1\}\times\ldots\times\{D_n,D'_n\}$ such that $\sigma\ne\sigma'$, i.e. there is some $1\leq k\leq n$ such that without loss of generality $\sigma(k)=D_k\ne D'_k=\sigma'(k)$. Note that 
\[\neg D_{\sigma}\equiv\forall R.(\underset{1\leq i\leq n}{\bigsqcup}\neg\sigma(i))\]
It follows that $\models C\sqcap\neg D_{\sigma}\sqsubseteq\exists R.(C_i\sqcap\neg D_k)$. For suppose that $d\in(C\sqcap\neg D_{\sigma})^{\Imc}$ for some interpretation $\Imc$. It follows that there is some $e\in R^{\Imc}[d]$ such that $d\in C_k^{\Imc}$ and $d\in(D_j\sqcap D'_j)^{\Imc}$ for all $j\ne i$ but where $d$ falsifies at least one component of $\sigma$. Clearly this can only be the case if $d\in(\neg D_k)^{\Imc}$, since we assumed that $\sigma(k)=D_k$, and every other component of $\sigma$ is in $\{D_1,D'_1,\ldots, D_{k-1},D'_{k-1},D_{k+1},D'_{k+1},\ldots,D_n,D'_n\}=\{D_j,D'_j\;|\;1\leq j\leq n, j\ne k\}$. But since $\models C_k\sqcap\neg D_k\sqsubseteq D'_k$, it follows that $d\in (D'_k)^{\Imc}=(\sigma'(k))^{\Imc}$ and hence $d$ satisfies every component of $\sigma'$. This shows that $\models C\sqcap\neg D_{\sigma}\sqsubseteq D_{\sigma'}$.
\end{proof}

While we managed to establish an explicit exponential lower bound for the size of finite characterisations for $\lang{\geq,\sqcap}$,in the following we show that it is NP-hard to compute finite characterisations for $\lang{\forall,\exists,\sqcap}$. The following Lemma then helps us to prove that this fragment does not admit polynomial time computable characterisations conditional on common complexity-theoretic assumptions.

\lempolycharimpliessubsumptioninp*
\begin{proof}
Given $\lang{\Obf}$ concepts $C$ and $D$, we can decide whether $C$ entails $D$ by computing in polynomial time a finite characterisation $(E^+,E^-)$ for $C$ w.r.t. $\lang{\Obf}$. By Lemma \ref{lem:canposexamples} it follows that $C$ entails $D$ iff $D$ fits $E^+$. We can check in polynomial time whether $D$ fits $E^+$ since model checking $\lang{\Obf}$ can be done in polynomial time.
\end{proof}

\NPhardness*

\begin{proof}
It was shown in \cite{DONINI1992} that the description logic $\lang{\forall,\exists,\sqcap,\top}$ has an NP-complete subsumption problem, already over the signature $\Sigma_{\mathsf{C}}=\emptyset$ and $\Sigma_{\mathsf{R}}=\{R\}$. We give a polynomial time reduction from subsumption in $\lang{\forall,\exists,\sqcap,\top}$ to $\lang{\forall,\exists,\sqcap}$ over an extended signature with a single concept name $B$. We use $B$ to simulate $\top$ in relevant submodels. 

Consider any two concepts $C,D\in\lang{\forall,\exists,\sqcap,\top}[\emptyset,\{R\}]$. Define $C_0$ to be the conjunction of all concepts of the form $[\forall R.]^n B$ where $B$ is a fresh concept name, $n\leq k$ and $k:=\text{max}\{\mathsf{dp}(C),\mathsf{dp}(D)\}$. Clearly $|C_0|$ is of size polynomial in $|C|$ and $|D|$ and the size of the signature. In fact it is not hard to see that $|C_0|$ is in $\mathcal{O}(k^2)$.

Further, let $C', D'$ be the result of substituting $B$ for every occurrence of $\top$ in $C$ and $D$, respectively. Finally, let $C''$ be the conjunction of $C'$ with $C_0$, and similarly let $D''$ be the conjunction of $D'$ with $C_0$. Note that $\models C''\sqsubseteq C\sqcap C_0$. We claim that $\models C\sqsubseteq D$ iff $\models C''\sqsubseteq D''$. This proves our claim as $C'',D''\in\lang{\forall,\exists,\sqcap}[\{B\},\{R\}]$.

From left to right, suppose that $\models C\sqsubseteq D$ and let $d\in (C'')^{\Imc}$. Since $\models C''\sqsubseteq C\sqcap C_0$ it follows that $C''\sqsubseteq D\sqcap C_0$ as well. But by choice of $k\geq\mathsf{dp}(D)$ we get that $\models D\sqcap C_0\sqsubseteq D'$ because $B$ is forced to be true in all states reachable from $d$ (following $R$-edges) in at most $k$ steps. Since $D''=D'\sqcap C_0$, chasing the subsumptions established above we conclude that $\models C''\sqsubseteq D''$. From right to left, suppose that $\models C''\sqsubseteq D''$, and let $d\in C^{\Imc}$ for some interpretation $\Imc$. Expand $\Imc$ to an interpretation $\Imc'$ over the extended signature by making $B$ true everywhere (all other things being equal). It follows that $d\in(C'\sqcap C_0)^{\Imc}=(C'')^{\Imc'}$. Applying the subsumption $\models C''\sqsubseteq D''$ we get $d\in(D'')^{\Imc}$. This shows that $\models C\sqsubseteq D$.
\end{proof}

\section{Proofs for Section~\ref{sec:withontologies}}\label{app:proofsofsecfour}

To simplify our proofs, we assume that all ontologies are in \emph{named form},
which we can assume without loss of generality by Lemma~\ref{lem:namedform}. 




 \begin{definition}[(Simulation)]\label{def:simulation}
    Let $(\Imc_1,d_1)$ and $(\Imc_2,d_2)$ be pointed interpretations. A \emph{simulation}
    $Z:(\Imc_1,d_1) \mathop{\underline{\to}}  
      (\Imc_2,d_2)$ is a binary relation $Z\subseteq \Delta^{\Imc_1}\times \Delta^{\Imc_2}$ which contains 
      $(d_1,d_2)$ and such that for all 
      $(d,d')\in Z$, 
      (i) for all $A\in \NC$,
      if $d\in A^{\Imc_1}$ then $d'\in A^{\Imc_2}$;
      (ii) for all $R\in \NR$ and pairs $(d,e)\in R^{\Imc_1}$ there is a pair $(d',e')\in R^{\Imc_2}$ with $(e,e')\in Z$.
\end{definition}


Before we provide the definition of the canonical model, 
we introduce the following notions.
Given an \EL concept $D$,
we inductively define the tree-shaped interpretation $\Imc_D$ of $D$, with the root denoted $d_D$, as follows. 
When $C$ is $\top$, we define $\Imc_{\top}$ as the interpretation with $\Delta^{\Imc_{\top}}:=\{d_{\top}\}$  and all
  concept and role names interpreted as the empty set.
For $D$ a concept name $A\in \NC$ we define $\Imc_A$ as the interpretation with $\Delta^{\Imc_A}:=\{d_A\}$,
$A^{\Imc_A}:=\{d_A\}$, and all
other concept and role names interpreted as the empty set.
For $D=\exists R.C$, we define
$\Imc_D$ as the interpretation with $\Delta^{\Imc_D}:=\{d_D\}\cup \Delta^{\Imc_C}$,
all concept and role name interpretations are as for
$\Imc_C$ except that we replace
$d_C$ by $d_{C\sqcap \exists R^-}$,
add $(d_D, d_{C\sqcap \exists R^-})$ to $R^{\Imc_D}$, 
and assume $d_D$ is fresh (i.e., it is not in $\Delta^{\Imc_C}$).  Finally, for $D=D_1\sqcap D_2$
we define $\Delta^{\Imc_D}:=\Delta^{\Imc_{D_1}}\cup (\Delta^{\Imc_{D_2}}\setminus\{d_{D_2}\})$,
assuming $\Delta^{\Imc_{D_1}}$ and $\Delta^{\Imc_{D_2}}$
are disjoint, and
with
all concept and role name interpretations   as in
$\Imc_{D_1}$ and $\Imc_{D_2}$, except that
we connect $d_{D_1}$ with the elements of $\Delta^{\Imc_{D_2}}$
in the same way as $d_{D_2}$ is connected. In other words, we 
 {identify} $d_{D_1}$ with the root $d_{D_2}$ of $\Imc_{D_2}$. 
The following is clear from the definition of $\Imc_D$, rooted in $d_D$.

\begin{proposition}\label{prop:tree}
Given an \EL concept $C$, for all $d_D\in\Delta^{\Imc_C}$, $d_D\in E^{\Imc_C}$ iff $\models D\sqsubseteq E$, where $E$ is an \EL concept or of the form $\exists R^-$, with $R\in\NR$. 
\end{proposition}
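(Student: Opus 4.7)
The plan is to prove the proposition by structural induction on $E$, supported by two facts about $\Imc_C$ that follow by routine induction on the recursive construction of $\Imc_C$. First, (i) for every $d_D \in \Delta^{\Imc_C}$, the sub-interpretation of $\Imc_C$ induced by the nodes reachable from $d_D$ along outgoing role edges, rooted at $d_D$, coincides with $\Imc_D$; consequently, satisfaction of any \EL concept at $d_D$ agrees in $\Imc_C$ and in $\Imc_D$. Second, (ii) the node label governs concept-name and inverse-role membership: $d_D \in A^{\Imc_C}$ iff the concept name $A$ is a top-level conjunct of $D$, and $d_D \in (\exists R^-)^{\Imc_C}$ iff $\exists R^-$ is a top-level conjunct of $D$. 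A separate induction on $D$ will also give the auxiliary fact $d_D \in D^{\Imc_D}$.

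With these in hand, the base cases of the induction on $E$ --- namely $E = \top$, $E = A \in \NC$, and $E = \exists R^-$ --- follow immediately from (ii) together with the standard observation that, for a concept $D$ built from concept names, $\top$, conjunction, $\exists R.{\cdot}$, and inverse-role conjuncts $\exists R^-$, we have $\models D \sqsubseteq A$ (resp.\ $\models D \sqsubseteq \exists R^-$) iff $A$ (resp.\ $\exists R^-$) occurs as a top-level conjunct of $D$. The case $E = E_1 \sqcap E_2$ is immediate from the IH by unfolding semantics.

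The only non-trivial case is $E = \exists R.E'$. Using (i), we reduce to evaluating $\exists R.E'$ in $\Imc_D$ rather than $\Imc_C$. For ($\Leftarrow$), the auxiliary fact $d_D \in D^{\Imc_D}$ combined with $\models D \sqsubseteq \exists R.E'$ yields an $R$-successor $d_{D'}$ of $d_D$ in $\Imc_D$ with $d_{D'} \in (E')^{\Imc_D}$; by (i), this successor and its outgoing sub-tree are present in $\Imc_C$, so $d_{D'} \in (E')^{\Imc_C}$ and we are done. For ($\Rightarrow$), take an $R$-successor $d_{D'}$ of $d_D$ in $\Imc_C$ with $d_{D'} \in (E')^{\Imc_C}$; by the recursive construction, every $R$-successor of $d_D$ has label of the form $D' = D'' \sqcap \exists R^-$ for some conjunct $\exists R.D''$ of $D$. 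The IH gives $\models D' \sqsubseteq E'$, and the main obstacle --- the only delicate point in the whole argument --- is that this collapses to $\models D'' \sqsubseteq E'$: this relies on the fact that $E'$ is a pure \EL concept, so any model of $D''$ can be extended with a fresh incoming $R$-edge at the distinguished point without altering its outgoing sub-interpretation and hence without affecting $E'$-satisfaction. From $\models D'' \sqsubseteq E'$ and the conjunct $\exists R.D''$ of $D$ we conclude $\models D \sqsubseteq \exists R.E'$, completing the induction.
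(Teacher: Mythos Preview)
The paper does not give a proof of this proposition (it merely says the claim is ``clear from the definition of $\Imc_D$''), so there is no detailed argument to compare against. Your outline is broadly sound and the $(\Rightarrow)$ direction is fine, but there is a genuine gap in your $(\Leftarrow)$ argument.

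The issue is with your auxiliary fact ``$d_D\in D^{\Imc_D}$'' where by $\Imc_D$ you mean the subtree of $\Imc_C$ rooted at $d_D$ (your item~(i)). For every non-root node the label $D$ has the shape $D_0\sqcap\exists R_0^-$ for some \EL concept $D_0$ and some role $R_0$ (this is precisely how the label arises in the $\exists R_0.(\cdot)$ step of the construction). In the subtree, $d_D$ has no $R_0$-predecessor, so $d_D\notin(\exists R_0^-)^{\Imc_D}$ and hence $d_D\notin D^{\Imc_D}$. Your ``separate induction on $D$'' therefore cannot establish this fact; the $(\Leftarrow)$ step for $E=\exists R.E'$ that relies on it breaks for every non-root node.

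The fix is simple and in fact streamlines the whole argument: prove instead, by induction on the construction of $\Imc_C$, that every node $d_D\in\Delta^{\Imc_C}$ satisfies $d_D\in D^{\Imc_C}$ (in the full tree, where the incoming $R_0$-edge \emph{is} present). The inductive step for $\exists R.C'$ is where the relabelled root $d_{C'\sqcap\exists R^-}$ picks up its $R$-predecessor $d_{\exists R.C'}$, and the \EL part is preserved because \EL concepts are monotone under taking super-interpretations. With this in hand, the entire $(\Leftarrow)$ direction becomes a one-liner for \emph{all} $E$ simultaneously: from $d_D\in D^{\Imc_C}$ and $\models D\sqsubseteq E$ conclude $d_D\in E^{\Imc_C}$. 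No induction on $E$ is needed for this direction; your induction on $E$ is only required for $(\Rightarrow)$, where your argument (including the ``drop $\exists R^-$'' step using that $E'$ is pure \EL) is correct.
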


Given a DL-Lite  ontology \Omc and an \EL concept $D$, 
let ${\sf sig}(\Omc)$ be the set of concept and role names occurring in \Omc. Denote by $\Delta^{{\sf sig}(\Omc)}$ the set \[\{d_{\exists R},d_{\exists R^-} \mid R \in \NR\cap{\sf sig}(\Omc), \exists R \text{ is satisfiable w.r.t. } \Omc \}\  \] 
and assume it to be disjoint from $\Delta^{\Imc_D}$. 

\begin{definition}\label{def:can-model}
  Let \Omc be a  DL-Lite  ontology in named form and $D$ an \EL concept     satisfiable w.r.t. \Omc.   The canonical model $\Imc_{D,\Omc}$ for   $\Omc$ and  $D$
   is: 
   \begin{itemize}
       \item $\Delta^{\Imc_{D,\Omc}} := 
       \Delta^{\Imc_D} \cup \Delta^{{\sf sig}(\Omc)}$,
\item $A^{\Imc_{D,\Omc}} := 
\{d_C \in \Delta^{\Imc_{D,\Omc}} {\mid} \Omc\models  C \sqsubseteq A \}$
for all $A\in \NC$;
\item $R^{\Imc_{D,\Omc}}:= \{(d_C,d_{C'})\in (\Delta^{\Imc_D})^2 \mid (d_C,d_{C'})\in R^{\Imc_D}\} 
\cup \\ \{(d_C,d_{\exists R^-})\in \Delta^{\Imc_{D,\Omc}} \times \Delta^{{\sf sig}(\Omc)} \mid  \Omc\models C\sqsubseteq \exists R\}\cup \\ \{(d_{\exists R},d_{C})\in \Delta^{{\sf sig}(\Omc)} \times \Delta^{\Imc_{D,\Omc}} \mid \Omc\models C\sqsubseteq  \exists R^-\}$, for all $R\in \NR$.
    \end{itemize}

\end{definition}

\begin{restatable} 
{lemma}{lemmacan}
\label{lem:lemmacan}
\label{lem:canonical}
Let $C$ be any $\lang{\sqcap,\exists,\top}$-concept expression and 
  let $\Omc$ be any  
  DL-Lite  ontology in named form such that $C$ is satisfiable w.r.t. $\Omc$.
  Let $\Imc_{C,\Omc}$ be as in Definition~\ref{def:can-model}.
  Then, for all pointed interpretations $(\Imc,d)$ satisfying $\Omc$, 
  the following are equivalent:
  \begin{enumerate}
      \item $d\in C^\Imc$ and
      \item there is a simulation $Z:(\Imc_{C,\Omc},d_{C}) \mathop{\underline{\to}} (\Imc,d)$.
  \end{enumerate}
\end{restatable}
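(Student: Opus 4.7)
The plan is to prove the two implications separately. The direction $(2) \Rightarrow (1)$ is essentially standard preservation of $\lang{\sqcap,\exists,\top}$-concepts under simulations, whereas $(1) \Rightarrow (2)$ requires constructing an explicit simulation from the canonical model.

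For $(2) \Rightarrow (1)$, I would first establish that $d_C \in C^{\Imc_{C,\Omc}}$. Proposition~\ref{prop:tree} gives $d_C \in C^{\Imc_C}$, and one checks directly from the definitions that $\Imc_C$ is a sub-interpretation of $\Imc_{C,\Omc}$: the first clause in the definition of $R^{\Imc_{C,\Omc}}$ copies $R^{\Imc_C}$, and any $d_X \in A^{\Imc_C}$ satisfies $\models X \sqsubseteq A$, hence $\Omc \models X \sqsubseteq A$, so $d_X \in A^{\Imc_{C,\Omc}}$. Lemma~\ref{lem:subset-monotone} (applicable since $\lang{\sqcap,\exists,\top} \subseteq \lang{\geq,\sqcap,\top}$) then transfers $d_C \in C^{\Imc_C}$ to $d_C \in C^{\Imc_{C,\Omc}}$. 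A routine induction on $C$ then shows that any simulation propagates satisfaction of $C$ from $d_C$ to $d$: the atomic case uses clause (i) of Definition~\ref{def:simulation}, conjunction is immediate from the inductive hypothesis, and the existential case uses clause (ii).

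For $(1) \Rightarrow (2)$, given $d \in C^{\Imc}$, I would define the candidate relation
\[ Z := \{(d_X, e) : d_X \in \Delta^{\Imc_C},\, e \in X^{\Imc}\} \cup \{(d_{\exists R^-}, e) : d_{\exists R^-} \in \Delta^{{\sf sig}(\Omc)},\, e \in (\exists R^-)^{\Imc}\}, \]
deliberately excluding pairs whose first coordinate is a $d_{\exists R}$-element (without the inverse). Clearly $(d_C, d) \in Z$. Clause (i) holds because $d_X \in A^{\Imc_{C,\Omc}}$ means exactly $\Omc \models X \sqsubseteq A$, which together with $\Imc \models \Omc$ and $e \in X^{\Imc}$ (respectively $e \in (\exists R^-)^{\Imc}$) yields $e \in A^{\Imc}$. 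Clause (ii) proceeds by case analysis on which clause in the definition of $R^{\Imc_{C,\Omc}}$ produced the edge being matched: for first-clause edges $(d_X, d_Y) \in R^{\Imc_C}$, the tree construction of $\Imc_C$ guarantees $\models X \sqsubseteq \exists R.Y$, so a suitable $e' \in Y^{\Imc}$ with $(e,e') \in R^{\Imc}$ can be found directly; for second-clause edges $(d_X, d_{\exists R^-})$, the ontological subsumption $\Omc \models X \sqsubseteq \exists R$ combined with $\Imc \models \Omc$ gives $e \in (\exists R)^{\Imc}$, and any $R$-successor $e'$ of $e$ lies in $(\exists R^-)^{\Imc}$, as required.

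The main obstacle is the third clause in the definition of $R^{\Imc_{C,\Omc}}$, which adds edges $(d_{\exists R}, d_Y)$ whenever $\Omc \models Y \sqsubseteq \exists R^-$. A direct attempt to match such an edge would require that some $R$-successor $e'$ of $e \in (\exists R)^{\Imc}$ actually satisfies $Y$, which the ontology does not entail (only the converse direction is guaranteed). The resolution lies in the specific choice of $Z$: the simulation conditions need only be verified for pairs already in $Z$, and $Z$ deliberately excludes $(d_{\exists R}, e)$-pairs, which are precisely the sources of all third-clause edges. A careful inspection of the three clauses confirms that every edge originating from an element in the domain of $Z$ (either from a $d_X \in \Delta^{\Imc_C}$ or from a $d_{\exists R^-}$-element) is already accounted for by the first or second clause.
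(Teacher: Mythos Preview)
Your proposal is correct and follows the same strategy as the paper: the $(\Leftarrow)$ direction in both cases rests on $\Imc_C$ being a sub-interpretation of $\Imc_{C,\Omc}$ together with preservation of $\lang{\exists,\sqcap,\top}$-concepts under simulation, and for $(\Rightarrow)$ both arguments build a simulation whose domain omits the $d_{\exists R}$-elements (with $R\in\NR$), so that third-clause edges never need to be matched. Your version is a bit more streamlined---you define $Z$ in one shot as the maximal relation $\{(d_X,e):e\in X^\Imc\}$ and verify clause~(ii) by a direct semantic argument, whereas the paper first obtains a simulation into $\Imc_C$ (its Claim on $\mathcal{EL}$-simulations), then extends it over the reachable part $\Imc'_{C,\Omc}$, and invokes a named-form decomposition lemma along the way---but the underlying idea is the same.
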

\begin{proof}
We first argue that the lemma holds when $\Omc=\emptyset$. 


\begin{claim}\label{claim:elsimulation}
    Let \Imc be an interpretation. For all $d\in \Delta^\Imc$ and all \EL concepts $C$,
     $d\in C^\Imc$ iff there is a simulation $Z:(\Imc_C,d_C)\mathop{\underline{\to}}  (\Imc,d)$.
\end{claim}
\begin{proof}[Proof of Claim~\ref{claim:elsimulation}]
  The proof is by induction. In the base cases, $C=A$ where $A\in\NC$ or $C=\top$. In these cases,  by definition
of $\Imc_C$ and the definition of simulation, $d\in C^\Imc$ iff $\{(d_C,d)\}$ is a simulation from $(\Imc_C,d_C)$ to $(\Imc,d)$, as required.
We now proceed with the inductive step. Suppose the claim holds for $C_1,C_2$. We make a case distinction.
\begin{itemize}
    \item $C=C_1\sqcap C_2$. 
    If
    $d\in (C_1\sqcap C_2)^\Imc$ then, by the semantics of \EL, $d\in C^\Imc_1$ and $d\in C^\Imc_2$.
    Let $\Imc_{C_1}$ and $\Imc_{C_2}$ be   tree-shaped interpretations of $C_1$ and $C_2$, respectively, assuming $\Delta^{\Imc_{C_1}}$ and $\Delta^{\Imc_{C_2}}$ are disjoint.
    By the inductive hypothesis, if $d\in C^\Imc_1$ and $d\in C^\Imc_2$ then
       there is a simulation $Z_1:(\Imc_{C_1},d_{C_1})\mathop{\underline{\to}}  (\Imc,d)$ and a simulation
     $Z_2:(\Imc_{C_2},d_{C_2})\mathop{\underline{\to}}  (\Imc,d)$. 
     Let
     $Z'_1$ and $Z'_2$ be the result of replacing the occurrences of  $d_{C_1}$ and $d_{C_2}$, respectively, in $Z_1$ and $Z_2$
     by $d_{C_1\sqcap C_2}$. We have that
     $Z:=Z'_1\cup Z'_2$ is a simulation
     from $(\Imc_{C_1\sqcap C_2},d_{C_1\sqcap C_2})$ to $(\Imc,d)$. Conversely,
     suppose $Z$ is a simulation from $(\Imc_{C_1\sqcap C_2},d_{C_1\sqcap C_2})$ to $(\Imc,d)$.
     Then, by definition of $\Imc_{C_1\sqcap C_2}$, there are simulations $Z_1:(\Imc_{C_1},d_{C_1})\mathop{\underline{\to}}  (\Imc,d)$ and $Z_2:(\Imc_{C_2},d_{C_2})\mathop{\underline{\to}}  (\Imc,d)$. Then, by the inductive hypothesis, $d\in C^\Imc_1$ and $d\in C^\Imc_2$. So $d\in (C_1\sqcap C_2)^\Imc$  by the semantics of \EL.
    \item $C=\exists R.C_1$. Suppose $d\in (\exists R.C_1)^\Imc$. By the semantics of \EL, there is $d'\in\Delta^\Imc$ such that $(d,d')\in R^\Imc$ and $d'\in C^\Imc_1$. By the inductive hypothesis,
       there is a simulation $Z_1:(\Imc_{C_1},d_{C_1})\mathop{\underline{\to}}  (\Imc,d')$. We have that $Z:=Z_1\cup \{(d_{\exists R.C_1},d)\}$ is a simulation
     from $(\Imc_{\exists R.C_1},d_{\exists R.C_1})$ to $(\Imc,d)$, as required. Conversely, suppose $Z$ is a simulation
     from $(\Imc_{\exists R.C_1},d_{\exists R.C_1})$ to $(\Imc,d)$. By Proposition~\ref{prop:tree},
     $d_{\exists R.C_1}\in ({\exists R.C_1})^{\Imc_{\exists R.C_1}}$ (since $\models \exists R.C_1\sqsubseteq \exists R.C_1$ trivially). By the semantics of \EL and the definition of ${\Imc_{\exists R.C_1}}$, there is $d_{C_1}\in\Delta^{\Imc_{\exists R.C_1}}$ such that $(d_{\exists R.C_1},d_{C_1})\in R^{\Imc_{\exists R.C_1}}$ and $d_{C_1}\in C^{\Imc_{\exists R.C_1}}_1$. Since $Z$ is a simulation and $(d_{\exists R.C_1},d)\in Z$, there is $d'\in\Delta^\Imc$ such that
     $(d,d')\in R^\Imc$ and $(d_{C_1},d')\in Z$. We have that  $Z_1:=Z\setminus \{(d_{\exists R.C_1},d)\}$ is a simulation from 
     $(\Imc_{C_1},d_{C_1})$ to $(\Imc,d')$.
 By the inductive hypothesis, $d'\in C^\Imc_1$. Then, by the semantics of \EL, $d\in (\exists R.C_1)^\Imc$. 
\end{itemize}
\end{proof}
We now consider the case in which \Omc is an arbitrary DL-Lite  ontology.
The direction ($\Leftarrow$), follows from the fact that $\Imc_C$ is sub-interpretation of $\Imc_{C,\Omc}$.
So if there is a simulation from $(\Imc_{C,\Omc},d_C)$ to $(\Imc,d)$ then 
there is also simulation from $(\Imc_{C},d_C)$ to $(\Imc,d)$.
Then, by Claim~\ref{claim:elsimulation},
$d\in C^\Imc$.
To prove $(\Rightarrow)$, we
  use Definition~\ref{def:can-model}, Claim~\ref{claim:elsimulation}, and the following technical claims to establish the lemma. 
\begin{claim}\label{claim:aux}
    Let \Omc be a DL-Lite  ontology
    in named form and let $C$ be an \EL concept or a concept of the form $\exists R^-$. Then, $\Omc\models C\sqsubseteq \exists S$, where $S=R^-$ or $S=R$ with $R\in\NR$, iff $\models C\sqsubseteq \exists S$ or there is
    $A\in\NC$ such that $\Omc\models C\sqsubseteq A$ and $\Omc\models A\sqsubseteq \exists S$.
\end{claim}
\begin{proof}
The direction ($\Leftarrow$) is immediate. Regarding ($\Rightarrow$), suppose $\Omc\models C\sqsubseteq \exists S$. If $\models C\sqsubseteq \exists S$ we are done. Then assume this is not the case. Since $\Omc$ is in named form, all 
concept inclusions in \Omc with $\exists S$ on the right side  need to have a concept name on the left side.  As $\not\models C\sqsubseteq \exists S$, the only way to derive $\exists S$ is by one of these concept inclusions.
So  there is
    $A\in\NC$ such that $\Omc\models C\sqsubseteq A$ and $\Omc\models A\sqsubseteq \exists S$.
\end{proof}
Let $C$ be an \EL concept and let $\Imc'_{C,\Omc}$ be 
the sub-interpretation of
$\Imc_{C,\Omc}$ (Definition~\ref{def:can-model}) that results from  removing
all elements not reachable 
from the elements of $\Imc_C$ via a chain of role names.
\begin{claim}\label{claim:sim}
   If $Z$ is a simulation from $(\Imc'_{C,\Omc},d_C)$ to 
   $(\Imc,d)$ then $Z$ is a simulation from $(\Imc_{C,\Omc},d_C)$ to 
    $(\Imc,d)$.
\end{claim}
\begin{proof}
  By  the definition of a simulation, Condition (i) of Definition~\ref{def:simulation} is satisfied.  Condition (ii) of Definition~\ref{def:simulation} is satisfied due to
  the definition of $\Delta^{\Imc'_{C,\Omc}}$,
  which contains all elements of $\Delta^{\Imc_{C,\Omc}}$ that reachable with a chain of role names (but excludes those potentially disconnected and those that only reachable using inverse roles).
\end{proof}
 So from now on we concentrate on the sub-interpretation $(\Imc'_{C,\Omc},d_C)$. This step is important because in the canonical model we make non-empty all DL-Lite concepts  over ${\sf sig}(\Omc)$ that are satisfiable but they may be empty in a model. So there may be no simulation from $(\Imc_{C,\Omc},d_C)$ to $(\Imc,d)$ that covers all elements in   $\Delta^{\Imc_{C,\Omc}}$. Moreover, only elements of the form $d_{\exists R^-}$ with $R\in\NR$ are   reachable from the elements in $\Delta^{\Imc_C}$ via a chain of role names. We formalize this in Claim~\ref{claim:crucial}.
\begin{claim}\label{claim:crucial} For all $R\in\NR$ and all $A\in\NC$, elements of the form 
    $d_{\exists R} \in \Delta^{{\sf sig}(\Omc)}$ are not in $ \Delta^{\Imc'_{C,\Omc}}$.
\end{claim}
\begin{proof}
    By Definition~\ref{def:can-model},
    elements in $\Delta^{\Imc_C}$ 
    are only directly connected via a role name to elements of the form $d_{\exists R^-}$, with $R\in\NR$. Again by Definition~\ref{def:can-model}, elements of the form $d_{\exists S^-}$ are only connected via a role name to elements of the form $d_{\exists R^-}$ (in  Definition~\ref{def:can-model}, this is  the second set in the second item, where we have the pair $(d_C,d_{\exists R^-})$ but no pair of the form $(d_C,d_{\exists R})$.
\end{proof}
Let \Imc be an interpretation and let $d$ be an arbitrary element in $\Delta^\Imc$.
If   $d \in C^\Imc$ then by Claim~\ref{claim:elsimulation} there is a simulation $Z:(\Imc_C,d_C)\mathop{\underline{\to}}  (\Imc,d)$.  We argue that if $\Imc\models \Omc$ then one can extend $Z$ and define
  a simulation from $(\Imc'_{C,\Omc},d_C)$ to $(\Imc,d)$.
  Let $Z'$ be the union of $Z$ and 
  \[\{(d_{\exists R^-},e)\in \Delta^{\Imc'_{C,\Omc}}\times \Delta^\Imc \mid e\in (\exists R^-)^\Imc\},\]
  where  $R\in\NR$. Note that by the definition of $\Imc'_{C,\Omc}$ and Claim~\ref{claim:crucial}, $Z'$ covers the whole domain of $\Imc'_{C,\Omc}$.
In the next two claims we show that $Z'$ is a simulation 
from $(\Imc'_{C,\Omc},d_C)$ to $(\Imc,d)$.
  First we show Condition (i) of Definition~\ref{def:simulation}. 
\begin{claim}\label{claim:simulationconcept}
For all $A\in \NC$,
if $d_D\in A^{\Imc'_{C,\Omc}}$ and $(d_D,e)\in Z'$ then
$e\in A^\Imc$.    
\end{claim}
\begin{proof}
We make a case distinction based
on the elements of $\Delta^{\Imc'_{C,\Omc}}$ which has elements belonging to $\Delta^{\Imc_C}$ and elements belonging to $\Delta^{{\sf sig}(\Omc)}$.
\begin{itemize}
    \item $d_D\in \Delta^{\Imc_C}$. If $d_D \in A^{\Imc'_{C,\Omc}}$ then, by Definition~\ref{def:can-model},
    either $d_D \in A^{\Imc_{C}}$ or
    $\Omc\models D\sqsubseteq A$. In the former case, since $Z$ is a simulation,
    $e\in A^\Imc$. In the latter, by definition of $\Imc_C$ and Proposition~\ref{prop:tree}, $d_D\in D^{\Imc_C}$ (since $\models D\sqsubseteq D$ trivially). Now, since $(d_D,e)\in Z\subseteq Z'$ and $Z$ is a simulation, by Claim~\ref{claim:elsimulation}, $e\in D^\Imc$. As $\Imc\models\Omc$ and $\Omc\models D\sqsubseteq A$, we have that $e\in A^\Imc$.
    \item $d_D \in \Delta^{{{\sf sig}(\Omc)}}$.
    If $d_D\in A^{\Imc'_{C,\Omc}}$ then, by Definition~\ref{def:can-model},  $\Omc\models D\sqsubseteq A$. Also, by definition of $Z'$, $e\in D^\Imc$.
    Since $\Imc\models\Omc$, we have that
    $e\in A^\Imc$.
\end{itemize}
\end{proof}
We now show    Condition (ii) of  Definition~\ref{def:simulation}.  
\begin{claim}\label{claim:simulationrole}
For all $R\in \NR$,
if $(d_D,d_{D'})\in R^{\Imc'_{C,\Omc}}$ 
and $(d_D,e)\in Z'$ then there is
$e'\in \Delta^\Imc$ such that
$(e,e')\in R^\Imc$ and $(d_{D'},e')\in Z'$.    
\end{claim}
\begin{proof}
We   make a case distinction based
on the pairs of elements in $(\Delta^{\Imc'_{C,\Omc}})^2$ defined to be in 
$R^{\Imc'_{C,\Omc}}$. 
\begin{itemize}
    \item $(d_D,d_{D'})\in \Delta^{\Imc_C}\times \Delta^{\Imc_C}$. If $(d_D,d_{D'}) \in R^{\Imc'_{C,\Omc}}$ then, by Definition~\ref{def:can-model},
      $(d_D,d_{D'}) \in R^{\Imc_{C}}$. 
      By assumption $(d_D,e)\in Z'$ and
      as $(d_D,d_{D'})\in \Delta^{\Imc_C}\times \Delta^{\Imc_C}$, in fact $(d_D,e)\in Z$.
      Since $Z$ is a simulation, there is
$e'\in \Delta^\Imc$ such that
$(e,e')\in R^\Imc$ and $(d_{D'},e')\in Z\subseteq Z'$.
\item $(d_D,d_{\exists R^-})\in \Delta^{\Imc'_{C,\Omc}}\times \Delta^{{\sf sig}(\Omc)}$. If $(d_D,d_{\exists R^-}) \in R^{\Imc'_{C,\Omc}}$
then, by Definition~\ref{def:can-model}, $\Omc\models D\sqsubseteq \exists R$.
By Claim~\ref{claim:aux}, either (a)
    $\models D\sqsubseteq \exists R$ or (b) there is 
$A\in\NC$ such that
$\Omc\models D\sqsubseteq A$ and $\Omc\models A\sqsubseteq \exists R$. In case (a),
we have that $d_D\in \Delta^{\Imc_C}$
(otherwise $d_D\in \Delta^{{\sf sig}(\Omc)}$ and, by Claim~\ref{claim:crucial},
$d_D\in \Delta^{\Imc'_{C,\Omc}}$ must be of the form $d_{\exists S^-}$ but $\not\models \exists S^-\sqsubseteq \exists R$ for all $R\in\NR$). If $d_D\in \Delta^{\Imc_C}$ then in fact $(d_D,e)\in Z\subseteq Z'$. Also, by definition of $\Imc_C$ and Proposition~\ref{prop:tree}, $d_D\in D^{\Imc_C}$ (since $\models D\sqsubseteq D$ trivially). Since $Z$ is a simulation, by Claim~\ref{claim:elsimulation}, $e\in D^\Imc$. As $\models D\sqsubseteq \exists R$, $e\in (\exists R)^\Imc$. Then there 
is $e'\in \Delta^\Imc$ such that $(e,e')\in R^\Imc$. Moreover, $e'\in (\exists R^-)^\Imc$. Then, by definition of $Z'$, $(d_{\exists R^-},e')\in Z'$.
In case~(b),
$d_D\in A^{\Imc'_{C,\Omc}}$.
By Claim~\ref{claim:simulationconcept},
$e\in A^\Imc$. As $\Imc\models \Omc$ and $\Omc\models A\sqsubseteq \exists R$,
there is $e'\in \Delta^\Imc$ such that
$(e,e')\in R^\Imc$. This means that
$e'\in (\exists R^-)^\Imc$.
By definition of $Z'$,
$(d_{\exists R^-},e')\in Z'$.
    \item $(d_{\exists R},d_{D'})\in \Delta^{{\sf sig}(\Omc)} \times \Delta^{\Imc'_{C,\Omc}}$. 
    By Claim~\ref{claim:crucial}, 
    there is no $d_{\exists R}\in \Delta^{{\sf sig}(\Omc)}$ with $R\in \NR$ and $d_{\exists R}\in \Delta^{\Imc'_{C,\Omc}}$. In other words
    that the case $(d_{\exists R},d_{D'}) \in R^{\Imc'_{C,\Omc}}$ cannot happen (even though $(d_{\exists R},d_{D'}) \in R^{\Imc_{C,\Omc}}$ is possible).
 %
\end{itemize}
\end{proof}
By Claims~\ref{claim:simulationconcept} and~\ref{claim:simulationrole}, $Z'$ is a simulation from $(\Imc'_{C,\Omc},d_C)$ to $(\Imc,d)$. 
Finally, by Claim~\ref{claim:sim}, $Z'$ is in fact also a simulation from $(\Imc_{C,\Omc},d_C)$ to $(\Imc,d)$.
\end{proof}

We have just shown Lemma~\ref{lem:lemmacan}, so now we are ready to prove Theorem~\ref{thm:canmod}.


\thmcanmod*
\begin{proof}
Let $\Imc_{D,\Omc}$ be as in Definition~\ref{def:can-model}.
Assume w.l.o.g. that \Omc is in named form~(Lemma~\ref{lem:namedform}).
We start with Point~1 assuming Point~2 holds. 
Suppose $d_{D}\in C^{\Imc_{D,\Omc}}$. Since,
by Point 2, $\Imc_{D,\Omc}\models\Omc$, we can apply Lemma~\ref{lem:canonical}.
By Lemma~\ref{lem:canonical}, there is a simulation $Z:(\Imc_{C,\Omc},d_{C}) \mathop{\underline{\to}} (\Imc_{D,\Omc},d_D)$. Let \Imc be an arbitrary interpretation such that $\Imc\models\Omc$ and let $d$ be an arbitrary element of $\Delta^\Imc$ such that $d\in D^\Imc$. By Lemma~\ref{lem:canonical}, there is a simulation $Z':(\Imc_{D,\Omc},d_{D}) \mathop{\underline{\to}} (\Imc,d)$.
Since simulations are transitive, there 
is a simulation from
$(\Imc_{C,\Omc},d_{C})$ to $(\Imc,d)$. 
Then, by Lemma~\ref{lem:canonical} again,
$d\in C^\Imc$. Since $d$ was an arbitrary element of $\Delta^\Imc$ with $d\in D^\Imc$, this holds for all such elements.
In other words, $D^\Imc\subseteq C^\Imc$.
As \Imc was an arbitrary interpretation satisfying \Omc, we have that $\Omc\models D\sqsubseteq C$. 

Now, suppose $\Omc\models D\sqsubseteq C$.
Then, for all interpretations \Imc satisfying \Omc, we have that $\Imc\models D\sqsubseteq C$. In particular, by Point~2,
$\Imc_{D,\Omc}\models\Omc$. So $\Imc_{D,\Omc}\models D\sqsubseteq C$.
Also, $(\Imc_{D,\Omc},d_D)$ satisfies $D$. That is, $d_{D}\in D^{\Imc_{D,\Omc}}$. As $\Imc_{D,\Omc}\models D\sqsubseteq C$, we have that $d_{D}\in C^{\Imc_{D,\Omc}}$. 
 This finishes the proof of Point~1.

\smallskip

 We now show Point~2. In the following, we assume that $A,B$ are arbitrary concept names in $\NC$
 and $S$ is an arbitrary role name in $\NR$ or its inverse (see Section~\ref{sec:additional} for additional preliminaries and notation). 
\begin{claim} If $\Omc\models A\sqsubseteq B$ then
  $\Imc_{D,\Omc}\models A\sqsubseteq B$.   
\end{claim}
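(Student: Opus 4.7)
The plan is to unfold the definition of the interpretation of concept names in $\Imc_{D,\Omc}$ from Definition~\ref{def:can-model} and exploit transitivity of ontological entailment. Concretely, I would take an arbitrary $d_C \in A^{\Imc_{D,\Omc}}$. By Definition~\ref{def:can-model}, we have $d_C \in A^{\Imc_{D,\Omc}}$ iff $\Omc \models C \sqsubseteq A$, where $C$ ranges either over subconcepts of $D$ (giving elements $d_C \in \Delta^{\Imc_D}$) or over concepts of the form $\exists R$, $\exists R^-$ (giving elements $d_{\exists R}, d_{\exists R^-} \in \Delta^{{\sf sig}(\Omc)}$). The definition treats both kinds uniformly as ``$d_C \in A^{\Imc_{D,\Omc}}$ iff $\Omc \models C \sqsubseteq A$'', which is what makes the argument work without a case split.

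Now, combining the two subsumptions $\Omc \models C \sqsubseteq A$ and the hypothesis $\Omc \models A \sqsubseteq B$ by transitivity of $\sqsubseteq$ relative to $\Omc$, we obtain $\Omc \models C \sqsubseteq B$. Applying Definition~\ref{def:can-model} once more in the reverse direction yields $d_C \in B^{\Imc_{D,\Omc}}$. Since $d_C$ was an arbitrary element of $A^{\Imc_{D,\Omc}}$, this gives $A^{\Imc_{D,\Omc}} \subseteq B^{\Imc_{D,\Omc}}$, which is exactly $\Imc_{D,\Omc} \models A \sqsubseteq B$.

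No genuine obstacle arises here: the result is essentially immediate from the definition of the canonical model, since the concept-name extensions in $\Imc_{D,\Omc}$ are defined precisely as the downward closure under $\Omc$-entailment. The only point worth double-checking is that every element of $\Delta^{\Imc_{D,\Omc}}$ is of the form $d_C$ for some concept $C$ for which ``$d_C \in A^{\Imc_{D,\Omc}}$ iff $\Omc \models C \sqsubseteq A$'' holds---which is built into Definition~\ref{def:can-model} by construction.
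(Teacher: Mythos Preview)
Your proposal is correct and follows essentially the same approach as the paper: both arguments boil down to the observation that $d_C \in A^{\Imc_{D,\Omc}}$ iff $\Omc\models C\sqsubseteq A$, and then apply transitivity of entailment under $\Omc$. The paper actually performs a case distinction on whether $d_C \in \Delta^{\Imc_D}$ or $d_C \in \Delta^{{\sf sig}(\Omc)}$ (and in the former case appeals to Proposition~\ref{prop:tree} to reduce $d_C \in A^{\Imc_D}$ to $\models C\sqsubseteq A$), but as you correctly note, Definition~\ref{def:can-model} defines $A^{\Imc_{D,\Omc}}$ uniformly over all $d_C \in \Delta^{\Imc_{D,\Omc}}$, so the case split is redundant and your streamlined version is cleaner.
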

\begin{proof}
        \textbf{Assume $\Omc\models A\sqsubseteq B$.} 
        We make a case distinction based 
on the elements in $\Delta^{\Imc_{D,\Omc}}:=       \Delta^{\Imc_D} \cup \Delta^{{\sf sig}(\Omc)}$. 
\begin{itemize}
\item $d_C\in \Delta^{\Imc_D}$: Assume $d_C\in A^{\Imc_{D,\Omc}}$. In this case, by definition of $\Imc_{D,\Omc}$, either  
 $d_{C}\in A^{\Imc_D}$  
or 
   $\Omc\models C\sqsubseteq A$.
In the former case, by Proposition~\ref{prop:tree}, 
$\models C\sqsubseteq A$. Then in both cases $\Omc\models C\sqsubseteq A$.
By assumption, $\Omc\models A\sqsubseteq B$, so $\Omc\models C\sqsubseteq B$.
Then, again by the definition
of $\Imc_{D,\Omc}$, we have that
$d_C\in B^{\Imc_{D,\Omc}}$.
Since $d_C$ was an arbitrary
element in $\Delta^{\Imc_D}$, this holds   for all  elements of this kind.
\item 
$d_C\in \Delta^{{\sf sig}(\Omc)}$: Assume $d_C\in A^{\Imc_{D,\Omc}}$. Then, by definition of $\Imc_{D,\Omc}$,    $\Omc\models C\sqsubseteq A$.
By assumption, $\Omc\models A\sqsubseteq B$, so $\Omc\models C\sqsubseteq B$.
Then, again by the definition
of $\Imc_{D,\Omc}$, we have that
$d_C\in B^{\Imc_{D,\Omc}}$. 
Since $d_C$ was an arbitrary
element in $\Delta^{{\sf sig}(\Omc)}$, this argument can be applied  for all  elements of this kind.
\end{itemize}
  Then,
    for all elements $d$ in $\Delta^{\Imc_{D,\Omc}}$,
    if $d\in A^{\Imc_{D,\Omc}}$ then $d\in B^{\Imc_{D,\Omc}}$. So
$\Imc_{D,\Omc}\models A\sqsubseteq B$.
%
%
\end{proof}
\begin{claim}\label{claim:canonicalnegation} If $\Omc\models A\sqsubseteq \neg B$ then
  $\Imc_{D,\Omc}\models A\sqsubseteq \neg B$.   
\end{claim}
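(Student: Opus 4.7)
The plan is to argue by contradiction, paralleling the previous claim. Suppose $\Omc \models A \sqsubseteq \neg B$ (equivalently, $A \sqcap B$ is unsatisfiable w.r.t.\ $\Omc$) and, for contradiction, that some $d \in \Delta^{\Imc_{D,\Omc}}$ belongs to both $A^{\Imc_{D,\Omc}}$ and $B^{\Imc_{D,\Omc}}$. Splitting on the partition $\Delta^{\Imc_{D,\Omc}} = \Delta^{\Imc_D} \cup \Delta^{\mathsf{sig}(\Omc)}$ from Definition~\ref{def:can-model}, I will extract in each case a concept $E$ indexing $d$ such that $\Omc \models E \sqsubseteq A \sqcap B \sqsubseteq \bot$, and then exhibit a model of $E$ satisfying $\Omc$, yielding the required contradiction.

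The first case, $d = d_{\exists S} \in \Delta^{\mathsf{sig}(\Omc)}$, is routine: Definition~\ref{def:can-model} directly gives $\Omc \models \exists S \sqsubseteq A$ and $\Omc \models \exists S \sqsubseteq B$, and $d_{\exists S}$ was included in $\Delta^{\mathsf{sig}(\Omc)}$ precisely because $\exists S$ is satisfiable w.r.t.\ $\Omc$, so $E = \exists S$ closes this case at once.

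The second case, $d = d_C \in \Delta^{\Imc_D}$, reduces by the same definition to the observation that $\Omc \models C \sqsubseteq \bot$ must be inconsistent with $C$ being satisfiable w.r.t.\ $\Omc$. The main piece of work, which I would isolate as a short sub-lemma, is to verify that \emph{every} subscript $C$ appearing in $\Imc_D$ labels a concept satisfiable w.r.t.\ $\Omc$. I would prove this by induction on the construction of $\Imc_D$ in the paragraphs preceding Definition~\ref{def:can-model}: fix any witness $\mathcal{J} \models \Omc$ with $e \in D^{\mathcal{J}}$ (which exists since $D$ is satisfiable w.r.t.\ $\Omc$), and show that each label in $\Imc_D$ is realised somewhere in $\mathcal{J}$---the base cases are immediate, the $\exists R.C'$ step uses an $R$-successor of the relevant witness (which realises $C' \sqcap \exists R^-$), and the $D_1 \sqcap D_2$ step reuses $e$ at the identified root together with the two inductive hypotheses. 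This yields a $\mathcal{J}$-realisation of $C$, contradicting $\Omc \models C \sqsubseteq \bot$ and finishing the argument.
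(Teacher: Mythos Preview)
Your proposal is correct and follows the same case split as the paper (on $\Delta^{\Imc_D}$ versus $\Delta^{\mathsf{sig}(\Omc)}$). The one substantive difference is that you isolate and prove, by induction on the construction of $\Imc_D$, the sub-lemma that every subscript $C$ occurring in $\Delta^{\Imc_D}$ is satisfiable w.r.t.\ $\Omc$; the paper handles this point more tersely, arguing via Proposition~\ref{prop:tree} that $d_C\in C^{\Imc_D}\subseteq C^{\Imc_{D,\Omc}}$ and then asserting that this yields satisfiability of $C$ w.r.t.\ $\Omc$---a step whose justification in the paper is somewhat elliptical, since $\Imc_{D,\Omc}\models\Omc$ is precisely what is being established. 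Your explicit induction, tracing witnesses through a fixed model $\mathcal{J}\models\Omc$ of $D$, is a cleaner way to close this gap and avoids any appearance of circularity.
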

\begin{proof}
        \textbf{Assume $\Omc\models A\sqsubseteq \neg B$.} 
        We make a case distinction based 
on the elements in $\Delta^{\Imc_{D,\Omc}}:=       \Delta^{\Imc_D} \cup \Delta^{{\sf sig}(\Omc)}$. 
\begin{itemize}
\item $d_C\in \Delta^{\Imc_D}$: Assume $d_C\in A^{\Imc_{D,\Omc}}$. In this case, by definition of $\Imc_{D,\Omc}$, either 
 $d_{C}\in A^{\Imc_D}$  
or 
  $\Omc\models C\sqsubseteq A$.
In the former case, by Proposition~\ref{prop:tree}, 
$\models C\sqsubseteq A$. Then, in both cases $\Omc\models C\sqsubseteq A$.
By assumption, $\Omc\models A\sqsubseteq \neg B$, so $\Omc\models C\sqsubseteq \neg B$. By Proposition~\ref{prop:tree}, $d_C\in C^{\Imc_D}$ (since $\models C\sqsubseteq C$ trivially).
As $\Imc_D$ is a sub-interpretation of
$\Imc_{D,\Omc}$, $d_C\in C^{\Imc_{D,\Omc}}$. 
By assumption, \Omc is satisfiable.
Then, $C^{\Imc_{D,\Omc}}\neq\emptyset$, means that \Omc is satisfiable w.r.t.  $C$. 
So
$\Omc\not\models C\sqsubseteq  B$.
Then, again by the definition
of $\Imc_{D,\Omc}$, we have that
$d_C\notin B^{\Imc_{D,\Omc}}$. That is,
$d_C\in (\neg B)^{\Imc_{D,\Omc}}$.
Since $d_C$ was an arbitrary
element in $\Delta^{\Imc_D}$, this holds   for all  elements of this kind.
\item 
$d_C\in \Delta^{{\sf sig}(\Omc)}$: Assume $d_C\in A^{\Imc_{D,\Omc}}$. Then, by definition of $\Imc_{D,\Omc}$,    $\Omc\models C\sqsubseteq A$.
By assumption, $\Omc\models A\sqsubseteq \neg B$, so $\Omc\models C\sqsubseteq \neg B$.
By definition of $\Delta^{{\sf sig}(\Omc)}$, $C$ is satisfiable w.r.t \Omc.
Then, $\Omc\not\models C\sqsubseteq  B$ and by the definition
of $\Imc_{D,\Omc}$,
$d_C\in (\neg B)^{\Imc_{D,\Omc}}$. 
Since $d_C$ was an arbitrary
element in $\Delta^{{\sf sig}(\Omc)}$, this argument can be applied  for all  elements of this kind.
\end{itemize}
  Then,
    for all elements $d$ in $\Delta^{\Imc_{D,\Omc}}$,
    if $d\in A^{\Imc_{D,\Omc}}$ then $d\in (\neg B)^{\Imc_{D,\Omc}}$. So
$\Imc_{D,\Omc}\models A\sqsubseteq \neg B$.
%
%
\end{proof}


\begin{claim}
If $\Omc\models\exists S\sqsubseteq A$ then $\Imc_{D,\Omc}\models\exists S\sqsubseteq A$.
\end{claim}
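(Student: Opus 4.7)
The plan is to fix an arbitrary element $d$ of $(\exists S)^{\Imc_{D,\Omc}}$ and show that $d\in A^{\Imc_{D,\Omc}}$. Since every element of $\Delta^{\Imc_{D,\Omc}}$ is of the form $d_E$ for some concept $E$ (either a label appearing in the tree $\Imc_D$, or one of the two forms $\exists T$ or $\exists T^-$ coming from $\Delta^{{\sf sig}(\Omc)}$), the key intermediate claim I would establish is that whenever $d_E\in(\exists S)^{\Imc_{D,\Omc}}$ one already has $\Omc\models E\sqsubseteq\exists S$. Once this is in hand, the conclusion is immediate: combining with the hypothesis $\Omc\models\exists S\sqsubseteq A$ yields $\Omc\models E\sqsubseteq A$, and then Definition~\ref{def:can-model} gives $d_E\in A^{\Imc_{D,\Omc}}$, as required.

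To prove the intermediate claim, I would perform a case analysis on how the witnessing $S$-edge at $d_E$ arises in the canonical model. Writing $S=R$ or $S=R^-$ for some $R\in\NR$, Definition~\ref{def:can-model} adds $R$-edges in exactly three ways: (i) edges inherited from $R^{\Imc_D}$; (ii) outgoing edges $(d_C,d_{\exists R^-})$ added whenever $\Omc\models C\sqsubseteq\exists R$; and (iii) incoming edges $(d_{\exists R},d_C)$ added whenever $\Omc\models C\sqsubseteq\exists R^-$. Type-(i) witnesses put $d_E\in(\exists S)^{\Imc_D}$, and Proposition~\ref{prop:tree} then yields $\models E\sqsubseteq\exists S$ and hence $\Omc\models E\sqsubseteq\exists S$. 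For witnesses of type (ii) or (iii), the desired subsumption either falls out directly from the side-condition attached to the edge, or collapses to the trivial subsumption $\exists S\sqsubseteq\exists S$ in the case that $d_E$ is one of the special elements $d_{\exists R}$ or $d_{\exists R^-}$, for which $E$ is literally $\exists S$.

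The main obstacle is not conceptual but organizational: one must carefully match each of the six subcases (three edge types, times whether $S=R$ or $S=R^-$, which flips the roles of source and target) to the correct clause of Definition~\ref{def:can-model}, and track which subcase derives the intermediate subsumption from Proposition~\ref{prop:tree}, which from the side-condition of the relevant clause, and which collapses to the trivial subsumption. Once this routine tabulation is spelled out, the intermediate claim, and hence the claim of the lemma, follow.
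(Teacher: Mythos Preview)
Your proposal is correct and takes essentially the same approach as the paper: establish the intermediate fact $\Omc\models E\sqsubseteq\exists S$ for the element $d_E\in(\exists S)^{\Imc_{D,\Omc}}$ by case analysis on how the witnessing edge arises, then conclude $d_E\in A^{\Imc_{D,\Omc}}$ via the definition of $A^{\Imc_{D,\Omc}}$. The paper organizes the case split first by whether $d_E\in\Delta^{\Imc_D}$ or $d_E\in\Delta^{{\sf sig}(\Omc)}$ and only then by edge type, whereas you split directly on the three edge clauses of Definition~\ref{def:can-model}; this is a purely cosmetic difference.
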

\begin{proof}
    \textbf{Assume $\Omc\models\exists S\sqsubseteq A$.} We make a case distinction based 
on the elements in $\Delta^{\Imc_{D,\Omc}}:=       \Delta^{\Imc_D} \cup \Delta^{{\sf sig}(\Omc)}$. 
\begin{itemize}
\item $d_C\in \Delta^{\Imc_D}$: Assume $d_C\in (\exists S)^{\Imc_{D,\Omc}}$. In this case, by definition of $\Imc_{D,\Omc}$, 
either 
there is $d_{C'}\in \Delta^{\Imc_D}$ such that
$(d_{C},d_{C'})\in S^{\Imc_{D}}$   
or 
 there is $d_{\exists \overline{S}}\in  \Delta^{{\sf sig}(\Omc)}$ and $\Omc\models C\sqsubseteq\exists S$.
In the former case, $(d_{C},d_{C'})\in R^{\Imc_{D}}$
implies $d_{C}\in (\exists S)^{\Imc_{D}}$
and
$\models C\sqsubseteq\exists S$, by Proposition~\ref{prop:tree}. Then in both cases $\Omc\models C\sqsubseteq\exists S$.
Since $d_C$ was an arbitrary
element in $\Delta^{\Imc_D}$, this holds   for all such elements.
\item 
$d_C\in \Delta^{{\sf sig}(\Omc)}$: Assume $d_C\in (\exists S)^{\Imc_{D,\Omc}}$. By definition of $\Imc_{D,\Omc}$, there are $2$ possibilities and in both of them  $\Omc\models C\sqsubseteq\exists S$. Since
by assumption  $\Omc\models\exists S\sqsubseteq A$, we have that  $\Omc\models C\sqsubseteq A$. Then,   by definition of $\Imc_{D,\Omc}$, 
$d_C\in A^{\Imc_{D,\Omc}}$.
Since $d_C$ was an arbitrary
element in $\Delta^{{\sf sig}(\Omc)}$, this holds  for all  elements of this kind.
\end{itemize}
  Then,
    for all elements $d$ in $\Delta^{\Imc_{D,\Omc}}$,
    if $d\in (\exists S)^{\Imc_{D,\Omc}}$ then $d\in A^{\Imc_{D,\Omc}}$. So
$\Imc_{D,\Omc}\models\exists S\sqsubseteq A$.
%
\end{proof}

\begin{claim}
If $\Omc\models A\sqsubseteq\exists S$ then $\Imc_{D,\Omc}\models A\sqsubseteq\exists S$.
\end{claim}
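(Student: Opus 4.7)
My plan is to mirror the structure of the preceding claims in the proof of Point~2 of Thm.~\ref{thm:canmod}. I would assume $\Omc\models A\sqsubseteq \exists S$ and then pick an arbitrary element $d_C$ of $\Delta^{\Imc_{D,\Omc}}$ with $d_C\in A^{\Imc_{D,\Omc}}$, case-splitting on whether $d_C\in \Delta^{\Imc_D}$ or $d_C\in \Delta^{{\sf sig}(\Omc)}$. In both cases the strategy is the same: first derive $\Omc\models C\sqsubseteq A$, then chain with the hypothesis to obtain $\Omc\models C\sqsubseteq \exists S$, and finally invoke the edge-defining clauses of Def.~\ref{def:can-model} to exhibit an $S$-successor of $d_C$ in $\Imc_{D,\Omc}$.

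For the case $d_C\in \Delta^{\Imc_D}$, I would argue exactly as in Claim~\ref{claim:canonicalnegation}: $d_C\in A^{\Imc_{D,\Omc}}$ means either $d_C\in A^{\Imc_D}$, which by Prop.~\ref{prop:tree} gives $\models C\sqsubseteq A$, or directly $\Omc\models C\sqsubseteq A$; in both cases $\Omc\models C\sqsubseteq A$, and together with the assumption we obtain $\Omc\models C\sqsubseteq \exists S$. I would then consult the third bullet of Def.~\ref{def:can-model}: when $S$ is a role name $R$, this entailment places $(d_C,d_{\exists R^-})$ in $R^{\Imc_{D,\Omc}}$; when $S$ is an inverse $R^-$, the symmetric clause places $(d_{\exists R},d_C)$ in $R^{\Imc_{D,\Omc}}$. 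In either case $d_C\in (\exists S)^{\Imc_{D,\Omc}}$. The case $d_C\in \Delta^{{\sf sig}(\Omc)}$ is analogous but simpler: there $C$ is of the form $\exists R$ or $\exists R^-$ and $d_C\in A^{\Imc_{D,\Omc}}$ directly entails $\Omc\models C\sqsubseteq A$, after which the same edge-clauses apply.

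The main obstacle I anticipate is verifying that the intended witness $d_{\exists \overline{S}}$ (or $d_{\exists R}$ in the symmetric case) actually lies in $\Delta^{{\sf sig}(\Omc)}$. This splits into two subtasks: (i) the role name underlying $S$ must belong to ${\sf sig}(\Omc)$, which I would justify from $\Omc$ being in named form together with the observation (already used implicitly via Claim~\ref{claim:aux}) that any non-trivial entailment $\Omc\models A\sqsubseteq\exists S$ is obtained through CIs of $\Omc$ whose right-hand side mentions $S$; and (ii) $\exists \overline{S}$ (equivalently $\exists S$, since these are co-satisfiable) must be satisfiable w.r.t.~$\Omc$. For (ii), I would note that $C$ is itself satisfiable w.r.t.~$\Omc$---in the first case because $d_C$ corresponds to a position reachable from the root in any model of $D$ with $\Omc$, and $D$ is satisfiable w.r.t.~$\Omc$ by hypothesis; in the second case directly from the definition of $\Delta^{{\sf sig}(\Omc)}$---and then $\Omc\models C\sqsubseteq \exists S$ yields $\exists S$ satisfiable w.r.t.~$\Omc$. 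Once this existence check is discharged, the argument above immediately concludes that $\Imc_{D,\Omc}\models A\sqsubseteq \exists S$.
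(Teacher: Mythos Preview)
Your proposal is correct and follows essentially the same route as the paper. The paper's proof is actually terser than yours: since the definition of $A^{\Imc_{D,\Omc}}$ reads uniformly as $\{d_C\in\Delta^{\Imc_{D,\Omc}}\mid \Omc\models C\sqsubseteq A\}$, the paper does not split into the two cases $d_C\in\Delta^{\Imc_D}$ versus $d_C\in\Delta^{{\sf sig}(\Omc)}$ but argues for an arbitrary $d_C\in\Delta^{\Imc_{D,\Omc}}$ directly; it then simply states that $\Omc\models C\sqsubseteq\exists S$ yields $(d_C,d_{\exists\overline{S}})\in S^{\Imc_{D,\Omc}}$ by definition. In particular, the paper glosses over precisely the two existence checks (that the role underlying $S$ lies in ${\sf sig}(\Omc)$ and that $\exists\overline{S}$ is satisfiable w.r.t.~$\Omc$) that you carefully flag and discharge, so your treatment is in fact more complete on this point.
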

\begin{proof}
\textbf{Assume $\Omc\models A\sqsubseteq\exists S$}.
We make a case distinction based 
on the elements in $\Delta^{\Imc_{D,\Omc}}:=       \Delta^{\Imc_D} \cup \Delta^{{\sf sig}(\Omc)}$.
\begin{itemize}
\item $d_C\in \Delta^{\Imc_{D,\Omc}}$: Assume $d_C\in A^{\Imc_{D,\Omc}}$. In this case, by definition of $\Imc_{D,\Omc}$, 
   $\Omc\models C\sqsubseteq A$. 
By assumption, $\Omc\models A\sqsubseteq\exists S$.  
So $\Omc\models C\sqsubseteq\exists S$.  
Then, by definition
of $\Imc_{D,\Omc}$, $(d_C,d_{\exists \overline{S}})\in S^{\Imc_{D,\Omc}}$.
Thus, $d_C\in (\exists S)^{\Imc_{D,\Omc}}$. 
 Since $d_C$ was an arbitrary
element in $\Delta^{\Imc_{D,\Omc}}$, this argument can be applied  for all  elements of this kind.
\end{itemize}
  Then,
    for all elements $d$ in $\Delta^{\Imc_{D,\Omc}}$,
    if $d\in A^{\Imc_{D,\Omc}}$ then $d\in (\exists S)^{\Imc_{D,\Omc}}$. So
$\Imc_{D,\Omc}\models A\sqsubseteq\exists S$.
%
%
\end{proof}



\begin{claim}\label{claim:heredisjoint}
If $\Omc\models\exists S\sqsubseteq \neg A$ then $\Imc_{D,\Omc}\models\exists S\sqsubseteq \neg A$.
\end{claim}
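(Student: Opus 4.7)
The plan is to closely mirror the structure of Claim~\ref{claim:canonicalnegation}, but now using the case analysis developed for the claim about $\Omc\models\exists S\sqsubseteq A$ to handle the left-hand side. I fix an arbitrary $d_C\in\Delta^{\Imc_{D,\Omc}}$, assume $d_C\in(\exists S)^{\Imc_{D,\Omc}}$, and split into the two cases $d_C\in\Delta^{\Imc_D}$ and $d_C\in\Delta^{{\sf sig}(\Omc)}$.

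In both cases, the first key step is to derive $\Omc\models C\sqsubseteq\exists S$ from $d_C\in(\exists S)^{\Imc_{D,\Omc}}$. For $d_C\in\Delta^{\Imc_D}$, the witnessing $S$-successor in $\Imc_{D,\Omc}$ is either already present in $\Imc_D$ (in which case Proposition~\ref{prop:tree} yields $\models C\sqsubseteq\exists S$, hence $\Omc\models C\sqsubseteq\exists S$), or it is an added edge to some $d_{\exists\overline{S}}\in\Delta^{{\sf sig}(\Omc)}$, which by Definition~\ref{def:can-model} happens only when $\Omc\models C\sqsubseteq\exists S$. For $d_C\in\Delta^{{\sf sig}(\Omc)}$, the only way $d_C$ has an $S$-successor is the second clause of Definition~\ref{def:can-model}, again giving $\Omc\models C\sqsubseteq\exists S$.

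Once $\Omc\models C\sqsubseteq\exists S$ is established, I combine it with the hypothesis $\Omc\models\exists S\sqsubseteq\neg A$ to obtain $\Omc\models C\sqsubseteq\neg A$. To conclude $d_C\in(\neg A)^{\Imc_{D,\Omc}}$, it suffices to show $\Omc\not\models C\sqsubseteq A$, since by Definition~\ref{def:can-model} we have $d_C\in A^{\Imc_{D,\Omc}}$ iff $\Omc\models C\sqsubseteq A$. This nonentailment is immediate provided $C$ is satisfiable w.r.t.~$\Omc$: if both $\Omc\models C\sqsubseteq A$ and $\Omc\models C\sqsubseteq\neg A$ held, then $C$ would be unsatisfiable w.r.t.~$\Omc$. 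For $d_C\in\Delta^{{\sf sig}(\Omc)}$, satisfiability of $C$ w.r.t.~$\Omc$ holds by construction of $\Delta^{{\sf sig}(\Omc)}$. For $d_C\in\Delta^{\Imc_D}$, it follows from satisfiability of $D$ w.r.t.~$\Omc$: any model $\Imc\models\Omc$ with a witness $d\in D^\Imc$ contains, by the tree shape of $\Imc_D$, an element realizing each tree-subconcept $C$ of $D$.

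The main obstacle is the satisfiability step for $d_C\in\Delta^{\Imc_D}$, as this is the only place where we genuinely rely on the global assumption that $D$ is satisfiable w.r.t.~$\Omc$. Everything else is a bookkeeping exercise that directly reuses the case analyses already carried out in the preceding claims.
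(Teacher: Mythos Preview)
Your proposal is correct and follows essentially the same approach as the paper: the same two-case split on $d_C\in\Delta^{\Imc_D}$ versus $d_C\in\Delta^{{\sf sig}(\Omc)}$, the same derivation of $\Omc\models C\sqsubseteq\exists S$ in each case, and the same use of satisfiability of $C$ w.r.t.\ $\Omc$ to conclude $\Omc\not\models C\sqsubseteq A$. The only cosmetic differences are that the paper handles the satisfiability step for $d_C\in\Delta^{\Imc_D}$ by a back-reference to the argument in Claim~\ref{claim:canonicalnegation} (which goes via $d_C\in C^{\Imc_{D,\Omc}}$), whereas you argue directly from a model of $D$; and that for $d_C\in\Delta^{{\sf sig}(\Omc)}$ the paper notes there are \emph{two} clauses of Definition~\ref{def:can-model} that can produce an $S$-successor (both yielding $\Omc\models C\sqsubseteq\exists S$), not only the second one as you state---but this does not affect the conclusion.
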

\begin{proof}
    \textbf{Assume $\Omc\models\exists S\sqsubseteq \neg A$.} We make a case distinction based 
on the elements in $\Delta^{\Imc_{D,\Omc}}:=       \Delta^{\Imc_D} \cup \Delta^{{\sf sig}(\Omc)}$. 
\begin{itemize}
\item $d_C\in \Delta^{\Imc_D}$: Assume $d_C\in (\exists S)^{\Imc_{D,\Omc}}$. In this case, by definition of $\Imc_{D,\Omc}$, either 
there is $d_{C'}\in \Delta^{\Imc_D}$ such that
$(d_{C},d_{C'})\in S^{\Imc_{D}}$   
or 
 there is $d_{\exists \overline{S}}\in  \Delta^{{\sf sig}(\Omc)}$ and $\Omc\models C\sqsubseteq\exists S$.
In the former case, $(d_{C},d_{C'})\in R^{\Imc_{D}}$
implies $d_{C}\in (\exists S)^{\Imc_{D}}$
and
$\models C\sqsubseteq\exists S$, by Proposition~\ref{prop:tree}. Then in both cases $\Omc\models C\sqsubseteq\exists S$.
By assumption, $\Omc\models\exists S\sqsubseteq \neg A$. So $\Omc\models C\sqsubseteq \neg A$. 
As argued in the first item of Claim~\ref{claim:canonicalnegation},
$\Omc$ is satisfiable w.r.t. $C$.
So $\Omc\not\models C\sqsubseteq A$.
Then, by definition of $\Imc_{D,\Omc}$,
we have that $d_C\in (\neg A)^{\Imc_{D,\Omc}}$.
Since $d_C$ was an arbitrary
element in $\Delta^{\Imc_D}$, this holds   for all such elements.
\item 
$d_C\in \Delta^{{\sf sig}(\Omc)}$: Assume $d_C\in (\exists S)^{\Imc_{D,\Omc}}$. By definition of $\Imc_{D,\Omc}$, there are two possibilities and in both of them  $\Omc\models C\sqsubseteq\exists S$. Since
by assumption  $\Omc\models\exists S\sqsubseteq \neg A$, we have that  $\Omc\models C\sqsubseteq \neg A$. By definition of $\Delta^{{\sf sig}(\Omc)}$, $C$ is satisfiable w.r.t. \Omc.  Then $\Omc\not\models C\sqsubseteq   A$. By      definition of $\Imc_{D,\Omc}$, 
$d_C\in (\neg A)^{\Imc_{D,\Omc}}$.
As $d_C$ was an arbitrary
element in $\Delta^{{\sf sig}(\Omc)}$, this holds  for all  elements of this kind.
\end{itemize}
  Then,
    for all elements $d$ in $\Delta^{\Imc_{D,\Omc}}$,
    if $d\in (\exists S)^{\Imc_{D,\Omc}}$ then $d\in (\neg A)^{\Imc_{D,\Omc}}$. So
$\Imc_{D,\Omc}\models\exists S\sqsubseteq \neg A$.
%
\end{proof}
The next claim follows from Claim~\ref{claim:heredisjoint}.
\begin{claim}
If $\Omc\models A\sqsubseteq \neg\exists S$ then $\Imc_{D,\Omc}\models A\sqsubseteq \neg\exists S$.
\end{claim}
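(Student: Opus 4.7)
The plan is to derive this claim directly from Claim~\ref{claim:heredisjoint} by exploiting the symmetry of disjointness. The key observation is that for any interpretation $\Jmc$ and any concepts $C_1, C_2$, the statement $\Jmc \models C_1 \sqsubseteq \neg C_2$ is equivalent to $C_1^\Jmc \cap C_2^\Jmc = \emptyset$, which in turn is equivalent to $\Jmc \models C_2 \sqsubseteq \neg C_1$. This holds both for arbitrary models and, by universal quantification, for entailments from $\Omc$.

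Concretely, suppose that $\Omc \models A \sqsubseteq \neg \exists S$. Then for every interpretation $\Imc'$ with $\Imc' \models \Omc$, we have $A^{\Imc'} \cap (\exists S)^{\Imc'} = \emptyset$, which immediately yields $\Imc' \models \exists S \sqsubseteq \neg A$. Hence $\Omc \models \exists S \sqsubseteq \neg A$. Applying Claim~\ref{claim:heredisjoint}, we obtain $\Imc_{D,\Omc} \models \exists S \sqsubseteq \neg A$. By the same symmetry argument applied to the single interpretation $\Imc_{D,\Omc}$, we conclude that $\Imc_{D,\Omc} \models A \sqsubseteq \neg \exists S$, as desired.

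There is no real obstacle here, since we have reduced the claim to Claim~\ref{claim:heredisjoint} via a purely semantic observation that does not depend on the structure of $\Imc_{D,\Omc}$ at all. The only thing to be careful about is that the symmetry of disjointness is applied uniformly: once at the level of ontology entailment (to invoke the hypothesis of Claim~\ref{claim:heredisjoint}) and once at the level of the single interpretation $\Imc_{D,\Omc}$ (to transform the conclusion back into the desired form).
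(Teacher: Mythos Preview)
Your proposal is correct and takes essentially the same approach as the paper, which simply states that the claim follows from Claim~\ref{claim:heredisjoint}. Your write-up makes explicit the symmetry-of-disjointness argument that the paper leaves implicit.
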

\end{proof}

 
\lembottomcharacterisation*
\begin{proof}
Let $\Sigma_{\mathsf{C}}'$ be the union of $\Sigma_{\mathsf{C}}$ and the concept names in $\mathsf{sig}(\ont)$ and let $\Sigma_{\mathsf{R}}'$ be the union of $\Sigma_{\mathsf{R}}$ with the role names in $\mathsf{sig}(\ont)$. Further, let $\Phi:=\{A,\exists R,\exists R^-, \mid A\in\Sigma_{\mathsf{C}}',R\in\Sigma_{\mathsf{R}}'\}$ be the finite set of basic concepts over the extended signature $(\Sigma_{\mathsf{C}}',\Sigma_{\mathsf{R}}')$. We build an interpretation $\Imc_{\ont}$ for $(\Sigma_{\mathsf{C}}',\Sigma_{\mathsf{R}}')$ whose domain $\Delta^{\Imc_{\ont}}$ consist of subsets $\Gamma\subseteq\Phi$ that are maximally consistent w.r.t. $\ont$. Since every DL-Lite ontology is satisfiable, the domain will be non-empty.
We set $A^{\Imc_{\ont}}:=\{\Gamma\in\Delta^{\Imc_{\ont}} \mid A\in\Gamma\}$ and $R^{\Imc_{\ont}}:=\{(\Gamma,\Gamma') \mid \exists R\in\Gamma,\exists R^-\in\Gamma'\}$. First, we will show a truth lemma for the maximally consistent sets. Then, we will use that to show that $\Imc_{\ont}\models\ont$ and that every $\lang{\exists,\sqcap,\top,\bot}$ concept that is satisfiable w.r.t. to $\ont$ is satisfied at some point in $\Imc_{\ont}$.

\medskip\par\noindent\emph{Claim (Truth Lemma): $B\in\Gamma$ iff $\Gamma\in B^{\Imc_{\ont}}$}

\medskip\par\noindent\emph{Proof of claim:}
This clearly holds if $B$ is a concept name. Else $B$ is of the form $\exists S$ with $S\in\Sigma_{\mathsf{R}}'$ or $\overline{S}\in\Sigma_{\mathsf{R}}'$.. From right to left, if $\Gamma\in(\exists S)^{\Imc_{\ont}}$, by definition it must be that $\exists S\in\Gamma$. In the other direction, if $\exists S\in\Gamma$ then $\exists S$ and hence also $\exists S^-$ are satisfiable w.r.t. $\ont$. This means that there is some maximally consistent set $\Gamma'$ with $\exists S^-\in\Gamma'$. But then by definition $(\Gamma,\Gamma')\in R^{\Imc_{\ont}}$ and hence $\Gamma\in(\exists S)^{\Imc_{\ont}}$.

\medskip

It follows from the Truth Lemma that $\Imc_{\ont}\models\ont$, since for every concept inclusion $B\sqsubseteq B'\in\ont$ with $B'$ a basic concept, we have $B^{\Imc_{\ont}}=\{\Gamma\in\Delta^{\Imc_{\ont}} \mid B\in\Gamma\}\subseteq\{\Gamma\in\Delta^{\Imc_{\ont}} \mid B'\in\Gamma\}=B'^{\Imc}$ because every $\Gamma\subseteq\Phi$ is maximally consistent w.r.t. $\ont$ and $B,B'\in\Gamma$. Similarly, for a concept inclusion $B\sqsubseteq\neg B'\in\ont$ we have $B^{\Imc_{\ont}}=\{\Gamma\in\Delta^{\Imc_{\ont}} \mid B\in\Gamma\}\subseteq\{\Gamma\in\Delta^{\Imc_{\ont}} \mid B'\not\in\Gamma\}=(\neg B')^{\Imc}$ because $B,B'\in\Phi$ and $\Gamma\subseteq\Phi$ is maximally consistent w.r.t. $\ont$. 

Finally, let $C$ be a $\lang{\exists,\sqcap,\top,\bot}[\Sigma_{\mathsf{C}},\Sigma_{\mathsf{R}}]$-concept that is satisfiable w.r.t. $\ont$. We show that there is some $\Gamma\in\Delta^{\Imc_{\ont}}$ with $\Gamma\in C^{\Imc_{\ont}}$. To show this, we will prove the stronger claim:

\medskip\par\noindent\emph{Claim:}  For every $D\in\lang{\exists,\sqcap,\top,\bot}[\Sigma_{\mathsf{C}},\Sigma_{\mathsf{R}}]$ and basic concept $B\in\Phi$, if $D\sqcap B$ is satisfiable w.r.t.~$\ont$ then $(D\sqcap B)^{\Imc_{\ont}}\neq\emptyset$.

\medskip\par\noindent\emph{Proof of claim:} We prove this by induction on $\mathsf{dp}(D)$.
If $\mathsf{dp}(D)=0$ then the set of all conjuncts of $D\sqcap B$ is a subset of some $\Gamma\in\Delta^{\Imc_{\ont}}$ and hence by the Truth Lemma $\Gamma\in(D\sqcap B)^{\Imc_{\ont}}$. 

For the inductive step, let $D=A_1\sqcap\ldots A_k\sqcap\exists R_1.C_1\sqcap\ldots\exists R_n.C_n$ such that $D\sqcap B$ is satisfiable w.r.t. $\ont$. It follows that $\{A_1,\ldots,A_k,\exists R_1,\ldots,\exists R_n,B\}$ is a consistent set w.r.t. $\ont$, and hence contained in some maximally consistent set $\Gamma\in\Delta^{\Imc_{\ont}}$. By the Truth Lemma, it follows that $\Gamma\in(A_1\sqcap\ldots A_k\sqcap B)^{\Imc_{\ont}}$. 
Moreover, since $D$ is satisfiable w.r.t. $\ont$, so are all concepts of the form $C_i\sqcap\exists R_i^-$ where $1\leq i\leq n$. Therefore, the inductive hypothesis tells us there are types $\Gamma_1,\ldots,\Gamma_n\in\Delta^{\Imc_{\ont}}$ such that $\Gamma_i\in (C_i\sqcap\exists R_i^-)^{\Imc_{\ont}}$ for all $1\leq i\leq n$. Finally, by the Truth Lemma $\exists R_i\in\Gamma_i$ and hence $(\Gamma,\Gamma_i)\in R_i^{\Imc_{\ont}}$ for all $1\leq i\leq n$, which shows that $\Gamma\in(\exists R_1.C_1\sqcap\ldots\exists R_n.C_n)^{\Imc_{\ont}}$ as well, so $\Gamma\in(D\sqcap B)^{\Imc_{\ont}}$.

\medskip
Thus, for any DL-Lite ontology $\ont$, if we set $E^+:=\emptyset$ and $E^-:=\{(\Imc_{\ont},\Gamma) \mid \Gamma\in\Delta^{\Imc_{\ont}}\}$ following the construction above, it follows that $(E^+,E^-)$ is a finite characterisation of $\bot$ w.r.t. $\lang{\exists,\sqcap,\top,\bot}[\Sigma_{\mathsf{C}},\Sigma_{\mathsf{R}}]$ under $\ont$.

\end{proof}
\begin{lemma}\label{lem:subsumption}
    Let $\Omc$ be a DL-Lite TBox in named form, and let $C,C'$ be  $\lang{\sqcap,\exists,\top}$-concept expressions that are satisfiable w.r.t. \Omc.
    Then the following are equivalent:
    \begin{enumerate}
        \item $\Omc\models C\sqsubseteq C'$ and 
        \item there is a simulation
        $Z:(\Imc_{C',\Omc},d_{C'}) \mathop{\underline{\to}} (\Imc_{C,\Omc},d_{C})$,
    \end{enumerate}
    where $\Imc_{C,\Omc}$ and $\Imc_{C',\Omc}$ are as in Definition~\ref{def:can-model}.
\end{lemma}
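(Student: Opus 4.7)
The plan is to derive this lemma essentially as a corollary of Theorem~\ref{thm:canmod} and Lemma~\ref{lem:lemmacan}, exploiting the fact that $\Imc_{C,\Omc}$ serves as a universal model for $C$ w.r.t.~$\Omc$, while simulations into an arbitrary model of $\Omc$ characterise which $\lang{\sqcap,\exists,\top}$-concepts hold at a point.

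For the direction from (1) to (2), I would first observe that $d_{C} \in C^{\Imc_{C,\Omc}}$ trivially, since $\Omc \models C \sqsubseteq C$ and Theorem~\ref{thm:canmod}(1) gives the equivalence between $d_C \in D^{\Imc_{C,\Omc}}$ and $\Omc \models C \sqsubseteq D$. Assuming $\Omc \models C \sqsubseteq C'$, the same theorem yields $d_C \in C'^{\Imc_{C,\Omc}}$. By Theorem~\ref{thm:canmod}(2), $\Imc_{C,\Omc} \models \Omc$, so we may apply Lemma~\ref{lem:lemmacan} (with $C'$ playing the role of $C$ in that lemma, and $(\Imc_{C,\Omc}, d_C)$ playing the role of $(\Imc, d)$) to obtain a simulation $Z : (\Imc_{C',\Omc}, d_{C'}) \mathop{\underline{\to}} (\Imc_{C,\Omc}, d_C)$, as required.

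For the converse, assume such a simulation exists. Since $\Imc_{C,\Omc} \models \Omc$ by Theorem~\ref{thm:canmod}(2), Lemma~\ref{lem:lemmacan} applied in the same way yields $d_C \in C'^{\Imc_{C,\Omc}}$. Invoking Theorem~\ref{thm:canmod}(1) once more, this is equivalent to $\Omc \models C \sqsubseteq C'$, which is what we wanted to show. No step here poses a real obstacle: the content of the lemma is entirely packaged in Theorem~\ref{thm:canmod} and Lemma~\ref{lem:lemmacan}, and the satisfiability assumption on $C$ and $C'$ is only needed to ensure that $\Imc_{C,\Omc}$ and $\Imc_{C',\Omc}$ are well-defined via Definition~\ref{def:can-model}.
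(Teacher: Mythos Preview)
Your proof is correct and follows essentially the same approach as the paper's own proof: both directions are obtained by combining Theorem~\ref{thm:canmod}(1) with Lemma~\ref{lem:lemmacan} (which is Lemma~\ref{lem:canonical}), using Theorem~\ref{thm:canmod}(2) to ensure that $\Imc_{C,\Omc}\models\Omc$ so that Lemma~\ref{lem:lemmacan} applies. If anything, you are slightly more explicit than the paper in spelling out why the hypothesis $\Imc\models\Omc$ of Lemma~\ref{lem:lemmacan} is met.
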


\begin{proof}
Suppose $\Omc\models C\sqsubseteq C'$. 
By Point~1 of Theorem~\ref{thm:canmod}, 
$d_C\in C'^{\Imc_{C,\Omc}}$.
By Lemma~\ref{lem:canonical}, 
there is a simulation $Z:(\Imc_{C',\Omc},d_{C'}) \mathop{\underline{\to}} (\Imc_{C,\Omc},d_{C})$.
Now, suppose there is a simulation $Z:(\Imc_{C',\Omc},d_{C'}) \mathop{\underline{\to}} (\Imc_{C,\Omc},d_{C})$. By Lemma~\ref{lem:canonical}, $d_C\in C'^{\Imc_{C,\Omc}}$. By Point~1 of Theorem~\ref{thm:canmod}, we have that $\Omc\models C\sqsubseteq C'$.
\end{proof}
\thmpositiveontology*
\begin{proof}
  Let $\ont$ be a DL-Lite ontology and
  let $C$ be a concept in $\lang{\exists,\sqcap,\top,\bot}$. If $C$ or \Omc is unsatisfiable then by Lemma~\ref{lem:bottomcharacterisationontology} it has a finite characterisation w.r.t.~$\lang{\exists,\sqcap,\top,\bot}$ (of exponential size).
  
  Now suppose that $C$ is satisfiable w.r.t.~$\Omc$ and compute the frontier $\Fmc_{\ont} (C):=\{C_1, \ldots, C_n\}$ of $C$ w.r.t.~$\Omc$ in polynomial time using Lemma~\ref{lem:frontiers-el-dllite}. It follows that $|\Fmc_{\ont}(C)|$ and each $|C_i|$ for $C_i\in\Fmc_{\ont}(C)$ are polynomial in $|C|$. Let $E^+ = \{(\Imc_{C,\Omc}, d_{C})\}$ and let
  $E^- = \{(\Imc_{C_i,\Omc}, d_{C_i})\mid 1\leq i \leq n\}$, where $(\Imc_{C_{i},\Omc},d_{C_{i}})$ is the canonical model of $C_{i}$ w.r.t. \Omc. Clearly $(E^+,E^-)$ can be computed from $C$ in polynomial time. We claim that $(E^+,E^-)$ is a finite characterisation for $C$
  w.r.t.~$\Omc$. 
  By Theorem~\ref{thm:canmod}, $E^+$ and $E^-$ consist of pointed interpretations satisfying $\Omc$.  
  
  By Theorem~\ref{thm:canmod}, 
  $C$ is satisfied by the positive example $(\Imc_{C,\Omc}, d_{C})$. 
  For each negative example $(\Imc_{C_i,\Omc}, d_{C_i})\in E^-$, we know that 
  $C_i$ belongs to the frontier $\mathcal{F}$,
  and therefore $\Omc\not\models C_i\sqsubseteq C$. Then, by Theorem~\ref{thm:canmod},
  $C$ is not satisfied by $(\Imc_{C_i,\Omc}, d_{C_i})$. That is, $C$ fits $(E^+,E^-)$.
  
  Finally, let $C'$ be any $\lang{\exists,\sqcap,\top}$-concept expression that fits $(E^+,E^-)$. We 
  show that $ C\equiv_\Omc C'$. 
  As $C'$ is satisfied by the positive example in
  $E^+$, it follows by Theorem~\ref{thm:canmod}
  $\Omc\models C\sqsubseteq C'$.
  
  For the other direction, we proceed by contradiction:
  suppose that
  $\Omc\not\models C'\sqsubseteq C$. By Lemma~\ref{lem:frontiers-el-dllite}, $\Omc\models C_i\sqsubseteq C'$
  for some $C_i\in\mathcal{F}$. 
  Therefore, by Theorem~\ref{thm:canmod},
  $d_{C_i}\in C'^{\Imc_{C_i,\Omc}}$, which means that
  $C'$ fails to fit $(E^+,E^-)$, since $(\Imc_{C_i,\Omc}, d_{C_i})\in E^-$, a contradiction.
\end{proof}

\thmcountingontologies*
\begin{proof}
Note that $A^\Imc=\emptyset$ for all \Imc with $\Imc\models\ont$.
We can then use the same argument  as in Theorem \ref{thm:countingbot}, which shows that $\lang{\geq,\bot}$ does not admit finite characterisations, substituting $A$ by $\bot$.
The argument for  $\lang{\forall,\exists,\sqcap}$ is similar, using the fact that $\lang{\forall,\exists,\sqcap,\bot}$ does not admit finite characterisations under the empty ontology (Theorem~\ref{thm:mainone}).
\end{proof}

\inverseontology*

\begin{proof}
    Observe that $\ont$ entails $A\sqsubseteq \exists R.A$ but does not entail $A\sqsubseteq \exists R^-.A$. 
    Every finite interpretation $(\Imc,d)$ satisfying the ontology $\ont$ that is a positive example
    for  $A$ must contain an infinite outgoing path and hence
    a ``lasso''
    \[ d=d_1\xrightarrow{R} d_2 \cdots \xrightarrow{R} d_m\xrightarrow{R} \cdots \xrightarrow{R} d_{m+k} \xrightarrow{R} d_m\]
    where $d_i\in A^\Imc$ holds for all $i\leq m+k$.
    Note that, in such an interpretation, $d$ also satisfies
    the concept
    expression 
    $C_n := A\sqcap [\exists R.]^n [\exists R^-.]^{n+1} A$ 
    for all $n\geq m$.
    This means that every positive example for $A$ is, for sufficiently large $n$, also a positive example of the concept
    expression $C_n$.
    Furthermore, clearly, every negative example of $A$ is also a 
    negative example of $C_n$. Thus, for any finite set of 
    positive and negative examples for $A$, we have that
    $C_n$ also fits then, if we choose $n$ large enough.
    However, $A$ and $C_n$ are not equivalent relative to $\ont$, because for the interpretation $\mathcal{J}_n$ consisting of a length $n+1$ $R$-path with a loop at the final point
    \[e_1\xrightarrow{~R~}\cdots\xrightarrow{~R~}e_{n}\xrightarrow{~R~}e_{n+1}{\rotatebox{90}{$\circlearrowleft$}}^{R}\]
    where $e_j\in A^{\mathcal{J}_n}$ for all $1\leq j\leq {n+1}$ we have $\mathcal{J}_n\models\ont$ and $e_1\in(A\sqcap\neg C_n)^{\mathcal{J}_n}$ because $e_1$ does not have any $R$-predecessors.
\end{proof}

\thmnegeldlliteh*
\begin{proof}
The proof of Theorem~\ref{thm:inverseontology} can be adapted by replacing $R^-$ by $S$ in the concepts in question.\end{proof}

\section{Proofs of Main Theorems}\label{app:maintheorems}


\mainone*
\begin{proof}
By Thm.~\ref{thm:knownresults} we know that $\{\exists,-,\sqcap,\sqcup,\top,\bot\}$ and $\{\forall,\exists,\sqcap,\sqcup\}$ admit finite characterisations, citing results known in the literature. Further, we show in Thm.~\ref{thm:ALEQcharacterisations} that $\lang{\forall,\exists,\geq,\sqcap,\top}$ admits finite characterisations. Together these prove point 1. For point 2, let $\Obf\subseteq\{\forall,\exists,\geq,-,\sqcap,\sqcup,\top,\bot,\neg\}$ such that it is not the case that $\{\geq,-,\sqcap\}\subseteq\Obf\subseteq\{\exists,\geq,-,\sqcap,\top\}$ and suppose that $\lang{\Obf}$ admits finite characterisations. We show that either $\Obf\subseteq\{\exists,-,\sqcap,\sqcup,\top,\bot\}$, $\Obf\subseteq\{\forall,\exists,\sqcap,\sqcup\}$ or $\Obf\subseteq\{\forall,\exists,\geq,\sqcap,\top\}$.

As $\{\exists,\sqcap\}\subseteq\Obf$, if $\neg\in\Obf$ as well then $\lang{\Obf}$ can express every $\mathcal{ALC}$ concept. However, it follows from Thm.~\ref{thm:knownresults} that $\mathcal{ALC}$ does not admit finite characterisations. So suppose that $\neg\not\in\Obf$, i.e. $\Obf\subseteq\{\forall,\exists,\geq,-,\sqcap,\sqcup,\top,\bot\}$. If $\Obf\not\subseteq\{\exists,-,\sqcap,\sqcup,\top,\bot\}$ it must be that $\forall\in\Obf$ or $\geq\in\Obf$ (or both). Thus if $\geq\not\in\Obf$ it must be that $\forall\in\Obf$ and hence $\{\forall,\exists,\sqcap\}\subseteq\Obf\subseteq\{\forall,\exists,-,\sqcap,\sqcup,\top,\bot\}$. However, by Thm.~\ref{thm:inversecap} we know that $\lang{\forall,\exists,-,\sqcap}$ does not admit finite characterisations and by Thm.~\ref{thm:knownresults} also $\lang{\forall,\exists,\sqcap,\bot}$ and $\lang{\forall,\exists,\sqcup,\top}$ do not admit finite characterisations. It follows that either $\Obf\subseteq\{\forall,\exists,\sqcap,\top\}\subseteq\{\forall,\exists,\geq,\sqcap,\top\}$ or $\Obf\subseteq\{\forall,\exists,\sqcap,\sqcup\}$.

Let us now consider the cases where $\geq\in\Obf$, i.e. $\{\exists,\geq,\sqcap\}\subseteq\Obf\subseteq\{\forall,\exists,\geq,-,\sqcap,\sqcup,\top,\bot\}$. By Thm.~\ref{thm:countingbot} we know that respectively $\lang{\geq,\bot},\lang{\geq,\sqcup}$ do not admit finite characterisations. It follows that in fact $\Obf\subseteq\{\forall,\exists,\geq,-,\sqcap,\top\}$. If $\forall\in\Obf$, then by Thm.~\ref{thm:inversecap} it cannot be that $-\in\Obf$ as well, so $\Obf\subseteq\{\forall,\exists,\geq,\sqcap,\top\}$. Else $\forall\not\in\Obf$ so $\{\exists,\geq,\sqcap\}\subseteq\Obf\subseteq\{\exists,\geq,-,\sqcap,\top\}$. Since we assumed that $\Obf$ is not between $\{\geq,-,\sqcap\}$ and $\{\exists,\geq,-,\sqcap,\top\}$, it follows that $-\not\in\Obf$ so $\Obf\subseteq\{\exists,\geq,\sqcap,\top\}\subseteq\{\forall,\exists,\geq,\sqcap,\top\}$.
\end{proof}

\maintwo*\begin{proof}
By Thm.~\ref{thm:knownresults} we know that $\lang{\exists,-,\sqcap,\top,\bot}$ admits polynomial-time computable characterisations. Now suppose that $\{\exists,\sqcap\}\subseteq\Obf\subseteq\{\forall,\exists,\geq,-,\sqcap,\sqcup,\top,\bot,\neg\}$ is such that $\lang{\Obf}$ admits polynomial characterisations. We show that $\Obf\subseteq\{\exists,-,\sqcap,\top,\bot\}$.

First of all, note that it cannot be that $\neg\in\Obf$ since, as noted before, it follows from Thm.~\ref{thm:knownresults} that $\lang{\exists,\sqcap,\neg}$ does not admit finite characterisations, so $\Obf\subseteq\{\forall,\exists,\geq,-,\sqcap,\sqcup,\top,\bot\}$. Further, it follows from Thm.~\ref{thm:ELQlowerbound} that it cannot be that $\geq\in\Obf$ and it follows from Corollary~\ref{cor:ALE_NPhardness} that it cannot be that $\forall\in\Obf$ (unless $P\ne NP$). Moreover, Thm.~\ref{thm:knownresults} states that $\lang{\exists,\sqcap,\sqcup}$ does not admit polynomial time computable characterisations. Thus, it also cannot be that $\sqcup\in\Obf$ and hence $\Obf\subseteq\{\exists,-,\sqcap,\top,\bot\}$ as desired. 
\end{proof}

\mainthree*


\begin{proof}
Theorem \ref{thm:EL-poly-char-DLLite} shows that $\lang{\exists,\sqcap,\top,\bot}$ admits finite characterisations under DL-Lite ontologies. In fact, it was shown there that for every DL-Lite ontology $\ont$ and concept $C\in\lang{\exists,\sqcap,\top,\bot}$ that is satisfiable w.r.t.~$\ont$, a finite characterisation for $C$ w.r.t.~$\mathcal{L}$ under $\ont$ can be computed in polynomial time.

Now suppose that $\lang{\Obf}$ admits finite characterisations under DL-Lite ontologies, where $\Obf\subseteq\{\forall,\exists,\geq,-,\sqcap,\top,\bot,\neg\}$. We show that $\Obf\subseteq\{\exists,\sqcap,\top,\bot\}$. It follows again from Thm.~\ref{thm:knownresults} that it cannot be that $\neg\in\Obf$. Further, Theorem \ref{thm:inverseontology} shows that $-\not\in\Obf$, while Thm.~\ref{thm:countingontologies} shows that $\geq\not\in\Obf$ and $\forall\not\in\Obf$. Hence, it follows that $\Obf\subseteq\{\exists,\sqcap,\top,\bot\}$.
\end{proof}

\section{Repercussions for Learnability}
\label{app:learning}

We will define what a membership query learning algorithm is for learning description logic concepts without a background ontology. 
For a general definition of membership query learning algorithm, we refer the reader to \cite{AngluinQueriesConcepts}.

\begin{definition}[Membership Query Learning Algorithms]
Let $\mathcal{L}$ be a concept language. An \emph{exact learning algorithm with membership queries} for $\mathcal{L}$ 
takes as input finite sets $\Sigma_\mathsf{C}$ and $\Sigma_\mathsf{R}$ 
and has to learn a hidden target concept $C\in\mathcal{L}[\Sigma_\mathsf{C},\Sigma_\mathsf{R}]$ by making \emph{membership queries} to a membership oracle. In a membership query, the learner presents an example $(\Imc,d)$ 
to the oracle, to which the oracle faithfully has to answer ``yes'' if $d\in C^{\Imc}$ and ``no'' otherwise. After asking finitely many such membership queries, the algorithm terminates and
outputs a concept expression $D\in\mathcal{L}[\Sigma_\mathsf{C},\Sigma_\mathsf{R}]$ such that $C\equiv D$.
We say that $\mathcal{L}$ is \emph{membership query learnable}
if there exists an exact learning algorithm with membership queries for $\mathcal{L}$ 
Moreover, it is \emph{polynomial time membership query learnable} if the algorithm runs in polynomial time.
\end{definition}

Recall our discussion on the differences between the ABox-as-examples and interpretations-as-examples setting (Remark~\ref{rem:exampledef}). This difference carries through in the definition of membership queries, and hence our definition of exact learning differs from the one in e.g.~\cite{DuplicateDBLP:conf/ijcai/FunkJL22}.

The following result says that (polynomial time computable) finite characterisations are a necessary condition for (polynomial time) membership query learnability.

\begin{theorem}\label{thm:learningneccond}
If a concept language $\mathcal{L}$ is (polynomial time) membership query learnable 
and $\mathcal{L}$ has a polynomial time model checking problem in combined complexity, then $\mathcal{L}$ admits (polynomial time computable) finite characterisations.
\end{theorem}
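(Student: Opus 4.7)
The plan is to use the learning algorithm as a teacher: simulate it with $C$ as the hidden target, answering its membership queries by polynomial-time model checking against $C$, and collect the resulting labelled examples as the finite characterisation.

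More precisely, let $\mathcal{A}$ be an exact learning algorithm with membership queries for $\mathcal{L}$, which I would assume to be deterministic without loss of generality (if $\mathcal{A}$ is randomised, fix a random tape for which $\mathcal{A}$ succeeds on all inputs). Given an input concept $C \in \mathcal{L}[\Sigma_\mathsf{C},\Sigma_\mathsf{R}]$, I would run $\mathcal{A}$ on input $(\Sigma_\mathsf{C},\Sigma_\mathsf{R})$; whenever $\mathcal{A}$ poses a membership query $(\Imc,d)$, I would invoke the assumed polynomial-time model checking procedure for $\mathcal{L}$ to decide whether $d\in C^\Imc$ and return this answer to $\mathcal{A}$. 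Since the oracle answers are faithful with respect to $C$, by correctness $\mathcal{A}$ eventually halts and outputs some $D\in\mathcal{L}[\Sigma_\mathsf{C},\Sigma_\mathsf{R}]$ with $D\equiv C$. Let $E^+$ consist of the queried examples answered ``yes'' and $E^-$ of those answered ``no''.

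The key claim is that $(E^+,E^-)$ is a finite characterisation of $C$ w.r.t.\ $\mathcal{L}[\Sigma_\mathsf{C},\Sigma_\mathsf{R}]$. Clearly $C$ fits $(E^+,E^-)$ by construction. For uniqueness, suppose some $C'\in\mathcal{L}[\Sigma_\mathsf{C},\Sigma_\mathsf{R}]$ also fits $(E^+,E^-)$. I would then consider the hypothetical execution of $\mathcal{A}$ in which the hidden target is $C'$ (with the oracle answering faithfully w.r.t.\ $C'$), and argue by induction on the sequence of queries that this hypothetical execution coincides step-for-step with the actual simulation: since $C'$ agrees with $C$ on every example in $E^+\cup E^-$, each successive oracle answer is the same, and since $\mathcal{A}$ is deterministic the next query it issues is identical. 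Hence $\mathcal{A}$ outputs the same $D$ in both executions, and by correctness on target $C'$ we obtain $C'\equiv D\equiv C$, as required.

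For the quantitative version, if $\mathcal{A}$ runs in polynomial time in $|C|$ and $|\Sigma_\mathsf{C}|+|\Sigma_\mathsf{R}|$, then it issues only polynomially many queries of polynomial size, and the simulation itself runs in polynomial time thanks to the polynomial-time model checking assumption; hence $(E^+,E^-)$ is computable in polynomial time. The only subtlety I expect is justifying the WLOG reduction to a deterministic $\mathcal{A}$ and cleanly phrasing the step-by-step induction showing that the two hypothetical runs of $\mathcal{A}$ (on targets $C$ and $C'$) produce the same trace; both are standard in learning-theoretic reductions but should be stated carefully.
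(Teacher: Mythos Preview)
Your proposal is correct and follows essentially the same approach as the paper: simulate the learner with $C$ as the hidden target using the polynomial-time model checker to answer queries, take the resulting trace as the characterisation, and argue that any fitting $C'$ would induce the identical run and hence be output-equivalent to $C$. Your version is in fact more carefully stated than the paper's (you make the step-by-step trace-coincidence argument and the determinism assumption explicit), but the underlying idea is the same.
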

\begin{proof}
Let $\mathcal{A}$ be the query learning algorithm using only membership queries to identify concept expressions $C\in\mathcal{L}$ in polynomial time. For any concept $C\in\mathcal{L}$, we will generate in polynomial time a finite characterisation of $C$ w.r.t. $\mathcal{L}$ as follows. 

Run the learning algorithm $\mathcal{A}$ on input $C$, and answer membership queries in polynomial time by using the model checking algorithm which runs in polynomial time. We record all the examples visited during the run, and label them as positive or negative depending on the answer of the membership query. The resulting finite set of labelled examples, the \emph{trace} of $\mathcal{A}$ on input $C$, must be a finite characterisation of $C$ w.r.t. $\mathcal{L}$. For suppose otherwise, then there must be some other non-equivalent concept $C'\in\mathcal{L}$ with $C\not\equiv C'$ that is consistent with all labelled examples seen so far. But that means that $\mathcal{A}$ will never output $C'$ as hypothesis, and hence will never be able to learn $C'$, contradicting our assumption that $\mathcal{A}$ is a membership query learning algorithm. Moreover, if $\mathcal{A}$ runs in polynomial then this whole construction of finite characterisations above is also in polynomial time. 
\end{proof}

We obtain a complete classification of which fragments of $\mathcal{ALCQI}$ containing $\sqcap$ and $\exists$ membership query learning, analogous to Main Theorem~\ref{thm:mainone}

\begin{corollary}
\mbox{Let $\{\exists,\sqcap\}\subseteq \Obf\subseteq\{\forall,\exists,\geq,-,\sqcap,\sqcup,\top,\bot,\neg\}$.} 
\begin{enumerate}
    \item If $\Obf$ is subsumed by $\{\exists,-,\sqcap,\top,\bot\}$, then $\lang{\Obf}$ is polynomial-time membership query learnable.
    \item  Otherwise, $\lang{\Obf}$ is not polynomial-time membership query learnable, assuming $\mathsf{P}\ne\mathsf{NP}$.
\end{enumerate}
\end{corollary}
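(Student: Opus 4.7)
The plan is to derive both items by combining the classification of polynomial-time computable characterisations (Thm.~\ref{thm:maintwo}) with the general connection between learnability and finite characterisations provided by Thm.~\ref{thm:learningneccond}.

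For item 1, I would first recall that $\lang{\exists,-,\sqcap,\top,\bot}$ is polynomial-time exactly learnable with membership queries, as established in \cite{BalderCQ} and mentioned in the paragraph on ``Repercussions for Exact Learnability''. Since the target concept in any sub-language $\lang{\Obf'}$ with $\Obf'\subseteq \{\exists,-,\sqcap,\top,\bot\}$ is in particular a concept of $\lang{\exists,-,\sqcap,\top,\bot}$, and since equivalence of the hypothesis with the target carries over, the same algorithm also witnesses polynomial-time membership query learnability of every such fragment (i.e., learnability is monotone in $\Obf$).

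For item 2, fix $\Obf$ with $\{\exists,\sqcap\}\subseteq \Obf$ and $\Obf\not\subseteq \{\exists,-,\sqcap,\top,\bot\}$, and suppose for contradiction that $\lang{\Obf}$ were polynomial-time membership query learnable. By Thm.~\ref{thm:modelchecking}, model checking for $\mathcal{ALCQI}$, and hence for every fragment $\lang{\Obf}$, is in polynomial time in combined complexity. Thus Thm.~\ref{thm:learningneccond} applies and yields that $\lang{\Obf}$ admits polynomial-time computable characterisations. However, this directly contradicts Main Thm.~\ref{thm:maintwo}, according to which under the assumption $\mathsf{P}\neq\mathsf{NP}$ the only fragments admitting polynomial-time computable characterisations are the subsets of $\{\exists,-,\sqcap,\top,\bot\}$.

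The whole argument is essentially a packaging step: the real work has already been done in Thm.~\ref{thm:maintwo} (which required the NP-hardness ingredients from Corollary~\ref{cor:ALE_NPhardness} and Thm.~\ref{thm:ELQlowerbound}) and in Thm.~\ref{thm:learningneccond}. Consequently no step here is a genuine obstacle; the only subtlety worth making explicit is that Thm.~\ref{thm:learningneccond} requires polynomial-time model checking, and this is supplied uniformly for all fragments by Thm.~\ref{thm:modelchecking}, so the reduction from learnability to the existence of polynomial-time computable characterisations goes through for each $\Obf$ in the stated range.
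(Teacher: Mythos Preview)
Your proposal is correct and follows essentially the same route as the paper: item~1 is obtained from the polynomial-time learning algorithm of \cite{BalderCQ} for $\lang{\exists,-,\sqcap,\top,\bot}$, and item~2 is derived by combining Thm.~\ref{thm:modelchecking} and Thm.~\ref{thm:learningneccond} to reduce polynomial-time learnability to polynomial-time computable characterisations, which is then ruled out by Thm.~\ref{thm:maintwo}. Your write-up is more explicit (in particular about the role of polynomial-time model checking), but the structure and ingredients are identical to the paper's two-line proof.
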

\begin{proof}
From right to left, a polynomial time membership query learning algorithm for an extension of $\lang{\exists,-,\sqcap,\top,\bot}$ was given in \cite{BalderCQ}. The other direction is immediate from Theorems \ref{thm:maintwo}, \ref{thm:learningneccond} and \ref{thm:modelchecking}.
\end{proof}

Conversely, the existence of finite characterisations guarantees the existence of an inefficient naive membership query learning algorithm. However, the mere existence of polynomial time computable characterisations does \emph{not} imply the existence of a polynomial time membership query learning algorithm.

\begin{theorem}\label{thm:learningsuffconf}
If a concept language $\mathcal{L}$ admits computable characterisations then $\mathcal{L}$ is (not necessarily efficiently) membership query learnable 
\end{theorem}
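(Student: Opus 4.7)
The plan is to build a membership query learning algorithm by enumeration. Assume we are given finite signatures $\Sigma_\mathsf{C}\subseteq \NC$ and $\Sigma_\mathsf{R}\subseteq \NR$. Since $\mathcal{L}[\Sigma_\mathsf{C},\Sigma_\mathsf{R}]$ is a recursively enumerable set of syntactic expressions, we can list its elements as $C_1, C_2, C_3, \ldots$. By hypothesis, there is an algorithm that, given any $C_i\in\mathcal{L}[\Sigma_\mathsf{C},\Sigma_\mathsf{R}]$, computes a finite characterisation $E_i=(E_i^+,E_i^-)$ for $C_i$ w.r.t.~$\mathcal{L}[\Sigma_\mathsf{C},\Sigma_\mathsf{R}]$.

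The algorithm proceeds as follows: for $i=1,2,3,\ldots$, compute $E_i$, then ask the oracle a membership query for every pointed interpretation occurring in $E_i^+\cup E_i^-$. If the oracle answers ``yes'' on every example in $E_i^+$ and ``no'' on every example in $E_i^-$ (i.e.,~the hidden target $C$ fits $E_i$), halt and output $C_i$. Otherwise, move on to $C_{i+1}$.

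For correctness, suppose the algorithm outputs $C_i$. Then $C$ fits $E_i$, so by the defining property of a finite characterisation, $C\equiv C_i$, hence the output is correct. For termination, observe that $C$ is syntactically identical to some $C_j$ in the enumeration (for instance, the first index at which a syntactic copy of the target concept appears). Hence $C\equiv C_j$ fits $E_j$, so the algorithm halts at or before stage $j$. Since each stage makes only finitely many membership queries, the total number of queries is finite.

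There is no real obstacle: the proof is simply the standard enumeration-based learnability argument made rigorous in our setting. The only subtlety worth emphasizing is that we must indeed enumerate \emph{all} syntactic expressions (not equivalence classes, which we cannot effectively list), but this is fine because the target concept occurs syntactically somewhere in the list, which is all that is needed for termination. Of course, the resulting algorithm is highly inefficient, which aligns with the fact that Theorem~\ref{thm:maintwo} separates polynomial-time learnability from mere effective learnability.
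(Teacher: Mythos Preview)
Your proposal is correct and follows essentially the same enumeration-based approach as the paper: enumerate all concepts, compute a finite characterisation for each, query the oracle on its examples, and output the first concept whose characterisation the target fits. Your write-up is in fact slightly more careful than the paper's, explicitly separating correctness (via the defining property of characterisations) from termination (via the target's syntactic occurrence in the enumeration).
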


\begin{proof}
Our naive learning algorithm works as follows. Fix some enumeration of all the concepts in $\mathcal{L}$, i.e. $\mathcal{L}=\{C_0,C_1,C_2,\ldots\}$,
and for each  $C_i$ compute a finite characterisation $(E^+,E^-)$ w.r.t.~$\mathcal{L}$ 
and ask a membership query for every example in $E^+\cup E^-$. If the answer to the membership queries fits the labeling, output the concept $C_i$. Otherwise, continue to the next concept in the enumeration.
\end{proof}

Thus, the following corollary is immediate from Theorems~\ref{thm:mainone}, \ref{thm:learningneccond} and \ref{thm:learningsuffconf}.

\begin{corollary}
\mbox{Let $\{\exists,\sqcap\}\subseteq \Obf\subseteq\{\forall,\exists,\geq,-,\sqcap,\sqcup,\top,\bot,\neg\}$.}
\begin{enumerate}
    \item If $\Obf$ is subsumed by $\{\exists,-,\sqcap,\sqcup,\top,\bot\}$, $\{\forall,\exists,\sqcup,\sqcap\}$ or $\{\forall,\exists,\geq,\sqcap,\top\}$ then $\lang{\Obf}$ is membership query learnable.
\item Otherwise $\lang{\Obf}$ is not membership query learnable, except possibly if
 $\{\geq,-,\sqcap\}\subseteq\Obf\subseteq\{\exists,\geq,-,\sqcap,\top\}$. 
\end{enumerate}
\end{corollary}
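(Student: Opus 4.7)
The statement is a corollary that packages together the existing classification of which fragments admit finite characterisations (Thm.~\ref{thm:mainone}) with the bridge results linking finite characterisations to exact learnability with membership queries (Thm.~\ref{thm:learningneccond} and Thm.~\ref{thm:learningsuffconf}). Since the corollary asserts no new combinatorial content beyond these three theorems, my plan is to derive each item by direct chaining, with no new construction required.

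For item 1, I would proceed as follows. Assume $\Obf$ is a subset of one of $\{\exists,-,\sqcap,\sqcup,\top,\bot\}$, $\{\forall,\exists,\sqcup,\sqcap\}$ or $\{\forall,\exists,\geq,\sqcap,\top\}$. Then Thm.~\ref{thm:mainone}(1) gives that $\lang{\Obf}$ admits finite characterisations. Since admitting finite characterisations is a monotone property (as noted in Section~\ref{sec:prelims}), this is preserved under taking subsets of the operator set. Now invoke Thm.~\ref{thm:learningsuffconf}: since $\lang{\Obf}$ admits (computable) finite characterisations, it is membership query learnable. The only subtle point I would flag is that Thm.~\ref{thm:learningsuffconf} as stated in Appendix~\ref{app:learning} requires \emph{computable} characterisations; this is fine because the constructions in the proof of Thm.~\ref{thm:mainone}(1) are all effective.

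For item 2, assume $\Obf$ is not contained in any of the three sets of item 1, and furthermore is not in the open range $\{\geq,-,\sqcap\}\subseteq\Obf\subseteq\{\exists,\geq,-,\sqcap,\top\}$. Then Thm.~\ref{thm:mainone}(2) yields that $\lang{\Obf}$ does not admit finite characterisations. By Thm.~\ref{thm:modelchecking}, model checking for $\mathcal{ALCQI}$ is in polynomial time in combined complexity, and hence so is model checking for the fragment $\lang{\Obf}\subseteq\mathcal{ALCQI}$. Now apply the contrapositive of Thm.~\ref{thm:learningneccond}: since $\lang{\Obf}$ has polynomial-time model checking but does not admit finite characterisations, it cannot be membership query learnable.

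There is essentially no technical obstacle; the only care needed is to keep the case analysis aligned with the "except possibly" clause of Thm.~\ref{thm:mainone}, which is inherited verbatim by the corollary. In particular, for $\Obf$ in the open range $\{\geq,-,\sqcap\}\subseteq\Obf\subseteq\{\exists,\geq,-,\sqcap,\top\}$ we make no claim, matching the formulation. I would write the proof as a short paragraph referencing the three theorems, observing that monotonicity of the relevant concept-language properties makes the case analysis on $\Obf$ exactly parallel to the one in the proof of Thm.~\ref{thm:mainone}.
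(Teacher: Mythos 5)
Your proposal is correct and follows exactly the paper's intended route: the paper states this corollary as immediate from Thm.~\ref{thm:mainone}, Thm.~\ref{thm:learningneccond} and Thm.~\ref{thm:learningsuffconf}, which is precisely the chaining you carry out (with Thm.~\ref{thm:modelchecking} supplying the model-checking hypothesis for the contrapositive of Thm.~\ref{thm:learningneccond}). Your extra remark that the positive case needs the characterisations of Thm.~\ref{thm:mainone}(1) to be effectively computable is a sensible and accurate caveat, and it is satisfied by the paper's constructions.
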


\end{document}